\DeclareMathAlphabet{\mathcal}{OMS}{cmsy}{m}{n}
\newcolumntype{C}[1]{>{\centering\arraybackslash}p{#1}}
\newcommand{\inference}[3]{\textbf{#1} & \frac{#2}{#3}}
\newcommand{\Nat}{\mathbb{N}}
\newcommand{\Int}{\mathbb{Z}}
\newcommand{\bigO}{\mathcal{O}}
\newcommand{\newsemantics}[2]{\newcommand{#1}{\mathsf{#2}}}
\newcommand{\newfunction}[2]{\newcommand{#1}{\mathop\mathrm{#2}}} 
\newcommand{\newcomponent}[2]{\newcommand{#1}{\mathsf{#2}}}
\newcommand{\newinstruction}[2]{\newcommand{#1}{\mathtt{#2}}}
\newcommand{\renewinstruction}[2]{\renewcommand{#1}{\mathtt{#2}}}
\newcommand{\sizeof}[1]{|#1|}
\newcommand{\of}[1]{(#1)}
\newcommand{\tuple}[1]{\langle#1\rangle}
\newcommand{\set}[1]{\{#1\}}
\newcommand{\Set}[1]{\left\{#1\right\}}
\newcommand{\cof}[1][]{\ifthenelse{\isempty{#1}}{}{\of{#1}}}
\renewcommand{\to}[1][]{\mathop{\xrightarrow{~#1~}}}
\renewcommand{\part}{\rightharpoonup}
\newcomponent{\word}{w}
\newcomponent{\sym}{b}
\newcommand{\newclass}[2]{\newcommand{#1}{\textsc{#2}}}
\newclass{\exptime}{ExpTime}
\newclass{\etime}{ETime}
\newclass{\nexptime}{NexpTime}
\newclass{\class}{Class}
\newclass{\pspace}{PSpace}
\newclass{\expspace}{ExpSpace}
\newclass{\dtime}{DTime}
\newclass{\dspace}{DSpace}
\newcommand{\TS}{\mathcal{T}}
\newcommand{\kstar}{^{*}}
\newcomponent{\conf}{c}
\newcomponent{\confset}{C}
\newcomponent{\lbl}{label}
\newcomponent{\lblset}{L}
\newcomponent{\state}{q}
\newcomponent{\stateset}{Q}
\newfunction{\post}{Post}
\newfunction{\pre}{Pre}
\newsemantics{\final}{final}
\newsemantics{\target}{target}
\newsemantics{\all}{all}
\newsemantics{\true}{true}
\newsemantics{\false}{false}
\newcommand{\program}{\mathcal{P}}
\newcomponent{\process}{Proc}
\newcomponent{\SC}{SC}
\newcomponent{\TSO}{TSO}
\newcomponent{\transition}{\delta}
\newcomponent{\xvar}{x}
\newcomponent{\varset}{Vars}
\newcomponent{\dval}{d}
\newcomponent{\valset}{Dom}
\newcomponent{\msg}{m}
\newcomponent{\instr}{instr}
\newcomponent{\instrs}{Instrs}
\newcommand{\xd}{\xvar, \dval}
\newcommand{\xdd}{\xvar, \dval, \dval'}
\newcomponent{\self}{self}
\newcomponent{\other}{other}
\newinstruction{\rd}{rd}
\renewinstruction{\wr}{wr}
\newinstruction{\arw}{arw}
\newinstruction{\nop}{skip}
\newinstruction{\mf}{mf} 
\newinstruction{\up}{up} 
\newcommand{\indexset}{\mathcal{I}}
\newcommand{\statemap}{\mathcal{S}}
\newcommand{\buffermap}{\mathcal{B}}
\newcommand{\memorymap}{\mathcal{M}}
\newcommand{\pid}{\iota}
\newcomponent{\view}{v}
\newcomponent{\viewset}{V}
\newcommand{\valuemap}{\mathcal{V}}
\newcommand{\fencemap}{\mathcal{F}}
\newcommand{\game}{\mathcal{G}}
\newcommand{\hgame}{\mathcal{H}}
\newcomponent{\play}{P}
\newcomponent{\bisim}{Z}
\newcommand{\channelsystem}{\mathcal{L}}
\newcomponent{\channelstate}{s}
\newcomponent{\channelstateset}{S}
\newcomponent{\channelset}{L}
\newcomponent{\channelmessage}{m}
\newcomponent{\channelmessageset}{M}
\newcomponent{\channeloperation}{op}
\newcomponent{\channeloperationset}{Op}
\newcommand{\atm}{\mathcal{A}}
\newcomponent{\letter}{\sigma}
\newcomponent{\alphabet}{\Sigma}
\newcomponent{\atmL}{L}
\newcomponent{\atmR}{R}
\newcomponent{\atmD}{D}
\newcomponent{\pos}{i}
\newcomponent{\posj}{j}
\newcommand{\blank}{\textbf{\textvisiblespace}}
\newcomponent{\tape}{\omega}
\newcommand{\xrd}{\xvar_\rd}
\newcommand{\xwr}{\xvar_\wr}
\newcomponent{\yvar}{y}
\newcomponent{\zvar}{z}
\newcomponent{\hstate}{h}
\newcomponent{\rstate}{r}
\begin{document}

\title[TSO Games]{TSO Games - On the decidability of safety games\texorpdfstring{\\}{}under the total store order semantics}

\author[S. Spengler]{Stephan Spengler\lmcsorcid{0009-0009-5722-8843}}[a]
\address{Uppsala University, Uppsala, Sweden}
\email{stephan.spengler@it.uu.se}

\author[S. Sil]{Sanchari Sil\lmcsorcid{0009-0003-4220-2532}}[b]
\address{Chennai Mathematical Institute, Chennai, India}
\email{sanchari@cmi.ac.in}

\begin{abstract}
We consider an extension of the classical Total Store Order (TSO) semantics by expanding it to turn-based 2-player safety games.
During her turn, a player can select any of the communicating processes and perform its next transition.
We consider different formulations of the safety game problem depending on whether one player or both of them transfer messages from the process buffers to the shared memory.
We give the complete decidability picture for all the possible alternatives.
\end{abstract}

\maketitle

\section{Introduction}
Most modern architectures, such as Intel x86 \cite{x86-swdmanual-1-3}, SPARC \cite{sparc9}, IBM's POWER \cite{power-isa-v31b}, and ARM \cite{arm-v7ar-refman}, implement several relaxations and optimisations that reduce the latency of memory accesses. This has the effect of breaking the Sequential Consistency (SC) assumption \cite{DBLP:journals/tc/Lamport79}. SC is the classical strong semantics for concurrent programs that interleaves the parallel executions of processes while maintaining the order in which instructions were issued. Programmers usually assume that the execution of programs follows the SC model. However, this is not true when we consider concurrent programs running on modern architectures. In fact, even simple programs such as mutual exclusion and producer-consumer protocols, that are correct under SC, may exhibit erroneous behaviours. This is mainly due to the relaxation of the execution order of the instructions. For instance, a standard relaxation is to allow the reordering of reads and writes of the same process if the reads have been issued after the writes and they concern different memory locations. This relaxation can be implemented using an unbounded perfect FIFO queue/buffer between each process and the memory. These buffers are used to store delayed writes. The corresponding model is called Total Store Ordering (TSO) and corresponds to the formalisation of SPARC and Intel x86 ~\cite{DBLP:conf/tphol/OwensSS09,DBLP:journals/cacm/SewellSONM10}.

\begin{figure}
\centering
\begin{tikzpicture}[xscale=1,yscale=-1]

    \node at (-2, 0) {$\process^1$:};
    \node at ( 4, 0) {$\process^2$:};

    \node (q1) at (0, 0) {$\state_1$};
    \node (q2) at (0, 1) {$\state_2$};
    \node (q3) at (0, 2) {$\state_3$};
    \node (r1) at (6, 0) {$\rstate_1$};
    \node (r2) at (6, 1) {$\rstate_2$};
    \node (r3) at (6, 2) {$\rstate_3$};

    \draw[->] (q1) -- node[right] {$\wr(\xvar, 1)$} (q2);
    \draw[->] (q2) -- node[right] {$\rd(\yvar, 0)$} (q3);
    \draw[->] (r1) -- node[right] {$\wr(\yvar, 1)$} (r2);
    \draw[->] (r2) -- node[right] {$\rd(\xvar, 0)$} (r3);

\end{tikzpicture}
\caption{A concurrent program $\program = \tuple{\process^1, \process^2}$ modelling (a simplified version of) the Dekker protocol.}
\label{fig:dekker}
\end{figure}

A process of a concurrent program is modelled as a finite-state transition system. Each transition is labelled with an instruction that describes how it interacts with the shared memory. For instance, the process can read a value from or write a value to a shared variable. The data domain of these instructions is assumed to be finite. More complex instructions like comparisons or arithmetic operations, and infinite data structures managed locally by the process are abstracted through the nondeterminism of the process. An example of a concurrent program is given in \autoref{fig:dekker}, which shows the Dekker protocol for mutual exclusion. Under SC semantics, it is not possible for both processes to simultaneously reach the critical sections $\state_3$ and $\rstate_3$, respectively.

In TSO, an unbounded buffer is associated with each process. When a process executes a write operation, this write is appended to the end of the buffer of that process. A pending write operation on the variable $\xvar$ at the head of a buffer can be deleted in a non-deterministic manner. This updates the value of the shared variable $\xvar$ in the memory. To perform a read operation on a variable $\xvar$, the process first checks its buffer for a pending write operation on the variable $\xvar$. If such a write exists, then the process reads the value written by the newest pending write operation on $\xvar$. Otherwise, the process fetches the value of the variable $\xvar$ from the memory.

Under TSO, both processes of the Dekker protocol in \autoref{fig:dekker} can reach the critical sections $\state_3$ and $\rstate_3$ simultaneously, thus violating mutual exclusion. This can happen since after executing $\wr(\xvar, 1)$, the operation only resides in the buffer of $\process^1$ and is invisible to $\process^2$. The converse holds for the write instruction of $\process^2$. Thus, both processes are able to read the initial value $0$ from the variable $\yvar$ or $\xvar$, respectively.

In general, verification of programs running under TSO is challenging due to the unboundedness of the buffers. In fact, the induced state space of a program under TSO may be infinite even if the program itself is a finite-state system.

The reachability problem for programs under TSO checks whether a given program state is reachable during program execution. It is also called safety problem, in case the target state is considered to be a bad state. It has been shown decidable using different alternative semantics for TSO (e.g. \cite{DBLP:conf/popl/AtigBBM10,DBLP:conf/tacas/AbdullaACLR12,DBLP:journals/lmcs/AbdullaABN18}). These alternative semantics, such as the load-buffer semantics in contrast to the traditional store-buffer semantics, provide different formalisations of TSO. While they differ in how they model specific aspects of memory behaviour, they are equivalent with respect to the reachability of program states.

Furthermore, it has been shown in \cite{DBLP:conf/popl/AtigBBM10} that lossy channel systems (see e.g., \cite{wsts2,wsts1,DBLP:conf/icalp/AbdullaJ94,DBLP:journals/ipl/Schnoebelen02}) can be simulated by programs running under TSO. This entails that the reachability problem for programs under TSO is non-primitive recursive and that the repeated reachability problem is undecidable. This is an immediate consequence of the fact that the reachability problem for lossy channel is non-primitive recursive \cite{DBLP:journals/ipl/Schnoebelen02} and that the repeated reachability problem is undecidable \cite{DBLP:conf/icalp/AbdullaJ94}. The termination problem for programs running under TSO has been shown to be decidable in \cite{DBLP:journals/siglog/Atig20} using the framework of well-structured transition systems \cite{DBLP:conf/icalp/Finkel87,wsts1,wsts2}.

The authors of \cite{DBLP:conf/esop/BouajjaniDM13,DBLP:conf/icalp/BouajjaniMM11} consider the robustness problem for programs running under TSO. This problem consists in checking whether, for any given TSO execution, there is an equivalent SC execution of the same program. Two executions are declared equivalent by the robustness criterion if they agree on (1) the order in which instructions are executed within the same process (i.e., program order), (2) the write instruction from which each read instruction fetches its value (i.e., read-from relation), and (3) the order in which write instruction on the same variable are committed to memory (i.e., store ordering). The problem of checking whether a program is robust has been shown to be \pspace-complete in \cite{DBLP:conf/esop/BouajjaniDM13}. A variant of the robustness problem which is called persistence, declares that two runs are equivalent if (1) they have the same program order and (2) all write instructions reach the memory in the same order. Checking the persistency of a program under TSO has been shown to be \pspace-complete in \cite{DBLP:conf/esop/AbdullaAP15}. Observe that the persistency and robustness problems are stronger than the safety problem (i.e., if a program is safe under SC and robust/persistent, then it is also safe under TSO).

Due to the non-determinism of the buffer updates, the buffers associated with each process under TSO appear to exhibit a lossy behaviour. Previously, games on lossy channel systems (and more general on monotonic systems) were studied in \cite{DBLP:journals/logcom/AbdullaBd08}. Unfortunately these results are not applicable / transferable to programs under TSO whose induced transition systems are not monotone \cite{DBLP:conf/popl/AtigBBM10}.

In this paper, we consider a natural continuation of the works on both the study of the decidability/complexity of the formal verification of programs under TSO and the study of games on concurrent systems. This is further motivated by the fact that formal games provide a framework to reason about a system's behaviour, which can be leveraged in control model checking, for example in controller synthesis problems. In particular, games can model the adversarial nature of certain problems where one component tries to achieve a goal while another tries to prevent it. This is especially relevant in the context of concurrent programs running under relaxed memory models like TSO, where the non-deterministic behaviour of memory can be seen as an adversary to the correctness of the program.

One specific problem that can be modeled by our approach is the controller synthesis problem (e.g. \cite{DBLP:conf/popl/PnueliR89,kupferman1997synthesis,DBLP:conf/mfcs/KupfermanV00}). In this context, the goal is to design a controller (player A) that ensures a system (player B) behaves correctly despite the uncertainties introduced by the memory model. The game framework allows us to formally define the objectives of the controller and the system, and to analyse whether a winning strategy exists for the controller. If such a strategy exists, it corresponds to a correct implementation of the controller that guarantees the desired behaviour of the system \cite{1316713,ARNOLD20037}.

Another problem that can be modeled by our approach is the verification of safety properties in concurrent programs. By framing the verification problem as a game, we can leverage game-theoretic techniques to determine whether a program can reach an undesirable state (a losing condition for player A). This approach provides a clear and formal method to analyse the safety of concurrent programs under TSO and other relaxed memory models.

Additionally, our approach lays the foundation for future research. For instance, by adding weights to the instructions/transitions, we can model fence insertion problems \cite{DBLP:conf/esop/AbdullaAP15}. This could be done by giving a player the choice to take a memory fence instruction or not, but we assign a penalty cost to each memory fence transition. An optimal, that is lowest-cost, winning strategy corresponds to the minimal amount of memory fences that need to be inserted into the program to ensure memory consistency. This is crucial for optimising the performance of concurrent programs under relaxed memory models.

In more detail, we consider (safety) games played on the transition systems induced by programs running under TSO. Given a program under TSO, we construct a game in which two players, A and B, take turns executing instructions. Player B aims to reach a given set of final configurations, while player A tries to prevent this, making it a reachability game from the perspective of player B. The turn structure determines which player executes the next instruction, but it does not specify how memory updates occur. To address this, we allow players to update memory by removing pending writes from the buffer between instruction executions.

In some variants, only player A (considered to be the \emph{good} player) has control over memory updates. These correspond to the problem of synthesizing an update controller that ensures correct execution despite an adversarial opponent. In other variants, only player B (the \emph{bad} player) has control over updates, modelling the problem of synthesising a process or program under a non-cooperative update mechanism. In both cases, the precise timing of updates remains unclear — whether they occur before, after, or both before and after a player's turn. Additionally, there may be scenarios where it is beneficial to allow both players to update memory, depending on the problem setting.

Considering all these possibilities results in 16 different TSO game variants. While not all are equally relevant in practice, we analyse them for completeness, as they may provide insights for future research and yet unknown applications. We divide these 16 games into four different groups, depending on their decidability results.
\begin{itemize}
    \item Group I (7 games) can be reduced to TSO games with 2-bounded buffers.
    \item Group II (1 game) can be reduced to TSO games with bounded buffers.
    \item Group III (7 games) can simulate perfect channel systems.
    \item Group IV (1 game) can be reduced to a finite game without buffers.
\end{itemize}
This classification is shown in \autoref{fig:tso-groups}. Of these four groups, only Group III is undecidable, the others each reduce to a finite game and are thus decidable. Coincidentally, the undecidable group contains all of the most interesting variants, where exactly one of the two players can update memory.

\begin{figure}
\newcommand{\gr}[1]{\cellcolor{gray!#1}}
\renewcommand{\arraystretch}{1.25}
\centering
\def\wdth{0.12\textwidth}
\begin{tabular}{p\wdth C\wdth C\wdth!{\vrule width -1pt} C\wdth!{\vrule width -1pt} C\wdth!{\vrule width -1pt} C\wdth}
    \toprule
    & & \multicolumn{4}{c}{Player A:} \\
    & & always & before & after & never \\
    \midrule
    \multirow{4}{0.15\textwidth}{Player B:}
    & always    & \gr{20} I (d) & \gr{20}       & \gr{20}           & \gr{40}       \\[-1pt]
    & before    & \gr{20}       & \gr{0} II (d) & \gr{20}           & \gr{40}       \\[-1pt]
    & after     & \gr{20}       & \gr{20}       & \gr{40} III (u)   & \gr{40}       \\[-1pt]
    & never     & \gr{40}       & \gr{40}       & \gr{40}           & \gr{60} IV (d)\\
    \bottomrule
\end{tabular}
\caption{Groups of TSO games, where players A and B are allowed to update the buffer: always, before their own move, after their own move, or never. The games in group I (light grey), II (white) and IV (dark grey) are decidable (d), the games in group III (medium grey) are undecidable (u).}
\label{fig:tso-groups}
\end{figure}

Finally, we establish the exact computational complexity for the decidable games. In fact, we show that the problem is \exptime-complete. We prove \exptime-hardness by a reduction from the problem of acceptance of a word by a polynomially bounded alternating Turing machine \cite{DBLP:conf/focs/ChandraS76,DBLP:journals/jacm/ChandraKS81}. This result holds even in the case for games played on a single-process concurrent program following SC semantics. To prove \exptime-membership, we show that it is possible to compute the winning regions for the players in exponential time. These results are surprising given the non-primitive recursive complexity of the reachability problem for programs under TSO and the undecidability of the repeated reachability problem.

This paper is an extended and revised version of the conference paper \cite{DBLP:journals/corr/abs-2310-00990}. We present the formal proof of the lower complexity bound for the decidable games. These now also include the newly defined SC games, which are games played on programs following SC semantics. We utilise the framework of bisimulations to lift complexity results from SC games to TSO games, and to streamline the \exptime-completeness proof of group IV. Furthermore, the appendix presents a formal proof of the undecidability result for the games in group III. We generally revised all sections of this paper. It contains clarifications, more elaborate explanations, and additional examples and figures.

\smallskip

\noindent
{\bf Related Works.}
In addition to the related work mentioned in the introduction on the decidability / complexity of the verification problems of programs running under TSO, there have been some works on parameterized verification of programs running under TSO. The problem consists in verifying a concurrent program regardless of the number of involved processes (which are identical finite-state systems). The parameterised reachability problem of programs running under TSO has been shown to be decidable in \cite{DBLP:conf/concur/AbdullaABN16,DBLP:journals/lmcs/AbdullaABN18}. While this problem for concurrent programs performing only read and writing operations (no atomic read-write instructions) is \pspace-complete \cite{DBLP:journals/pacmpl/AbdullaAR20}. This result has been recently extended to processes manipulating abstract data types over infinite domains \cite{DBLP:conf/tacas/AbdullaAFGHKS23}. Checking the robustness of a parameterised concurrent system is decidable and \expspace-hard \cite{DBLP:conf/esop/BouajjaniDM13}.

As far as we know this is the first work that considers the game problem for programs running under TSO. The proofs and techniques used in this paper are different from the ones used to prove decidability / complexity results for the verification of programs under TSO except the undecidability result which uses some ideas from the reduction from the reachability problem for lossy channel systems to its corresponding problem for programs under TSO \cite{DBLP:conf/popl/AtigBBM10}. However, our undecidability proof requires us to implement a protocol that detects lossiness of messages in order to turn the lossy channel system into a perfect one (which is the most intricate part of the proof).

\section{Preliminaries}

\subsection{Transition Systems}
A \emph{(labelled) transition system} is a triple $\tuple{ \confset, \lblset, \to }$, where $\confset$ is a set of \emph{configurations}, $\lblset$ is a set of \emph{labels}, and $\to \subseteq \confset \times \lblset \times \confset$ is a \emph{transition relation}.
We usually write $\conf_1 \to[\lbl] \conf_2$ if $\tuple{ \conf_1, \lbl, \conf_2} \in \to$.
Furthermore, we write $\conf_1 \to \conf_2$ if there exists some $\lbl$ such that $\conf_1 \to[\lbl] \conf_2$.
A \emph{run} $\pi$ of $\TS$ is a sequence of transitions $\conf_0 \to[\lbl_1] \conf_1 \to[\lbl_2] \conf_2 \dots \to[\lbl_n] \conf_n$.
It is also written as $\conf_0 \to[\pi] \conf_n$.
A configuration $\conf'$ is \emph{reachable} from a configuration $\conf$, if there exists a run from $\conf$ to $\conf'$.

For a configuration $\conf$, we define $\pre\of\conf := \set{ \conf' \mid \conf' \to \conf }$ and $\post\of\conf := \set{ \conf' \mid \conf \to \conf' }$.
We extend these notions to sets of configurations $\confset'$ with $\pre(\confset') := \bigcup_{\conf \in \confset'} \pre\of\conf$ and $\post(\confset') := \bigcup_{\conf \in \confset'} \post\of\conf$.

An \emph{unlabelled transition system} is a transition system without labels.
Formally, it is defined as a TS with a singleton label set.
In this case, we omit the labels.

\subsection{Alternating Turing Machines}
\label{sec:atm}
Let $\atm = \tuple{\alphabet, \stateset, \state_0, \state_F, \stateset_\exists, \stateset_\forall, \transition}$ be an alternating Turing Machine, where $\alphabet$ is a finite alphabet, $\stateset$ is a finite set of states partitioned into existential states $\stateset_\exists$ and universal states $\stateset_\forall$, $\state_0 \in \stateset$ is the \emph{initial state}, $\state_F \in \stateset_\forall$ is the \emph{accepting state}, and $\transition \subseteq \stateset \times \alphabet \times \stateset \times \alphabet \times \set{\atmL, \atmR}$ is a transition relation.
The alphabet contains the blank symbol \blank.
Let $\tuple{\state, \letter, \state', \letter', \atmD} \in \transition$ be a transition.
If machine $\atm$ is in state $\state$ and its head reads letter $\letter \in \alphabet$, then it replaces the contents of the current cell with the letter $\letter'$, moves the head in direction $\atmD$ ($\atmD=\atmL$ for left and $\atmD=\atmR$ for right) and changes control to state $\state'$.
We assume that $\state_F$ has no outgoing transitions.

A configuration of $\atm$ is a triple $\conf = \tuple{ \state, \pos, \tape}$, where $\state \in \stateset$, $\pos \in \Int$ and $\tape \in \alphabet^\Int$.
It represents the situation where $\atm$ is in state $\state$, the tape contains the (in both directions infinite) word $\tape$ and the head is at the $\pos$-th position of the tape.
A \emph{successor} of $\conf$ is a configuration $\conf'$ that can be obtained from $\conf$ by executing a transition as described above.

Let $\confset_0 := \set{ \tuple{ \state_F, \pos, \tape} \mid \pos \in \Int, \tape \in \alphabet^\Int }$ and for all $k > 0$:
$$
\confset_k := \set{ \conf = \tuple{ \state, \pos, \tape} \mid
(\state \in \stateset_\exists \land \post(\conf) \cap \confset_{k-1} \neq \emptyset) \lor
(\state \in \stateset_\forall \land \post(\conf) \subseteq \confset_{k-1}) }
$$
We call $\confset_\infty := \bigcup_{k=0}^\infty \confset_k$ the set of \emph{accepting configurations} of $\atm$.
Intuitively, a configuration $\conf = \tuple{ \state, \pos, \tape}$ is accepting if and only if $\state \in \stateset_\forall$ and all successors of $\conf$ are accepting or $\state \in \stateset_\exists$ and at least one successor of $\conf$ is accepting.

We say that $\atm$ accepts a word $\word \in \alphabet^n$, if $\tuple{\state_0, 1, \tape\of\word}$ is an accepting configuration, where $\tape\of\word$ contains $\word$ on positions $1$ through $n$ and the blank $\blank$ on all other positions.
The \textbf{word acceptance problem} of ATM is, given an alternating Turing machine $\atm$ and a word $\word \in \alphabet^n$, to decide whether $\atm$ accepts $\word$.
The word acceptance problem for ATMs that only use space polynomial in the length of the input word is \exptime-complete \cite{DBLP:conf/focs/ChandraS76,DBLP:journals/jacm/ChandraKS81}.

\subsection{Perfect Channel Systems}
Given a finite set of messages $\channelmessageset$, define the set of channel operations $\channeloperationset := \set{ !\channelmessage, ?\channelmessage \mid \channelmessage \in \channelmessageset} \cup \set\nop$.
A \emph{perfect channel system} (PCS) is a triple $\channelsystem = \tuple{ \channelstateset, \channelmessageset, \transition }$, where $\channelstateset$ is a finite set of states, $\channelmessageset$ is the set of messages, and $\transition \subseteq \channelstateset \times \channeloperationset \times \channelstateset$ is a transition relation.
We write $\channelstate_1 \to[\channeloperation] \channelstate_2$ if $\tuple{ \channelstate_1, \channeloperation, \channelstate_2 } \in \transition$.

Intuitively, a PCS models a finite state automaton that is augmented by a \emph{perfect} (i.e. non-lossy) FIFO buffer, called \emph{channel}.
During a \emph{send operation} $!\channelmessage$, the channel system appends $\channelmessage$ to the tail of the channel.
A transition $?\channelmessage$ is called \emph{receive operation}.
It is only enabled if the channel is not empty and $\channelmessage$ is its oldest message.
When the channel system performs this operation, it removes $\channelmessage$ from the head of the channel.
Lastly, a $\nop$ operation just changes the state, but does not modify the buffer.

The formal semantics of $\channelsystem$ are defined by a transition system $\TS_\channelsystem = \tuple{ \confset_\channelsystem, \lblset_\channelsystem, \to_\channelsystem }$, where $\confset_\channelsystem := \channelstateset \times \channelmessageset\kstar$, $\lblset_\channelsystem := \channeloperationset$ and the transition relation $\to_\channelsystem$ is the smallest relation given by:
\begin{itemize}
	\item If $\channelstate_1 \to[!\channelmessage] \channelstate_2$ and $\word \in \channelmessageset\kstar$, then $\tuple{ \channelstate_1, \word } \to[!\channelmessage]_\channelsystem \tuple{ \channelstate_2, \channelmessage \cdot \word }$.
	\item If $\channelstate_1 \to[?\channelmessage] \channelstate_2$ and $\word \in \channelmessageset\kstar$, then $\tuple{ \channelstate_1, \word \cdot \channelmessage } \to[?\channelmessage]_\channelsystem \tuple{ \channelstate_2, \word }$.
	\item If $\channelstate_1 \to[\nop] \channelstate_2$ and $\word \in \channelmessageset\kstar$, then $\tuple{ \channelstate_1, \word } \to[\nop]_\channelsystem \tuple{ \channelstate_2, \word }$.
\end{itemize}
A state $\channelstate_F \in \channelstateset$ is \emph{reachable} from a configuration $\conf_0 \in \confset_\channelsystem$, if there exists a configuration $\conf_F = \tuple{ \channelstate_F, \word_F }$ such that $\conf_F$ is reachable from $\conf_0$ in $\TS_\channelsystem$.
The \textbf{state reachability problem} of PCS is, given a perfect channel system $\channelsystem$, an initial configuration $\conf_0 \in \confset_\channelsystem$ and a final state $\channelstate_F \in \channelstateset$, to decide whether $\channelstate_F$ is reachable from $\conf_0$ in $\TS_\channelsystem$.
It is undecidable \cite{DBLP:journals/jacm/BrandZ83}.

\section{Concurrent Programs}

\subsection{Syntax}

Let $\valset$ be a finite data domain and $\varset$ be a finite set of shared variables over $\valset$.
We define the \emph{instruction set}
\begin{align*}
\instrs :=\ & \set{ \rd\of\xd, \wr\of\xd \mid \xvar \in \varset, \dval \in \valset } \\
	 \cup\ & \set{ \arw\of\xdd \mid \xvar \in \varset, \dval, \dval' \in \valset } \\
	 \cup\ & \set{ \nop, \mf }
\end{align*}
which are called \emph{read}, \emph{write}, \emph{atomic read-write}, \emph{skip} and \emph{memory fence}, respectively.
A process is represented by a finite state labelled transition system.
It is given as the triple $\process = \tuple{ \stateset, \instrs, \transition }$, where $\stateset$ is a finite set of \emph{local states} and $\transition \subseteq \stateset \times \instrs \times \stateset$ is the transition relation.
As with transition systems, we write $\state_1 \to[\instr] \state_2$ if $\tuple{ \state_1, \instr, \state_2} \in \transition$ and $\state_1 \to \state_2$ if there exists some $\instr$ such that $\state_1 \to[\instr] \state_2$.

A \emph{concurrent program} is a tuple of processes $\program = \tuple{ \process^\pid }_{\pid \in \indexset}$, where $\indexset$ is a finite set of process identifiers.
For each $\pid \in \indexset$ we have $\process^\pid = \tuple{ \stateset^\pid, \instrs, \transition^\pid }$.
A \emph{global} state of $\program$ is a function $\statemap: \indexset \to \bigcup_{\pid \in \indexset} \stateset^\pid$ that maps each process to its local state, i.e $\statemap(\pid) \in \stateset^\pid$.
\autoref{fig:concurrent-program} shows a simple example of a concurrent program $\program$ consisting of two processes $\process^1$ and $\process^2$.
\begin{figure}
\centering
\begin{tikzpicture}[xscale=2,yscale=-2]

    \node at (-1, 0) {$\process^1$:};
    \node at ( 2, 0) {$\process^2$:};

    \node (q1) at (0, 0) {$\state_1$};
    \node (q2) at (0, 1) {$\state_2$};
    \node (r1) at (3, 0) {$\rstate_1$};
    \node (r2) at (3, 1) {$\rstate_2$};

    \draw[->] (q1) -- node[right] {$\wr(\xvar, 1)$} (q2);
    \draw[->] (r1) -- node[right] {$\rd(\xvar, 1)$} (r2);

    \draw[->] (q2) to [out=45,in=-45,loop] node[right] {$\nop$} (q2);
    \draw[->] (r1) to [out=45,in=-45,loop] node[right] {$\nop$} (r1);
    \draw[->] (r2) to [out=45,in=-45,loop] node[right] {$\nop$} (r2);

\end{tikzpicture}
\caption{Concurrent program $\program = \tuple{\process^1, \process^2}$}
\label{fig:concurrent-program}
\end{figure}

\subsection{SC Semantics}

Under SC (Sequential Consistency) semantics, all processes of a concurrent program interact with the shared memory directly, and all operations appear to be executed in some sequential order that is consistent with the program order of each individual process.

Formally, an SC \emph{configuration} is a tuple $\conf = \tuple{ \statemap, \memorymap }$, where $\statemap: \indexset \to \bigcup_{\pid \in \indexset} \stateset^\pid$ is a global state of $\program$ and $\memorymap: \varset \to \valset$ represents the memory state of each shared variable.
Given a configuration $\conf$, we write $\statemap\of\conf$ and $\memorymap\of\conf$ for the global program state and memory state of $\conf$.

The semantics of a concurrent program running under SC is defined by a transition system $\TS^\SC_\program = \tuple{ \confset^\SC_\program, \lblset^\SC_\program, \to^\SC_\program }$, where $\confset^\SC_\program$ is the set of all SC configurations of $\program$ and $\lblset^\SC_\program := \set{ \instr_\pid \mid \instr \in \instrs, \pid \in \indexset }$ is the set of labels.
The transition relation $\to$ (we usually drop the indices ${}^\SC_\program$ whenever they are clear from context) is given by the following rules (see \autoref{fig:sc-semantics}):
\begin{itemize}
	\item \textbf{Read:} If $\statemap(\pid) = \state$, $\memorymap(\xvar) = \dval$ and $\state \to[\rd\of\xd] \state'$, then $\tuple{ \statemap, \memorymap } \to[\rd\of\xd_\pid] \tuple{ \statemap', \memorymap }$, where $\statemap'(\pid) = \state'$ and $\statemap'(\pid') = \statemap(\pid')$ for all $\pid' \neq \pid$.
	\item \textbf{Write:} If $\statemap(\pid) = \state$ and $\state \to[\wr\of\xd] \state'$, then $\tuple{ \statemap, \memorymap } \to[\wr\of\xd_\pid] \tuple{ \statemap', \memorymap' }$, where $\statemap'(\pid) = \state'$, $\statemap'(\pid') = \statemap(\pid')$ for all $\pid' \neq \pid$, $\memorymap'(\xvar) = \dval$ and $\memorymap'(\xvar') = \memorymap(\xvar')$ for all $\xvar' \neq \xvar$.
	\item \textbf{ARW:} If $\statemap(\pid) = \state$, $\memorymap(\xvar) = \dval$ and $\state \to[\arw\of\xdd] \state'$, then $\tuple{ \statemap, \memorymap }\to[\arw\of\xdd_\pid] \tuple{ \statemap', \memorymap' }$, where $\statemap'(\pid) = \state'$, $\statemap'(\pid') = \statemap(\pid')$ for all $\pid' \neq \pid$, $\memorymap'(\xvar) = \dval$ and $\memorymap'(\xvar') = \memorymap(\xvar')$ for all $\xvar' \neq \xvar$.
	\item \textbf{Skip:} If $\statemap(\pid) = \state$ and $\state \to[\nop] \state'$, then $\tuple{ \statemap, \memorymap } \to[\nop_\pid] \tuple{ \statemap', \memorymap }$, where $\statemap'(\pid) = \state'$ and $\statemap'(\pid') = \statemap(\pid')$ for all $\pid' \neq \pid$.
\end{itemize}
The memory fence operation $\mf$ is not used in programs running under SC semantics.
If needed, it can be formally defined as following the same semantics as $\nop$.
\begin{figure}
\centering
\begin{equation*}
\begin{array}{lc}

\inference{read}
	{\state \to[\rd\of\xd] \state' \qquad \statemap\of\pid = \state \qquad \memorymap\of\xvar = \dval}
	{\tuple{ \statemap, \memorymap} \to[\rd\of\xd_\pid] \tuple{ \statemap[\pid \leftarrow \state'], \memorymap}}
\bigskip\\
\inference{write}
	{\state \to[\wr\of\xd] \state' \qquad \statemap\of\pid = \state}
	{\tuple{ \statemap, \memorymap} \to[\wr\of\xd_\pid] \tuple{ \statemap[\pid \leftarrow \state'], \memorymap[\xvar \leftarrow \dval]}}
\bigskip\\
\inference{arw}
	{\state \to[\arw\of\xdd] \state' \qquad \statemap\of\pid = \state \qquad \memorymap\of\xvar = \dval}
	{\tuple{ \statemap, \memorymap} \to[\arw\of\xdd_\pid] \tuple{ \statemap[\pid \leftarrow \state'], \memorymap[\xvar \leftarrow \dval']}}
\bigskip\\
\inference{skip}
	{\state \to[\nop] \state' \qquad \statemap\of\pid = \state}
	{\tuple{ \statemap, \memorymap} \to[\nop_\pid] \tuple{ \statemap[\pid \leftarrow \state'], \memorymap}}

\end{array}
\end{equation*}
\caption{SC semantics}
\label{fig:sc-semantics}
\end{figure}

A global state $\statemap_F$ is \emph{reachable} from an initial configuration $\conf_0$, if there is a configuration $\conf_F$ with $\statemap(\conf_F) = \statemap_F$ such that $\conf_F$ is reachable from $\conf_0$ in $\TS^\SC_\program$. The \textbf{state reachability problem} of SC is, given a program $\program$, an initial configuration $\conf_0$ and a final global state $\statemap_F$, to decide whether $\statemap_F$ is reachable from $\conf_0$ in $\TS^\SC_\program$.

\subsection{TSO Semantics}

Under TSO semantics, the processes of a concurrent program do not interact with the shared memory directly, but indirectly through FIFO \emph{store buffers} instead.
When performing a \emph{write} instruction $\wr\of\xd$, the process adds a new message $\tuple\xd$ to the tail of its own store buffer.
A \emph{read} instruction $\rd\of\xd$ works differently depending on the current buffer content of the process.
If its buffer contains a write message on variable $\xvar$, the value $\dval$ must correspond to the value of the most recent such message.
Otherwise, the value is read directly from memory.
The \emph{atomic read-write} instruction $\arw\of\xdd$ is only enabled if the buffer of the process is empty and the value of $\xvar$ in the memory is $\dval$.
When executed, this instruction directly changes the value of $\xvar$ in the memory to $\dval'$.
A \emph{skip} instruction only changes the local state of the process.
The \emph{memory fence} instruction is disabled, i.e. it cannot be executed, unless the buffer of the process is empty.
Additionally, at any point during the execution, the process can \emph{update} the write message at the head of its buffer to the memory.
For example, if the oldest message in the buffer is $\tuple\xd$, it will be removed from the buffer and the memory value of variable $\xvar$ will be updated to contain the value $\dval$.
This happens in a non-deterministic manner.

Formally, we introduce a TSO \emph{configuration} as a tuple $\conf = \tuple{ \statemap, \buffermap, \memorymap }$, where:
\begin{itemize}
	\item $\statemap: \indexset \to \bigcup_{\pid \in \indexset} \stateset^\pid$ is a global state of $\program$.
	\item $\buffermap: \indexset \to (\varset \times \valset)\kstar$ represents the buffer state of each process.
	\item $\memorymap: \varset \to \valset$ represents the memory state of each shared variable.
\end{itemize}
Given a configuration $\conf$, we write $\statemap\of\conf$, $\buffermap\of\conf$ and $\memorymap\of\conf$ for the global program state, buffer state and memory state of $\conf$.
The semantics of a concurrent program running under TSO is defined by a transition system $\TS^\TSO_\program = \tuple{ \confset^\TSO_\program, \lblset^\TSO_\program, \to^\TSO_\program }$,
where $\confset^\TSO_\program$ is the set of all possible TSO configurations
and $\lblset^\TSO_\program := \set{ \instr_\pid \mid \instr \in \instrs, \pid \in \indexset } \cup \set{ \up_\iota \mid \pid \in \indexset }$ is the set of labels.
The transition relation $\to$ (i.e. $\to^\TSO_\program$) is given by the rules in \autoref{fig:tso-semantics}, where we use $\buffermap\of\pid|_{\set\xvar \times \valset}$ to denote the restriction of $\buffermap\of\pid$ to write messages on the variable $\xvar$.

\begin{figure}
\centering
\begin{equation*}
\begin{array}{lc}

\inference{read-own-write}
	{\state \to[\rd\of\xd] \state' \qquad \statemap\of\pid = \state \qquad \buffermap\of\pid|_{\set\xvar \times \valset} = \tuple\xd \cdot \word}
	{\tuple{ \statemap, \buffermap, \memorymap} \to[\rd\of\xd_\pid] \tuple{ \statemap[\pid \leftarrow \state'], \buffermap, \memorymap}}
\bigskip\\
\inference{read-from-memory}
	{\state \to[\rd\of\xd] \state' \qquad \statemap\of\pid = \state \qquad \buffermap\of\pid|_{\set\xvar \times \valset} = \varepsilon \qquad \memorymap\of\xvar = \dval}
	{\tuple{ \statemap, \buffermap, \memorymap} \to[\rd\of\xd_\pid] \tuple{ \statemap[\pid \leftarrow \state'], \buffermap, \memorymap}}
\bigskip\\
\inference{write}
	{\state \to[\wr\of\xd] \state' \qquad \statemap\of\pid = \state}
	{\tuple{ \statemap, \buffermap, \memorymap} \to[\wr\of\xd_\pid] \tuple{ \statemap[\pid \leftarrow \state'], \buffermap[\pid \leftarrow \tuple\xd \cdot \buffermap\of\pid], \memorymap}}
\bigskip\\
\inference{arw}
	{\state \to[\arw\of\xdd] \state' \qquad \statemap\of\pid = \state \qquad \buffermap\of\pid = \varepsilon \qquad \memorymap\of\xvar = \dval}
	{\tuple{ \statemap, \buffermap, \memorymap} \to[\arw\of\xdd_\pid] \tuple{ \statemap[\pid \leftarrow \state'], \buffermap, \memorymap[\xvar \leftarrow \dval']}}
\bigskip\\
\inference{skip}
	{\state \to[\nop] \state' \qquad \statemap\of\pid = \state}
	{\tuple{ \statemap, \buffermap, \memorymap} \to[\nop_\pid] \tuple{ \statemap[\pid \leftarrow \state'], \buffermap, \memorymap}}
\bigskip\\
\inference{memory-fence}
	{\state \to[\mf] \state' \qquad \statemap\of\pid = \state \qquad \buffermap\of\pid = \varepsilon}
	{\tuple{ \statemap, \buffermap, \memorymap} \to[\mf_\pid] \tuple{ \statemap[\pid \leftarrow \state'], \buffermap, \memorymap}}
\bigskip\\
\inference{update}
	{\buffermap\of\pid = \word \cdot \tuple\xd}
	{\tuple{ \statemap, \buffermap, \memorymap} \to[\up_\pid] \tuple{ \statemap, \buffermap[\pid \leftarrow \word], \memorymap[\xvar \leftarrow \dval]}}

\end{array}
\end{equation*}
\caption{TSO semantics}
\label{fig:tso-semantics}
\end{figure}

A global state $\statemap_F$ is \emph{reachable} from an initial configuration $\conf_0$, if there is a configuration $\conf_F$ with $\statemap(\conf_F) = \statemap_F$ such that $\conf_F$ is reachable from $\conf_0$ in $\TS_\program$.
The \textbf{state reachability problem} of TSO is, given a program $\program$, an initial configuration $\conf_0$ and a final global state $\statemap_F$, to decide whether $\statemap_F$ is reachable from $\conf_0$ in $\TS_\program$.

We define $\up\kstar$ to be the transitive closure of $\set{ \up_\iota \mid \pid \in \indexset }$, i.e. $\conf_1 \to[\up\kstar]_\program \conf_2$ if and only if $\conf_2$ can be obtained from $\conf_1$ by some amount of buffer updates.

\section{Games}

\subsection{Definitions}

A \emph{(safety) game} is an unlabelled transition sytem, in which two players A and B take turns making a \emph{move} in the transition system, i.e. changing the state of the game from one configuration to an adjacent one.
The set of configurations is partitioned into two sets $\confset_A$ and $\confset_B$ representing the configurations of player A and B, respectively.
Whenever the game is in a configuration that belongs to player A, it is her turn to decide which move to make.
This continues until the game reaches a configuration that belongs to player B, at which point the control is passed to her instead.
The goal of player B is to reach a given set of final configurations, while player A tries to avoid this.
Thus, it can also be seen as a \emph{reachability} game with respect to player B.

Formally, a game is defined as a tuple $\game = \tuple{ \confset, \confset_A, \confset_B, \to, \confset_F}$, where $\confset$ is the set of configurations, $\confset_A$ and $\confset_B$ form a partition of $\confset$, the transition relation is $\to \subseteq \confset \times \confset$, and $\confset_F \subseteq \confset_A$ is a set of \emph{final configurations}.
Furthermore, we require that $\game$ is deadlock-free, i.e. $\post(\conf) \neq \emptyset$ for all $\conf \in \confset$.

A \emph{play} $\play$ of $\game$ is an infinite sequence $\conf_0, \conf_1, \dots$ such that $\conf_i \to \conf_{i+1}$ for all $i \in \Nat$.
In the context of safety games, $\play$ is \emph{winning} for player B if there is $i \in \Nat$ such that $\conf_i \in \confset_F$.
Otherwise, it is \emph{winning} for player A.
This means that player B tries to force the play into $\confset_F$, while player A tries to avoid this.

A \emph{strategy} of player A is a partial function $\sigma_A: \confset\kstar \part \confset_B$, such that $\sigma_A(\conf_0, \dots, \conf_n)$ is defined if and only if $\conf_0, \dots, \conf_n$ is a prefix of a play, $\conf_n \in \confset_A$ and $\sigma_A(\conf_0, \dots, \conf_n) \in \post(\conf_n)$.
A strategy $\sigma_A$ is called \emph{positional}, if it only depends on $\conf_n$, i.e. if $\sigma_A(\conf_0, \dots, \conf_n) = \sigma_A(\conf_n)$ for all $(\conf_0, \dots, \conf_n)$ on which $\sigma_A$ is defined.
Thus, a positional strategy is usually given as a total function $\sigma_A: \confset_A \to \confset$.
Given two games $\game$ and $\game'$ and a strategy $\sigma_A$ for $\game$, an \emph{extension} of $\sigma_A$ to $\game'$ is a strategy $\sigma_A'$ of $\game'$ that is also an extension of $\sigma_A$ to the configuration set of $\game'$ in the mathematical sense, i.e. $\sigma_A(\conf_0, \dots, \conf_n) = \sigma_A'(\conf_0, \dots, \conf_n)$ for all $(\conf_0, \dots, \conf_n)$ on which $\sigma_A$ is defined.
Conversely, $\sigma_A$ is called the \emph{restriction} of $\sigma_A'$ to $\game$.
For player B, strategies are defined accordingly.

Two strategies $\sigma_A$ and $\sigma_B$ together with an initial configuration $\conf_0$ induce a play $\play(\conf_0, \sigma_A, \sigma_B) = \conf_0, \conf_1, \dots$ such that $\conf_{i+1} = \sigma_A(\conf_0, \dots, \conf_i)$ for all $\conf_i \in \confset_A$ and $\conf_{i+1} = \sigma_B(\conf_0, \dots, \conf_i)$ for all $\conf_i \in \confset_B$.
A strategy $\sigma_A$ is \emph{winning} from a configuration $\conf$, if for \emph{all} strategies $\sigma_B$ it holds that $\play(\sigma_A, \sigma_B, \conf)$ is a winning play for player A.
A configuration $\conf$ is \emph{winning} for player A if she has a strategy that is winning from $\conf$.
Equivalent notions exist for player B.
The \textbf{safety problem} for a game $\game$ and a configuration $\conf$ is to decide whether $\conf$ is winning for player A.

\begin{lem}[Proposition 2.21 in \cite{DBLP:conf/dagstuhl/Mazala01}]
\label{lem:positional}
    In safety games, every configuration is winning for exactly one player.
    A player with a winning strategy also has a positional winning strategy.
\end{lem}

Since we only consider safety games in this paper, strategies will be considered to be positional unless explicitly stated otherwise.
Furthermore, \autoref{lem:positional} implies the following:
\begin{itemize}
    \item $\conf_A \in \confset_A$ is winning for player A $\iff$ there is $\conf \in \post(\conf_A)$ that is winning for player A.
    \item $\conf_B \in \confset_B$ is winning for player A $\iff$ all $\conf \in \post(\conf_B)$ are winning for player A.
\end{itemize}

A \emph{finite game} is a game with a finite set of configurations.
It is rather intuitive that the safety problem is decidable for finite games, e.g. by applying a backward induction algorithm.
In particular, the winning configurations for each player are computable in linear time:
\begin{lem}[Chapter 2 in \cite{DBLP:conf/dagstuhl/2001automata}]
\label{lem:finite}
    Computing the set of winning configurations for a finite game with $n$ configurations and $m$ transitions is in $\bigO(n+m)$.
\end{lem}

\paragraph{Bisimulations}
A \emph{bisimulation} (also called \emph{zig-zag relation}) between two games $\game = \tuple{ \confset^\game, \confset^\game_A, \confset^\game_B, \to, \confset^\game_F}$ and $\hgame = \tuple{ \confset^\hgame, \confset^\hgame_A, \confset^\hgame_B, \to, \confset^\hgame_F}$ is a relation $\bisim \subseteq \confset^\game \times \confset^\hgame$ such that for all pairs of related configurations $(\conf_1, \conf_2) \in \bisim$ it holds that (cf. \autoref{fig:bisim}):
\begin{itemize}
    \item (\emph{zig}) for each transition $\conf_1 \to \conf_3$ there is a transition $\conf_2 \to \conf_4$ such that $(\conf_3, \conf_4) \in \bisim$.
    \item (\emph{zag}) for each transition $\conf_2 \to \conf_4$ there is a transition $\conf_1 \to \conf_3$ such that $(\conf_3, \conf_4) \in \bisim$.
    \item $\conf_1$ and $\conf_2$ are owned by the same player: $\conf_1 \in \confset^\game_A \iff \conf_2 \in \confset^\hgame_A$
    \item $\conf_1$ and $\conf_2$ agree on being a final configuration: $\conf_1 \in \confset^\game_F \iff \conf_2 \in \confset^\hgame_F$
\end{itemize}
We say that two related configurations $\conf_1$ and $\conf_2$ are \emph{bisimilar} and write $\conf_1 \approx \conf_2$.
We call $\game$ and $\hgame$ \emph{bisimilar} if there is a bisimulation between them.
It is common knowledge in game theory that winning strategies are preserved under bisimulations:

\begin{figure}
\centering
\begin{subfigure}[b]{0.4\linewidth}
\centering
\begin{tikzpicture}[xscale=3, yscale=-2]
    \node at (0,0) (c1) {$\conf_1$};
    \node at (1,0) (c2) {$\conf_3$};
    \node at (0,1) (c3) {$\conf_2$};
    \node at (1,1) (c4) {$\exists\ \conf_4$};

    \draw[->] (c1) -- (c2);
    \draw[->] (c3) -- (c4);
    \draw[- ] (c1) -- node[fill=white] {$\approx$} (c3);
    \draw[- ] (c2) -- node[fill=white] {$\approx$} (c4);
\end{tikzpicture}
\caption{zig property}
\end{subfigure}
\begin{subfigure}[b]{0.4\linewidth}
\centering
\begin{tikzpicture}[xscale=3, yscale=-2]
    \node at (0,0) (c1) {$\conf_1$};
    \node at (1,0) (c2) {$\exists\ \conf_3$};
    \node at (0,1) (c3) {$\conf_2$};
    \node at (1,1) (c4) {$\conf_4$};

    \draw[->] (c1) -- (c2);
    \draw[->] (c3) -- (c4);
    \draw[- ] (c1) -- node[fill=white] {$\approx$} (c3);
    \draw[- ] (c2) -- node[fill=white] {$\approx$} (c4);
\end{tikzpicture}
\caption{zag property}
\end{subfigure}
\caption{Configurations in a bisimulation}
\label{fig:bisim}
\end{figure}

\begin{lem}
\label{lem:bisim}
    Given two bisimilar configurations $\conf^\game_0 \in \confset^\game$ and $\conf^\hgame_0 \in \confset^\hgame$, it holds that $\conf^\game_0$ is winning for player A if and only if $\conf^\hgame_0$ is winning for player A.
\end{lem}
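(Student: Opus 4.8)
The plan is to show that the relation "is winning for player A" is invariant along bisimilar configurations by combining the two defining properties of a bisimulation (zig/zag together with ownership and finality preservation) with the positional determinacy of safety games from \autoref{lem:positional}. By symmetry of the bisimulation definition, it suffices to prove one direction: if $\conf^\game_0$ is winning for player A in $\game$, then so is $\conf^\hgame_0$ in $\hgame$; the converse follows by swapping the roles of $\game$ and $\hgame$ and using the zag property in place of zig. I would phrase the whole argument in terms of \emph{positional} strategies, which \autoref{lem:positional} guarantees exist whenever a player wins.

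First I would fix a positional winning strategy $\sigma_A: \confset^\game_A \to \confset^\game$ for player A from $\conf^\game_0$ in $\game$, and transfer it across the bisimulation to a strategy $\sigma_A^\hgame$ in $\hgame$. The natural construction is to play in $\hgame$ so as to maintain the invariant that the current $\hgame$-configuration is bisimilar to a "shadow" $\game$-configuration reached by playing $\sigma_A$ in $\game$. Concretely, whenever $\hgame$ is at a configuration $\conf_2 \in \confset^\hgame_A$ bisimilar to some $\conf_1 \in \confset^\game_A$, the strategy $\sigma_A(\conf_1)$ picks a successor $\conf_3$ of $\conf_1$; by the zig property applied to the pair $(\conf_1, \conf_2)$ there is a successor $\conf_4$ of $\conf_2$ with $\conf_3 \approx \conf_4$, and I set $\sigma_A^\hgame(\conf_2) := \conf_4$. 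The key point to justify carefully is that this transferred strategy is well-defined and genuinely plays into $\confset^\hgame_B$-or-$\confset^\hgame_F$ targets consistent with ownership — here the ownership-preservation clause $\conf_1 \in \confset^\game_A \iff \conf_2 \in \confset^\hgame_A$ ensures that A-configurations correspond to A-configurations, so the turn structure is respected.

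The core of the argument is an invariant maintained along any play $\play^\hgame = \conf_0^\hgame, \conf_1^\hgame, \dots$ consistent with $\sigma_A^\hgame$: I claim there is a parallel play $\play^\game = \conf_0^\game, \conf_1^\game, \dots$ in $\game$ consistent with $\sigma_A$ such that $\conf_i^\hgame \approx \conf_i^\game$ for all $i$. I would build $\play^\game$ inductively. At an A-move in $\hgame$, the definition of $\sigma_A^\hgame$ directly supplies the matching A-move in $\game$ given by $\sigma_A$. At a B-move in $\hgame$, player B chooses some $\conf_{i+1}^\hgame$; since $\conf_i^\hgame \approx \conf_i^\game$ and both are B-configurations (by ownership preservation), the \emph{zag} property yields a successor $\conf_{i+1}^\game$ of $\conf_i^\game$ with $\conf_{i+1}^\hgame \approx \conf_{i+1}^\game$, and this is a legal B-move in $\game$. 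Thus $\play^\game$ is a valid play following $\sigma_A$, and since $\sigma_A$ is winning, $\play^\game$ never meets $\confset^\game_F$. By the finality-preservation clause, $\conf_i^\hgame \in \confset^\hgame_F \iff \conf_i^\game \in \confset^\game_F$, so $\play^\hgame$ never meets $\confset^\hgame_F$ either, i.e. $\play^\hgame$ is winning for A. As $\play^\hgame$ was an arbitrary play consistent with $\sigma_A^\hgame$, this shows $\sigma_A^\hgame$ is winning from $\conf^\hgame_0$.

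The main obstacle I anticipate is bookkeeping rather than conceptual difficulty: the transferred strategy $\sigma_A^\hgame$ must be positional in $\hgame$, yet its definition above appears to depend on the chosen shadow configuration $\conf_1$ bisimilar to $\conf_2$. I would resolve this by fixing, once and for all, for every $\conf_2 \in \confset^\hgame$ reachable in the relevant plays a single witnessing partner under $\approx$ (e.g. via a choice function on the bisimulation relation), so that $\sigma_A^\hgame(\conf_2)$ depends only on $\conf_2$. A cleaner alternative, which avoids even defining $\sigma_A^\hgame$ explicitly, is to argue contrapositively using \autoref{lem:positional}: if $\conf^\hgame_0$ were \emph{not} winning for A then it is winning for B, and one transfers a positional winning B-strategy back to $\game$ by the symmetric construction, contradicting that $\conf^\game_0$ is winning for A. Either route works; I would favor the direct strategy-transfer argument with the fixed choice function, as it makes the zig/zag correspondence fully explicit.
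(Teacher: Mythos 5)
Your proposal is correct and follows essentially the same route as the paper: transfer the winning strategy across the bisimulation by maintaining a shadow play, using the zig property for player A's moves and the zag property for player B's responses, and conclude from the finality-preservation clause that the transferred play never hits $\confset^\hgame_F$. The only divergence is your concern about making the transferred strategy positional, which is unnecessary -- the paper simply builds a history-dependent (non-positional) strategy exactly as in your main construction and notes that \autoref{lem:positional} would yield a positional one if desired, so the choice-function bookkeeping can be dropped entirely.
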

\begin{proof}
    Suppose that $\conf^\game_0$ is winning for player A with (positional) strategy $\sigma^\game_A$ and consider the case $\conf^\game_0 \in \confset^\game_A$.
    Let $\sigma^\hgame_B$ be an arbitrary strategy for player A in $\hgame$.
    In the following, we will describe two (non-positional) winning strategies $\sigma^\hgame_A$ and $\sigma^\game_B$.

    For $n \in \Nat$, define recursively (see \autoref{fig:bisim-win}):
    \begin{itemize}
        \item $\conf^\game_{2n+1} := \sigma^\game_A(\conf^\game_{2n})$
        \item $\conf^\hgame_{2n+1}$ such that $\conf^\hgame_{2n} \to \conf^\hgame_{2n+1}$ and $\conf^\game_{2n+1} \approx \conf^\hgame_{2n+1}$, which exists by the zig property of $\bisim$.
        \item $\sigma^\hgame_A(\conf^\hgame_0, \dots, \conf^\hgame_{2n}) := \conf^\hgame_{2n+1}$
        \item $\conf^\hgame_{2n+2} := \sigma^\hgame_B(\conf^\hgame_{2n+1})$
        \item $\conf^\game_{2n+2}$ such that $\conf^\game_{2n+1} \to \conf^\game_{2n+2}$ and $\conf^\game_{2n+2} \approx \conf^\hgame_{2n+2}$, which exists by the zag property of $\bisim$.
        \item $\sigma^\game_B(\conf^\game_0, \dots, \conf^\game_{2n+1}) := \conf^\game_{2n+2}$
    \end{itemize}

    \begin{figure}
        \centering
        \begin{tikzpicture}[xscale=3, yscale=-2]
            \node at (0,0) (c1) {$\conf^\game_{2n}$};
            \node at (0,1) (c2) {$\conf^\hgame_{2n}$};
            \node at (1,0) (c3) {$\conf^\game_{2n+1}$};
            \node at (1,1) (c4) {$\exists\ \conf^\hgame_{2n+1}$};
            \node at (2,0) (c5) {$\exists\ \conf^\game_{2n+2}$};
            \node at (2,1) (c6) {$\conf^\hgame_{2n+2}$};
        
            \draw[- ] (c1) -- node[fill=white] {$\approx$} (c2);
            \draw[- ] (c3) -- node[fill=white] {$\approx$} (c4);
            \draw[- ] (c5) -- node[fill=white] {$\approx$} (c6);

            \draw[->] (c1) -- node[above] {$\sigma^\game_A$} (c3);
            \draw[->] (c3) -- node[above] {$\sigma^\game_B$} (c5);
            \draw[->] (c2) -- node[above] {$\sigma^\hgame_A$} (c4);
            \draw[->] (c4) -- node[above] {$\sigma^\hgame_B$} (c6);
        \end{tikzpicture}
        \caption{Configurations and strategies in \autoref{lem:bisim}}
        \label{fig:bisim-win}
    \end{figure}

    Since $\sigma^\game_A$ is a winning strategy for player A in $\game$, the sequence $\conf^\game_0, \conf^\game_1, \dots$ is a winning play for player A, i.e. it does not visit the set of final configurations $\confset^\game_F$.
    Thus, $\conf^\hgame_0, \conf^\hgame_1, \dots$ is also a winning play for player A, because for all $n \in \Nat$, $\conf^\game_n \approx \conf^\hgame_n$ and $\conf^\game_n \not\in \confset^\game_F$ implies $\conf^\hgame_n \not\in \confset^\hgame_F$.
    Since $\sigma^\hgame_B$ was chosen arbitrarily, this shows that $\sigma^\hgame_A$ is a winning strategy for player A in $\hgame$.
    Note that by \autoref{lem:positional}, player A could also choose a positional strategy instead.

    The case where $\conf^\game_0 \in \confset^\game_B$ is similar, but with the recursive definition shifted by one, i.e. $\conf^\game_{2n+1} := \sigma^\game_B(\conf^\game_{2n})$ and so on.
\end{proof}

\subsection{SC Games}

We want to define an \emph{SC game} as a safety game where the underlying transition system follows Sequential Consistency semantics.
While the set of game configurations, the transition relation of the game and the set of its final configurations could be taken directly from the transition system $\TS^\SC_\program$, it is not straightforward to decide how the configuration set should be partitioned into $\confset_A$ and $\confset_B$.
One idea that comes to mind would be to partition the \emph{local states} of each process.
But whose turn is it if $\process^1$ is in a state belonging to player A but $\process^2$ is in a state belonging to player B?
Which player is allowed to move first?
The standard approach to solve this problem is to consider the product of the concurrent processes instead, i.e. to partition the set of \emph{global states}.
Another common approach is to switch to turn-based games, where the players alternate in making a move in the game.

For this present work, we chose the latter approach, which means that the players take turns executing exactly one instruction from an arbitrary process.
This is because we can argue that turn-based games do not pose a significant restriction compared to games with state ownership, but instead can emulate most of their behaviour (see below).

In the following formal definition of an SC game, we will create two copies of each SC configuration, annotated with $A$ and $B$, respectively.
The transition relation of the game will be restricted to transitions between configurations of different players.
The use of two identical sets of configurations, one for each player, may seem unnecessary at first glance.
However, this distinction is crucial for the definition of TSO games in the next section, where the available instructions to execute differ between the two players A and B.
Furthermore, it allows us to describe turn-based games using the framework of games with configuration ownership, as introduced in the beginning of this section.

A program $\program = \tuple{\process^\pid}_{\pid \in \indexset}$ and a set of final local states $\stateset_F^\program \subseteq \stateset^\program$ induce a safety game $\game^\SC(\program,\stateset_F^\program) = \tuple{ \confset, \confset_A, \confset_B, \to, \confset_F }$ as follows.
The sets $\confset_A$ and $\confset_B$ are copies of the set $\confset^\program$ of TSO configurations, annotated by $A$ and $B$, respectively:
$$ \confset_A := \set{ \conf_A \mid \conf \in \confset^\SC_\program} \qquad \text{and} \qquad \confset_B := \set{ \conf_B \mid \conf \in \confset^\SC_\program}$$
The set of final configurations is defined as:
$$\confset_F := \set{ \tuple{ \statemap, \buffermap, \memorymap }_A \in \confset_A \mid \exists\: \pid \in \indexset: \statemap\of\pid \in \stateset_F^\program}$$
That is, it is the set of all configurations where at least one process is in a final state.
As described previously, the transition relation is defined by the program's instructions, in a way that the two players take turns alternatingly to execute these instructions:
For each transition $\conf \to[\instr_\pid] \conf'$ where $\conf, \conf' \in \confset^\SC_\program$, $\pid \in \indexset$ and $\instr \in \instrs$, it holds that $\conf_A \to \conf'_B$ and $\conf_B \to \conf'_A$.
\autoref{fig:sc-game} shows the transitions relation of the SC game induced by the program from \autoref{fig:concurrent-program}.
\begin{figure}
\centering
\begin{tikzpicture}[xscale=8,yscale=-3]
    \definecolor{fin}{RGB}{0,170,0}
    \definecolor{ins}{RGB}{0,0,170}
    \tikzset{fin/.style={color=fin}}
    \tikzset{ins/.style={color=ins}}

    \node (q1r1-A) at (0, 0) {$\tuple{ (\state_1, \rstate_1), \set{ x \mapsto 0 } }_A$};
    \node (q1r1-B) at (1, 0) {$\tuple{ (\state_1, \rstate_1), \set{ x \mapsto 0 } }_B$};

    \node (q2r1-B) at (0, 1) {$\tuple{ (\state_2, \rstate_1), \set{ x \mapsto 1 } }_B$};
    \node (q2r1-A) at (1, 1) {$\tuple{ (\state_2, \rstate_1), \set{ x \mapsto 1 } }_A$};

    \node[fin] (q2r2-A) at (0, 2) {$\tuple{ (\state_2, \rstate_2), \set{ x \mapsto 1 } }_A$};
    \node (q2r2-B) at (1, 2) {$\tuple{ (\state_2, \rstate_2), \set{ x \mapsto 1 } }_B$};

    \draw[->] ([yshift= 1pt] q1r1-A.east) -- node[below, ins] {$\nop$} ([yshift= 1pt] q1r1-B.west);
    \draw[->] ([yshift=-1pt] q1r1-B.west) -- node[above, ins] {$\nop$} ([yshift=-1pt] q1r1-A.east);

    \draw[->] (q1r1-A) -- node[right, ins] {$\wr(\xvar, 1)$} (q2r1-B);
    \draw[->] (q1r1-B) -- node[right, ins] {$\wr(\xvar, 1)$} (q2r1-A);

    \draw[->] ([yshift= 1pt] q2r1-B.east) -- node[below, ins] {$\nop$} ([yshift= 1pt] q2r1-A.west);
    \draw[->] ([yshift=-1pt] q2r1-A.west) -- node[above, ins] {$\nop$} ([yshift=-1pt] q2r1-B.east);

    \draw[->] (q2r1-B) -- node[right, ins] {$\rd(\xvar, 1)$} (q2r2-A);
    \draw[->] (q2r1-A) -- node[right, ins] {$\rd(\xvar, 1)$} (q2r2-B);

    \draw[->] ([yshift= 1pt] q2r2-A.east) -- node[below, ins] {$\nop$} ([yshift= 1pt] q2r2-B.west);
    \draw[->] ([yshift=-1pt] q2r2-B.west) -- node[above, ins] {$\nop$} ([yshift=-1pt] q2r2-A.east);

\end{tikzpicture}
\caption{
    The transition relation of the SC game $\game^\SC(\program, \set{\rstate_2})$ induced by the program $\program$ from \autoref{fig:concurrent-program}.
    Note that only configurations reachable from $\tuple{ (\state_1, \rstate_1), \set{ x \mapsto 0 } }_A$ are shown.
    The labels in blue are not formally part of the game definition, but are included to indicate which instruction of $\program$ gives rise to the transition.
    The configuration in green is the final state induced by the set of final local states $\stateset_F^\program := \set{\rstate_2}$.
}
\label{fig:sc-game}
\end{figure}

Note that in the analysis of SC games (and later of TSO games), we often talk about the program instruction that gives rise to a transition of $\game^\SC$ instead of talking about the transition explicitly.
For example, in a situation where the game is in some configuration $\conf_A$ with $\statemap(\conf)(\pid) = \state$, we might say that player A executes instruction $\state \to[\instr] \state'$ and mean that player A moves from $\conf_A$ to $\conf'_B$, where $\conf'$ is the unique configuration obtained from executing $\instr$ at $\conf$.
Similarly, for better readability we might drop the index $A$ or $B$ from a game configuration if it is clear from the context which player's turn it is.

\paragraph{State ownership}
As mentioned above, turn-based games can emulate games with state ownership.
The following example illustrates this informal idea.
Consider an instruction $\state_1 \to[\instr] \state_2$ in some process, where $\state_1$ is supposed to belong to player B.
We modify the process by adding some new states and transitions, as shown in \autoref{fig:b-states}.
Note that through a suitable definition of the set of final configurations $\confset_F$ we can ensure that reaching $\state_F$ means that player B wins the game.
Suppose that the game is in a configuration where the process is in $\state_1$, it is player B's turn and she wants to execute the instruction $\instr$.
She can do so by taking the transition from $\state_1$ to $\state_2'$.
Afterwards, it is the turn of player A.
She is forced to respond by taking the $\nop$ instruction from $\state_2'$ to $\state_2$.
Otherwise, player B could move to $\state_F$ in her turn and win immediately.
Now consider the same situation as before but it is the turn of player A instead.
If she would execute the instruction $\instr$ and move to $\state_2'$, then player B could respond with moving to $\state_F$ and again win immediately.
Therefore, player A is prevented from executing any instruction starting in $\state_1$.

\begin{figure}
    \centering
    \begin{tikzpicture}[xscale=3,yscale=-1]
        \node (r1) at (0, 0) {$\state_1$};
        \node (r2) at (1, 0) {$\state_2'$};
        \node (r3) at (2,-1) {$\state_2$};
        \node (r4) at (2, 1) {$\state_F$};
        \draw[->] (r1) -- node[above] {$\instr$} (r2);
        \draw[->] (r2) -- node[above] {$\nop$} (r3);
        \draw[->] (r2) -- node[below] {$\nop$} (r4);
    \end{tikzpicture}
    \caption{The gadget for a transition $\state_1 \to[\instr] \state_2$}
    \label{fig:b-states}
\end{figure}

\paragraph{Complexity}
In the following, we consider the safety problem for SC games.
Unexpectedly, this problem is \exptime-hard even for games played on a single process.
We prove this by reducing the \emph{word acceptance problem} of polynomial-space bounded \emph{alternating Turing machines} (ATM) to the safety problem of a single-process SC game.

Consider an ATM $\atm = \tuple{\alphabet, \stateset, \state_0, \state_F, \stateset_\exists, \stateset_\forall, \transition}$ that is space-bounded by some polynomial $p(n)$.
Given a word $\word \in \alphabet^n$, we construct a concurrent program $\program$ with a single process that simulates $\atm$.
The key idea is to store the state and head position of the ATM in the local states of the process, and use a set of variables to save the word on the working tape.
Based on the alternations of the Turing machine, either player A or player B decides which transition the program will simulate.
More precisely, we let player B simulate $\atm$ at the existential states while player A has control over the simulation at the universal states.
Player B will win the SC game induced by $\program$ if and only if $\atm$ accepts $\word$.

$\program = \tuple\process$ uses the variables $\varset := \{ \pos \mid -p(n) \leq \pos \leq p(n)\} \subset \Int$ over the domain $\valset := \alphabet$ to store the content of the tape.
The control state of $\atm$, the head position $\pos$ and all necessary intermediate values are stored in the local states of $\process$.
The construction of $\process$ is shown in \autoref{fig:complexity}.
Note that e.g. $(\state', \pos, b', \atmD)^A$ is a \emph{state} of $\process$ and not a \emph{configuration} of $\program$ or even $\game^\SC$.

\begin{figure}
\centering
\begin{tikzpicture}[state/.style={},
    xscale=2.35,yscale=-2
]
    \node[state] at (0.75,0)(qi)    {$(\state, \pos)$};
    \node[state] at (2,0)   (qibB)  {$(\state, \pos, \sym)^B$};
    \node[state] at (3,1)   (qibA)  {$(\state, \pos, \sym)^A$};
    \node[state] at (3,-1)  (qibDA) {$(\state', \pos, \sym', \atmD)^A$};
    \node[state] at (4,0)   (qibDB) {$(\state', \pos, \sym', \atmD)^B$};
    \node[state] at (5.5,0) (qiD)   {$(\state', \pos + \atmD)$};

    \draw[->] (qi) -- node[above] {$\rd(\pos, \sym)$} (qibB);
    \draw[->] (qibB) -- node[above right] {$\nop$} (qibA);
    \draw[->] (qibB) -- node[below right] {$\nop$} (qibDA);
    \draw[->] (qibA) -- node[above left] {$\nop$} (qibDB);
    \draw[->] (qibDA) -- node[below left] {$\nop$} (qibDB);
    \draw[->] (qibDB) -- node[above] {$\wr(\pos, \sym')$} (qiD);

    \draw[decoration={brace},decorate] (qibDA.north -| qibB.east) -- node[above,align=left]
        {if $\state \in \stateset_\exists$: for all $\tuple{\state,\sym,\state',\sym',\atmD}\in \transition$} (qibDA.north -| qibDB.west);
    \draw[decoration={brace},decorate] (qibA.south -| qibDB.west) -- node[below,align=left]
        {if $\state \in \stateset_\forall$: for all $\tuple{\state,\sym,\state',\sym',\atmD}\in \transition$} (qibA.south -| qibB.east);
\end{tikzpicture}
\caption{Construction of a program $\program$ that simulates an ATM $\atm$. The states and transitions shown are added to $\process$ for all $\state \in \stateset$ and $-p(n) \leq \pos \leq p(n)$. We set $\atmL := -1$ and $\atmR := 1$ in the expression $\pos + \atmD$.}
\label{fig:complexity}
\end{figure}

Define the set of local final states $\stateset_F^\program := \set{(\state_F, \pos) \mid -p(n) \leq \pos \leq p(n)}$ and the initial memory state $\memorymap_0$, where $\memorymap_0(\pos) = \word(\pos)$ for $1 \leq \pos \leq n$ and $\memorymap_0(\pos) = \blank$ otherwise.
\begin{thm}
\label{thm:atm}
    $\atm$ accepts $\word \in \alphabet^n$ if and only if player B wins the game $\game^\SC(\program, \stateset_F^\program)$ from the initial configuration $\tuple{(\state_0, 1), \memorymap_0}_A$.
\end{thm}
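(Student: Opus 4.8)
The plan is to set up a tight correspondence between configurations of $\atm$ and the ``clean'' game configurations of the form $\tuple{(\state, \pos), \memorymap}_A$, prove the two implications of the theorem separately, and combine them via determinacy (\autoref{lem:positional}). To an $\atm$-configuration $\conf = \tuple{\state, \pos, \tape}$ whose tape is supported on the variable range (which, by the $p(n)$ space bound, covers every configuration reachable from the initial one) I associate the game configuration $\hat\conf := \tuple{(\state, \pos), \memorymap_\tape}_A$, where $\memorymap_\tape(\posj) = \tape(\posj)$ for all $\posj \in \varset$. First I would establish the basic simulation invariant of the gadget in \autoref{fig:complexity}: starting from $\hat\conf$ (player A's turn), the read at position $\pos$ is the only enabled instruction and is deterministic, returning $\sym = \tape(\pos)$; the two intermediate $\nop$ moves and the final $\wr(\pos, \sym')$ are forced; and after exactly four game moves the play returns to a clean A-configuration $\hat{\conf'}$, where $\conf'$ is the $\atm$-successor determined by the single transition $\tuple{\state, \sym, \state', \sym', \atmD}$ selected along the way. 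The crucial bookkeeping is that this selection is made by player B at the existential branch and by player A at the universal branch, while every other move of the cycle is forced; this is exactly what the two branches of the gadget achieve, and it keeps all clean configurations on player A's turn.

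With this invariant in place, the forward implication is an induction on the least $k$ with $\conf \in \confset_k$, showing that $\conf \in \confset_\infty$ implies that $\hat\conf$ is winning for player B. For $k = 0$ we have $\state = \state_F$, so $\hat\conf \in \confset_F$ and B has already won. For the step, if $\state \in \stateset_\exists$ then by definition some successor $\conf'$ lies in $\confset_{k-1}$; B drives the gadget to $\hat{\conf'}$ by choosing the corresponding transition at the existential branch and then wins by the induction hypothesis. If $\state \in \stateset_\forall$ then \emph{every} successor lies in $\confset_{k-1}$, so whatever transition A selects at the universal branch, the cycle lands in some $\hat{\conf'}$ with $\conf' \in \confset_{k-1}$, and B wins by the induction hypothesis. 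Since the rank strictly decreases along the cycles, B forces $\confset_F$ within $\bigO(k)$ moves.

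For the backward implication I would show the contrapositive: if $\conf \notin \confset_\infty$ then player A wins from $\hat\conf$, whence determinacy gives that B winning implies $\conf \in \confset_\infty$. A maintains the invariant that the current clean configuration is non-accepting. At an existential clean configuration, every $\atm$-successor is non-accepting, so B's selection preserves the invariant; at a universal clean configuration, finiteness of $\transition$ (hence of $\post(\conf)$) guarantees that a non-accepting successor exists — otherwise all finitely many successors would lie in a common $\confset_K$ and force $\conf \in \confset_{K+1}$ — and A selects it. Because every non-accepting configuration has $\state \neq \state_F$ (all $\state_F$-configurations already lie in $\confset_0$), the invariant prevents the play from ever entering $\confset_F$, so A wins. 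The theorem then follows by instantiating the correspondence at $\tuple{\state_0, 1, \tape\of\word}$, which is accepting precisely when $\atm$ accepts $\word$.

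The step I expect to be the main obstacle is this backward direction, specifically the well-foundedness argument: one must combine finite branching with the least-fixpoint structure of $\confset_\infty$ to certify that A can always steer toward a non-accepting successor and thereby survive indefinitely, rather than merely delaying. A secondary, routine chore is reconciling the construction with deadlock-freeness: the only successorless $\atm$-configurations are those at $\state_F$ (a standard normal-form assumption on $\atm$), and their game counterparts are final, so appending $\nop$ self-loops there restores deadlock-freeness without affecting reachability of $\confset_F$; I would likewise invoke the $p(n)$ space bound to guarantee that the head never leaves the variable range, so that every $\rd$ and $\wr$ of the gadget targets an existing variable and every clean successor state exists.
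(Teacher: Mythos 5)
Your proposal is correct and takes essentially the same route as the paper's proof: the same gadget-simulation invariant (four forced-or-selected moves per $\atm$-step, with the choice made by player B at existential states and by player A at universal states), the forward direction by induction on the rank $k$ of $\confset_k$, and the backward direction by having player A maintain the invariant that the simulated configuration is non-accepting, with determinacy (\autoref{lem:positional}) closing the equivalence. Your explicit finite-branching/least-fixpoint argument for why a non-accepting successor exists at universal states, and the deadlock-freeness repair via $\nop$ self-loops at $\state_F$, simply fill in details that the paper's proof leaves implicit.
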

\begin{proof}
    Recall the definitions of $\confset_\infty$ and $\confset_k$ from \autoref{sec:atm} and suppose that $\atm$ accepts $\word$.
    We argue that player B can force the simulation into $\state_F$ and describe a strategy for her to do so.
    The main idea is that if the simulation is in a configuration $\conf \in \confset_k$ ($k > 0$), then player B can force the simulation into another configuration $\conf' \in \confset_{k-1}$.

    Let the game be in configuration $\tuple{ (\state, \pos), \tape}_A$ which simulates the $\atm$-configuration $\tuple{ \state, \pos, \tape}$.
    Note that $(\state, \pos)$ is the local state of $\process$ and $\tape: \varset \to \alphabet$ is its memory state.
    (Formally, $\tape: \Int \to \alphabet$, but since $\varset \subset \Int$, this is only a slight abuse of notation.
    Since $\tape\of\pos = \blank$ for all $\pos\not\in\varset$, $\tape: \varset \to \alphabet$ captures the same information as $\tape: \Int \to \alphabet$.)
    Let $\sym = \tape(\pos)$ be the symbol under the head of the Turing machine and $\tuple{ \state, \pos, \tape} \in \confset_k$ for some $k > 0$.
    Player A has to take the only enabled transition, which is $\rd(\pos, \sym)$ leading to the local state $(\state, \pos, \sym)^B$.
    Now, the possible choices for player B depend on whether $\state$ is an existential or universal state.

    If $\state \in \stateset_\exists$, then player B chooses an $\atm$-transition $\tuple{\state,\sym,\state',\sym',\atmD} \in \transition$ with $\tuple{ \state', \pos+\atmD, \word[\pos \gets \sym']} \in \confset_{k-1}$, which exists by definition of $\confset_k$.
    Player B then moves to the local state $(\state', \pos, \sym', \atmD)^A$ and player A has to respond with $(\state', \pos, \sym', \atmD)^B$.

    Otherwise, if $\state \in \stateset_\forall$, then player B must take the transition to the state $(\state, \pos, \sym)^A$.
    From there, player A moves to some state $(\state', \pos, \sym', \atmD)^B$.
    This implies that there is an $\atm$-transition $\tuple{ \state, \sym, \state', \sym', \atmD} \in \transition$.
    From the construction of $\confset_k$, it follows that $\tuple{ \state', \pos + \atmD, \tape[\pos \gets \sym'] } \in \confset_{k-1}$.

    In both cases, the game is now in some configuration $\tuple{ (\state', \pos, \sym', \atmD)^B, \tape }_B$ with $\tuple{ \state', \pos + \atmD, \tape[\pos \gets \sym'] }_A \in \confset_{k-1}$.
    Player B takes the transition $\wr(\pos, \sym')$ and the game is in configuration $\tuple{ (\state', \pos + \atmD), \tape[\pos \gets \sym'] }_A$.
    It follows by simple induction over $k$ that player B can force the game into simulating an $\atm$-configuration of $\confset_0$.

    For the other direction, assume that $\atm$ does not accept $\word$.
    This time we describe a strategy for player A to win the game.
    Similar to the previous argumentation, we can show by induction that player A can simulate a sequence of configurations that are \emph{not} accepting.
    The difference is that if $\state \in \stateset_\exists$, then player B can never choose any transition that leads to an accepting configuration, while in the case of $\state \in \stateset_\forall$, player A can always avoid accepting configurations.
    This follows from the fact that the simulated $\atm$-configurations $\tuple{ \state, \pos, \tape}$ are not elements of any of the $\confset_k$.
\end{proof}

\begin{thm}
    \label{thm:complexity}
        The safety problem for SC games is \exptime-complete.
    \end{thm}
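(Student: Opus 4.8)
The plan is to establish \exptime-completeness by proving the two bounds separately. The hardness direction is already essentially done: \autoref{thm:atm} reduces the word acceptance problem for polynomial-space alternating Turing machines to the safety problem for single-process SC games, and this problem is \exptime-complete by the cited results of Chandra, Kozen and Stockmeyer. I would only need to observe that the reduction in \autoref{thm:atm} is computable in polynomial time: the program $\program$ has local states indexed by $\state \in \stateset$, head position $\pos$ in the range $[-p(n), p(n)]$, and the intermediate annotations $(\state, \pos, \sym)^{A/B}$ and $(\state', \pos, \sym', \atmD)^{A/B}$, so the number of states and transitions of $\process$ is polynomial in the size of $\atm$ and in $p(n)$. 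Hence the induced SC game has size polynomial in the input, and the reduction yields \exptime-hardness.

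For membership, I would show that the safety problem for SC games lies in \exptime by arguing that the induced game is finite and of at most exponential size, and then invoking \autoref{lem:finite}. Given a program $\program = \tuple{\process^\pid}_{\pid \in \indexset}$ and final states $\stateset_F^\program$, the SC configurations are pairs $\tuple{\statemap, \memorymap}$ where $\statemap$ ranges over global states and $\memorymap: \varset \to \valset$ ranges over memory valuations. The number of global states is $\prod_{\pid \in \indexset} \sizeof{\stateset^\pid}$ and the number of memory valuations is $\sizeof{\valset}^{\sizeof{\varset}}$, so the total number of SC configurations is bounded by $\prod_{\pid} \sizeof{\stateset^\pid} \cdot \sizeof{\valset}^{\sizeof{\varset}}$, which is exponential in the size of the input. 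Since $\game^\SC(\program, \stateset_F^\program)$ duplicates each configuration into an $A$-copy and a $B$-copy, its configuration count $n$ is still exponential, and each configuration has at most $\sizeof{\indexset} \cdot \sizeof{\instrs}$ outgoing transitions, so the transition count $m$ is also exponential. Applying \autoref{lem:finite}, the winning region is computable in time $\bigO(n + m)$, i.e. exponential in the input, which places the safety problem in \exptime.

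The only subtlety I anticipate is justifying that the game is genuinely finite and deadlock-free so that \autoref{lem:finite} applies cleanly. Finiteness follows from the finiteness of $\valset$, $\varset$ and the processes. Deadlock-freedom must be checked because the definition of a game requires $\post(\conf) \neq \emptyset$ for all $\conf$; I would either assume that every local state has an outgoing instruction (as is implicitly guaranteed by the use of $\nop$ self-loops in the constructions, e.g. \autoref{fig:concurrent-program}), or note that the reachability formulation can be made deadlock-free by adding $\nop$ self-loops without affecting the winner. With these points addressed, combining the hardness from \autoref{thm:atm} with the exponential-time membership argument gives \exptime-completeness. The main conceptual content sits entirely in \autoref{thm:atm}; the membership direction is a routine size-counting argument, so I expect no genuine obstacle beyond careful bookkeeping of the configuration space size.
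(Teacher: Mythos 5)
Your proposal is correct and follows essentially the same route as the paper: \exptime-hardness via the polynomial-size reduction of \autoref{thm:atm} from word acceptance of polynomially space-bounded ATMs, and membership by bounding the number of SC game configurations by $\bigO(\prod_{\pid \in \indexset} \sizeof{\stateset^\pid} \cdot \sizeof\valset^{\sizeof\varset})$ and invoking \autoref{lem:finite}. Your additional remark on deadlock-freedom is a reasonable bit of extra care that the paper's proof leaves implicit, but it does not constitute a different approach.
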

\begin{proof}
    The word acceptance problem of a linearly bounded ATM $\atm$ is \exptime-hard \cite{DBLP:conf/focs/ChandraS76, DBLP:journals/jacm/ChandraKS81}.
    In \autoref{thm:atm}, it is reduced to the safety problem for the SC game induced by a concurrent program $\program$.
    This program has $\bigO(\sizeof\stateset \cdot \sizeof\alphabet \cdot p(n))$ local states, which is polynomial in the size of $\atm$.
    It follows that the safety problem for TSO games is \exptime-hard.
    Membership follows directly from \autoref{lem:finite} and the fact that for any $\program$, the game $\game^\SC(\program, \stateset^\program_F)$ has $\bigO(\Pi_{\pid \in \indexset} \sizeof{\stateset^\pid} \cdot \sizeof\valset^{\sizeof\varset})$ many configurations.
\end{proof}

\subsection{TSO Games}

Given a $\program = \tuple{\process^\pid}_{\pid \in \indexset}$ and a set of final local states $\stateset_F^\program \subseteq \stateset^\program$, the TSO game $\game^\TSO(\program,\stateset_F^\program)$ is defined similar to the SC game $\game^\SC(\program,\stateset_F^\program)$.
The only addition is that the game needs to be able to handle buffer updates.
Therefore, we allow one or both of the players to perform buffer updates, either before or after their turn.
Which player is allowed to do so and when exactly she is allowed to do so depends on the concrete instantiation of the TSO game.
The complete transition rules of the TSO game are:
\begin{itemize}
    \item For each transition $\conf \to[\instr_\pid] \conf'$ where $\conf, \conf' \in \confset^\program$, $\pid \in \indexset$ and $\instr \in \instrs$, it holds that $\conf_A \to \conf'_B$ and $\conf_B \to \conf'_A$.
    This is the same as for SC and means that each player can execute any TSO instruction, but they take turns alternatingly.    
    \item \emph{If player A can update before her own turn:}
    For each transition $\conf_A \to \conf'_B$ introduced by any of the previous rules, it holds that $\tilde\conf_A \to \conf'_B$ for all $\tilde\conf$ with $\tilde\conf \to[\up\kstar] \conf$.
    \item \emph{If player A can update after her own turn:}
    For each transition $\conf_A \to \conf'_B$ introduced by any of the previous rules, it holds that $\conf_A \to \tilde\conf'_B$ for all $\tilde\conf'$ with $\conf' \to[\up\kstar] \tilde\conf'$.
    \item The update rules for player B are defined in a similar manner.
\end{itemize}

\begin{figure}
    \centering
    \begin{tikzpicture}[xscale=8,yscale=-3]
        \definecolor{up}{RGB}{170,0,0}
        \definecolor{fin}{RGB}{0,170,0}
        \definecolor{ins}{RGB}{0,0,170}
        \tikzset{up/.style={draw=up}}
        \tikzset{fin/.style={color=fin}}
        \tikzset{ins/.style={color=ins}}
    
        \node (q1r1-A) at (0, 0) {$\tuple{ (\state_1, \rstate_1), (\varepsilon, \varepsilon), \set{ x \mapsto 0 } }_A$};
        \node (q1r1-B) at (1, 0) {$\tuple{ (\state_1, \rstate_1), (\varepsilon, \varepsilon), \set{ x \mapsto 0 } }_B$};
    
        \node (q2r1-B) at (0, 1) {$\tuple{ (\state_2, \rstate_1), (\tuple{\xvar, 1}, \varepsilon), \set{ x \mapsto 0 } }_B$};
        \node (q2r1-A) at (1, 1) {$\tuple{ (\state_2, \rstate_1), (\tuple{\xvar, 1}, \varepsilon), \set{ x \mapsto 0 } }_A$};
    
        \node (q2r1-B') at (0, 2) {$\tuple{ (\state_2, \rstate_1), (\varepsilon, \varepsilon), \set{ x \mapsto 1 } }_B$};
        \node (q2r1-A') at (1, 2) {$\tuple{ (\state_2, \rstate_1), (\varepsilon, \varepsilon), \set{ x \mapsto 1 } }_A$};

        \node[fin] (q2r2-A) at (0, 3) {$\tuple{ (\state_2, \rstate_2), (\varepsilon, \varepsilon), \set{ x \mapsto 1 } }_A$};
        \node (q2r2-B) at (1, 3) {$\tuple{ (\state_2, \rstate_2), (\varepsilon, \varepsilon), \set{ x \mapsto 1 } }_B$};
    
        \draw[->] ([yshift= 1pt] q1r1-A.east) -- node[below, ins] {$\nop$} ([yshift= 1pt] q1r1-B.west);
        \draw[->] ([yshift=-1pt] q1r1-B.west) -- node[above, ins] {$\nop$} ([yshift=-1pt] q1r1-A.east);
    
        \draw[->] (q1r1-A) -- node[right, ins] {$\wr(\xvar, 1)$} (q2r1-B);
        \draw[->] (q1r1-B) -- node[right, ins] {$\wr(\xvar, 1)$} (q2r1-A);
    
        \draw[->] ([yshift= 1pt] q2r1-B.east) -- node[below, ins] {$\nop$} ([yshift= 1pt] q2r1-A.west);
        \draw[->] ([yshift=-1pt] q2r1-A.west) -- node[above, ins] {$\nop$} ([yshift=-1pt] q2r1-B.east);

        \draw[->, up] (q1r1-A.south west) to[bend left=10] node[below right,xshift=5pt,yshift=-25pt, ins] {$\wr(\xvar, 1);\up$} (q2r1-B'.north west);
        \draw[->, up] (q2r1-A) -- node[below right, ins] {$\nop;\up$ or $\up;\nop$} (q2r1-B');
        \draw[->, up] (q2r1-A.south east) to[bend right=10] node[above left,yshift=25pt, ins] {$\up;\rd(\xvar, 1)$} (q2r2-B.north east);
        
        \draw[->] ([yshift= 1pt] q2r1-B'.east) -- node[below, ins] {$\nop$} ([yshift= 1pt] q2r1-A'.west);
        \draw[->] ([yshift=-1pt] q2r1-A'.west) -- node[above, ins] {$\nop$} ([yshift=-1pt] q2r1-B'.east);
    
        \draw[->] (q2r1-B') -- node[right, ins] {$\rd(\xvar, 1)$} (q2r2-A);
        \draw[->] (q2r1-A') -- node[right, ins] {$\rd(\xvar, 1)$} (q2r2-B);
    
        \draw[->] ([yshift= 1pt] q2r2-A.east) -- node[below, ins] {$\nop$} ([yshift= 1pt] q2r2-B.west);
        \draw[->] ([yshift=-1pt] q2r2-B.west) -- node[above, ins] {$\nop$} ([yshift=-1pt] q2r2-A.east);
    
\end{tikzpicture}
\caption{
    The transition relation of the TSO game $\game^\TSO(\program, \set{\rstate_2})$ induced by the program $\program$ from \autoref{fig:concurrent-program}, in the case where player A is allowed to update before and after her turn, but player B is not allowed to update buffer messages.
    Note that only configurations reachable from $\tuple{ (\state_1, \rstate_1), (\varepsilon, \varepsilon), \set{ x \mapsto 0 } }_A$ are shown.
    The labels in blue are not formally part of the game definition, but are included to indicate which instruction of $\program$ gives rise to the transition.
    The transitions in red are due to updates; they correspond to both an instruction and an update operation.
    The configuration in green is the final state induced by the set of final local states $\stateset_F^\program := \set{\rstate_2}$.
}
\label{fig:tso-game}
\end{figure}
From this definition, we obtain 16 different variants of TSO games, which differ in whether each of the players can update \emph{never}, \emph{before} her turn, \emph{after} her turn, or \emph{always} (before and after her turn).
\autoref{fig:tso-game} shows the transition relation of the TSO game induced by the program from \autoref{fig:concurrent-program}, in the case where player A is allowed to update buffer messages both before and after her turn, but player B is not allowed to do so.

We group games with similar decidability and complexity results together.
An overview of these four groups is presented in \autoref{fig:tso-groups}.
Each group is described in detail in the following sections.

But first, we use the complexity results for SC games to obtain a lower bound for all groups of TSO games.
Interestingly, we can argue that the exact type of TSO game is irrelevant.
Note that the program $\program$ from \autoref{thm:atm} uses only one process and no memory fences or atomic read-writes.
Consider the view the single process $\process$ has on a variable of the concurrent system.
Independently of the buffer updates, the process will always read the last value that it has written:
Either the last write is still in the buffer and will be read from there, or the process reads directly from the memory, which must contain the last value written by the process.
Thus, any sequence of instructions executed in the SC game is also enabled in the TSO game.
This is formalised through a bimsimulation between the games $\game^\TSO(\program, \stateset_F^\program) = \tuple{ \confset^\TSO, \confset^\TSO_A, \confset^\TSO_B, \to, \confset^\TSO_F}$ and $\game^\SC(\program, \stateset_F^\program) = \tuple{ \confset^\SC, \confset^\SC_A, \confset^\SC_B, \to, \confset^\SC_F}$.

For a TSO configuration $\conf$, let $\bar\conf$ be the configuration obtained from $\conf$ by updating all buffer messages to the memory, i.e. $\conf \to[\up\kstar] \bar\conf$ and $\buffermap(\bar\conf) = \tuple\varepsilon$.
Note that $\bar\conf$ is unique since $\program$ has only one process and thus there is only one way to update all messages.
We extend this notation to game configurations in the obvious way.
\begin{lem}
\label{lem:SC-TSO-bisim}
    The relation
    $$\bisim := \set{ (\conf, \tuple{\statemap\of\conf, \memorymap(\bar\conf)}) \mid \conf \in \confset^\TSO} \subset (\confset^\TSO \times \confset^\SC)$$
    is a bisimulation between $\game^\TSO(\program, \stateset_F^\program)$ and $\game^\SC(\program, \stateset_F^\program)$.
\end{lem}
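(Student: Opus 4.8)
The plan is to verify the four bisimulation conditions for $\bisim$, extended to game configurations so that $\conf_A$ is related to $\tuple{\statemap\of\conf, \memorymap(\bar\conf)}_A$ and likewise on the $B$-copies. Since this pairing preserves both the ownership annotation and the global state $\statemap\of\conf$, and since membership in $\confset_F$ depends only on $\statemap$ (some process in a final local state), the two structural requirements—same owner and same final status—are immediate. The real work lies in the zig and zag conditions, and everything will rest on two observations about the single-process program $\program$.

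The first is the read-value lemma: for any TSO configuration $\conf$ and variable $\xvar$, the value the process reads for $\xvar$ under TSO semantics equals $\memorymap(\bar\conf)\of\xvar$. Indeed, computing $\bar\conf$ flushes the whole buffer in FIFO order, committing pending writes oldest-first, so the newest buffered write to $\xvar$ (if any) is committed last and determines $\memorymap(\bar\conf)\of\xvar$; if no write to $\xvar$ is buffered, then $\memorymap(\bar\conf)\of\xvar = \memorymap\of{\conf}\of\xvar$. This is exactly the value prescribed by the \textbf{read-own-write} and \textbf{read-from-memory} rules. The second observation is update-invariance: if $\conf \to[\up\kstar] \conf''$, then $\statemap\of\conf = \statemap\of{\conf''}$ and $\memorymap(\bar\conf) = \memorymap(\bar{\conf''})$, because updates never touch the global state and because flushing $\conf''$ completes the very same FIFO flush that defines $\bar\conf$. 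Hence $\conf$ and $\conf''$ have the same $\bisim$-image: \emph{update moves are invisible through $\bisim$}.

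These two facts reduce both directions to a per-instruction check, uniformly across all sixteen variants. Every move in $\game^\TSO$ executes exactly one instruction, optionally preceded and/or followed by buffer updates; by update-invariance the surrounding updates leave the $\bisim$-image unchanged, so it suffices to track the effect of the bare instruction. Since $\program$ uses only $\rd$, $\wr$, and $\nop$, I would check the following. A $\nop$ and a $\rd\of\xd$ leave buffer and memory untouched, so they do not change $\memorymap(\bar\conf)$ and only advance the local state, matching the corresponding SC rule; moreover $\rd\of\xd$ is enabled in TSO iff $\memorymap(\bar\conf)\of\xvar = \dval$ by the read-value lemma, which is precisely its SC enabling condition. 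A $\wr\of\xd$ appends the newest write to $\xvar$, so $\memorymap(\bar{\conf'})\of\xvar = \dval$ while every other entry of $\memorymap(\bar\cdot)$ is unchanged, exactly matching the SC \textbf{Write} rule. For zig, any TSO move then induces the SC move for the same instruction; for zag, the base instruction transition $\conf_A \to \conf'_B$ is present in every variant, so any enabled SC instruction—also enabled in TSO by the read-value lemma—is matched by this bare TSO move. In both cases the resulting configurations are $\bisim$-related.

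The main obstacle—really the only non-routine point—will be pinning down the read-value lemma together with update-invariance, i.e. showing that $\memorymap(\bar\conf)$ faithfully records everything the lone process can observe and that updates move only within a $\bisim$-class. Once these are in place, the instruction-by-instruction matching is mechanical, and the uniformity over the sixteen variants follows because updates are $\bisim$-silent and the bare-instruction transition is always available, regardless of which player may update when.
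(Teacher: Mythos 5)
Your proposal is correct and follows essentially the same route as the paper's proof: a direct verification of the four bisimulation conditions, where reads are matched via the flushed-memory value $\memorymap(\bar\conf)$, writes are shown to affect $\memorymap(\bar\cdot)$ exactly as SC writes affect memory, and buffer updates are observed to leave the $\bisim$-image unchanged. Your explicit factoring into a \emph{read-value lemma} and \emph{update-invariance} (and your explicit treatment of the zag direction, which the paper dismisses as analogous) is a clean reorganization of the same argument, not a different one.
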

\begin{proof}
    We need to show that for all $(\conf_1, \conf_2) \in \bisim$:
    \begin{itemize}
        \item For all $\conf_1 \to \conf_3$ in $\game^\TSO$, there is $\conf_2 \to \conf_4$ in $\game^\SC$ with $\conf_3 \approx \conf_4$.
        \item For all $\conf_2 \to \conf_4$ in $\game^\SC$, there is $\conf_1 \to \conf_3$ in $\game^\TSO$ with $\conf_3 \approx \conf_4$.
        \item $\conf_1 \in \confset^\TSO_A$ if and only if $\conf_2 \in \confset^\SC_A$.
        \item $\conf_1 \in \confset^\TSO_F$ if and only if $\conf_2 \in \confset^\SC_F$.
    \end{itemize}
    For the first property, consider a transition $\conf_1 \to \conf_3$ in $\game^\TSO$.
    It is due to some instruction $\statemap(\conf_1) \to[\instr] \statemap(\conf_3)$ in $\process$ of $\program$.
    If $\instr = \rd\of\xd$ for some variable $\xvar$ and value $\dval$, then $\dval$ must be the value of the last $\xvar$-message in the buffer of $\conf_1$, or the buffer does not contain such a message and $\dval$ is the value of $\xvar$ in the memory.
    In both cases, $\memorymap(\bar\conf_1)\of\xvar = \dval$.
    Thus, $\rd\of\xd$ is enabled at $\conf_2$, since $\statemap(\conf_1) = \statemap(\conf_2)$ and $\memorymap(\bar\conf_1) = \memorymap\of{\conf_2}$.
    Otherwise, if $\instr$ is an instruction other than $\rd$, it is always enabled at $\statemap(\conf_2)$.
    Note that $\instr\neq\mf$ and $\instr\neq\arw\of\xdd$ since $\program$ does not use memory fences.

    Let $\conf_4$ be the unique SC configuration obtained from $\conf_2$ after executing the program instruction $\statemap(\conf_1) \to[\instr] \statemap(\conf_3)$.
    If $\instr = \wr\of\xd$ for some variable $\xvar$ and value $\dval$, then $\memorymap(\bar\conf_3) = \memorymap(\bar\conf_1)[\xvar\gets\dval]$ and $\memorymap(\conf_4) = \memorymap(\conf_2)[\xvar\gets\dval]$.
    This holds even if $\conf_1 \to \conf_3$ includes buffer message updates of any kind.
    Otherwise, if $\instr$ is an instruction other than $\wr$, it simply holds that $\memorymap(\bar\conf_3) = \memorymap(\bar\conf_1)$ and $\memorymap(\conf_4) = \memorymap(\conf_2)$.
    Since $\memorymap(\bar\conf_1) = \memorymap(\conf_2)$, we have $\memorymap(\bar\conf_3) = \memorymap(\conf_4)$.
    Using $\statemap(\conf_3) = \statemap(\conf_4)$ we conclude that $\conf_3 \approx \conf_4$.

    The second property is proven analogously.
    The third and fourth properties are trivially fulfilled by the definition of $\bisim$.
\end{proof}

\pagebreak[5]

\begin{cor}
\label{cor:complexity}
    The reachability problem for TSO games is \exptime-hard.
\end{cor}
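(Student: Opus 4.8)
The plan is to combine the three ingredients already assembled — the ATM reduction of \autoref{thm:atm}, the bisimulation of \autoref{lem:SC-TSO-bisim}, and the strategy transfer of \autoref{lem:bisim} — into a single polynomial-time reduction from the word acceptance problem of polynomially space-bounded ATMs to the TSO game safety problem. Since that ATM problem is \exptime-hard, this yields the claim. First I would take the program $\program$ and the set of final local states $\stateset_F^\program$ built in \autoref{thm:atm} for a given $\atm$ and word $\word$, and recall that $\program$ has a single process and uses neither memory fences nor atomic read-writes. This is exactly the hypothesis under which \autoref{lem:SC-TSO-bisim} provides a bisimulation $\bisim$ between $\game^\TSO(\program, \stateset_F^\program)$ and $\game^\SC(\program, \stateset_F^\program)$. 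Crucially, this construction is independent of which of the sixteen update variants is chosen, because with a single process a read always returns the last value written, whether it is still pending in the buffer or already flushed to memory.

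Next I would exhibit matching initial configurations. The SC game of \autoref{thm:atm} starts from $\tuple{(\state_0, 1), \memorymap_0}_A$, while the corresponding TSO game starts from the configuration $\conf_0$ with the same global state $(\state_0, 1)$, empty buffer, and memory $\memorymap_0$. Since the buffer of $\conf_0$ is empty we have $\bar\conf_0 = \conf_0$, so $(\conf_0, \tuple{(\state_0,1), \memorymap_0}) \in \bisim$ and hence $\conf_0 \approx \tuple{(\state_0,1), \memorymap_0}_A$. By \autoref{lem:bisim}, $\conf_0$ is winning for player A in $\game^\TSO$ if and only if $\tuple{(\state_0,1), \memorymap_0}_A$ is winning for player A in $\game^\SC$; by determinacy (\autoref{lem:positional}) the same equivalence then holds for player B. Chaining this with \autoref{thm:atm} gives the desired bridge: player B wins $\game^\TSO(\program, \stateset_F^\program)$ from $\conf_0$ if and only if $\atm$ accepts $\word$.

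Finally, I would note that the whole transformation is computable in polynomial time — the size bound on $\program$ is already established inside the proof of \autoref{thm:complexity} — which completes the reduction and hence the hardness argument. The only point requiring genuine care, and the place I expect the argument to be most delicate, is the claim that a single reduction works uniformly for all sixteen TSO variants. This rests entirely on the variant-independence of \autoref{lem:SC-TSO-bisim}, so I would make explicit that its proof never appeals to \emph{when} or \emph{whether} updates occur, but only to the single-process self-read property; the bisimulation, and therefore the conclusion, is then the same for each instantiation of the TSO game.
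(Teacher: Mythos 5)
Your proposal is correct and follows exactly the paper's own route: it combines \autoref{thm:atm}, \autoref{lem:SC-TSO-bisim}, and \autoref{lem:bisim} with the \exptime-hardness of ATM word acceptance, which is precisely the paper's (much terser) proof. The extra details you supply — the empty-buffer initial configuration being its own flush so that $\conf_0 \approx \tuple{(\state_0,1),\memorymap_0}_A$, the variant-independence of the bisimulation, and the polynomial size of the reduction — are all correct elaborations of steps the paper leaves implicit.
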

\begin{proof}
    This follows directly from \autoref{lem:bisim}, \autoref{thm:atm}, \autoref{lem:SC-TSO-bisim} and the fact that the word acceptance problem of linearly bounded ATMs is \exptime-hard.
    Note that \autoref{lem:bisim} is applicable since $\conf \approx \conf'$ implies $\statemap(\conf) \in \stateset^\program_F \iff \statemap(\conf') \in \stateset^\program_F$.
\end{proof}
\section{Group I}

All TSO games in this group have the following in common:
There is one player that can update messages \emph{after} her turn, and the other player can update messages \emph{before} her turn.
Both players might be allowed to do more than that, but fortunately we do not need to differentiate between those cases.
In the following, we call the player that updates after her turn \emph{player X}, and the other one \emph{player Y}.
Although the definition of safety games seems to be of asymmetric nature (player B tries to \emph{reach} a final configuration, while player A tries to \emph{avoid} them), the proof does not rely on the exact identity of player X and Y.

In this section, given a configuration $\conf$, we write $\bar\conf$ to denote a configuration obtained from $\conf$ after updating all messages to the memory.
More formally, $\conf \to[\up\kstar] \bar\conf$ and all buffers of $\bar\conf$ are empty.
Note that if the buffers of multiple processes contain messages at configuration $\conf$, then $\bar\conf$ is not unique:
Although the global state $\statemap\of{\bar\conf}$ and the buffer content $\buffermap\of{\bar\conf}$ are uniquely defined, the memory $\memorymap\of{\bar\conf}$ may depend on the order in which the buffer messages have been updated to the memory.

Let $\game = \tuple{ \confset, \confset_A, \confset_B, \to, \confset_F }$ be a TSO game as described above, currently in some configuration $\conf_0 \in \confset$.
We first consider the situation where player X has a winning strategy $\sigma_X$ from $\conf_0$.
Let $\sigma_Y$ be an arbitrary strategy for player Y and define two more strategies $\bar\sigma_X: \conf \mapsto \overline{\sigma_X\of\conf}$ and $\bar\sigma_Y: \conf \mapsto \sigma_Y(\bar\conf)$.
That is, they act like $\sigma_X$ and $\sigma_Y$, respectively, with the addition that $\bar\sigma_X$ empties the buffer \emph{after} each turn and $\bar\sigma_Y$ empties the buffer \emph{before} each turn.
From the definitions it follows directly that $\bar\sigma_Y(\sigma_X(\conf)) = \sigma_Y(\bar\sigma_X(\conf))$ for all $\conf \in \confset_X$.
An example can be seen in \autoref{fig:group-I}.

\begin{figure}
\centering
\begin{tikzpicture}[xscale=3,yscale=-2]
    \node	at (0,0)	(c0) {$\conf$};
    \node	at (1,0)	(c1) {$\conf'$};
    \node	at (1,1)	(c1b){$\bar\conf'$};
    \node	at (2,0)	(c2) {$\conf''$};

    \draw[->] (c0) -- node[above] {$\sigma_X$}      (c1);
    \draw[->] (c0) -- node[below] {$\bar\sigma_X$}  (c1b);
    \draw[->] (c1) -- node[right] {$\up\kstar$}     (c1b);
    \draw[->] (c1) -- node[above] {$\bar\sigma_Y$}  (c2);
    \draw[->] (c1b)-- node[below] {$\sigma_Y$}      (c2);
\end{tikzpicture}
\caption{Commutative diagram of strategies in games of group I.}
\label{fig:group-I}
\end{figure}

We argue that $\bar\sigma_X$ is a winning strategy for player X.
The intuition behind this is as follows:
Using the notation of \autoref{fig:group-I}, if a configuration $\conf''$ is reachable from $\bar\conf'$, then it is also reachable from $\conf'$, since player Y can empty all buffers at the start of her turn and then proceed as if she started in $\bar\conf'$.
On the other hand, there might be configurations reachable from $\conf'$ but not $\bar\conf'$, for example a read transition might get disabled by one of the buffer updates.
Thus, player X never gets a disadvantage by emptying the buffers.


\begin{clm}
\label{claim:ab1}
    $\bar\sigma_X$ is a winning strategy from $\conf_0$.
\end{clm}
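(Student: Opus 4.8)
The goal is to show that $\bar\sigma_X$ defeats an arbitrary strategy $\sigma_Y$ of player Y, given that $\sigma_X$ is winning. The natural approach is to compare the play $\play(\conf_0, \bar\sigma_X, \sigma_Y)$ (call it $\bar\rho$) against a ``reference'' play of the original winning strategy $\sigma_X$ against the modified opponent $\bar\sigma_Y$, namely $\rho := \play(\conf_0, \sigma_X, \bar\sigma_Y)$. Since $\sigma_X$ is winning, $\rho$ never enters $\confset_F$, so if I can show $\bar\rho$ and $\rho$ track each other closely enough that $\bar\rho$ also avoids $\confset_F$, I am done. The commutation identity $\bar\sigma_Y(\sigma_X(\conf)) = \sigma_Y(\bar\sigma_X(\conf))$ from \autoref{fig:group-I} is precisely what makes these two plays synchronize.

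First I would set up an inductive correspondence between the two plays. Write $\rho = \conf_0, \conf_1, \conf_2, \dots$ and $\bar\rho = \hat\conf_0, \hat\conf_1, \hat\conf_2, \dots$. The claim to maintain by induction is that after each completed \emph{pair} of moves (one by X, one by Y), the two plays reach configurations with identical global state and buffer content, and whose memories agree up to a pending-update relation; more precisely, I expect the invariant to be that the configuration in $\bar\rho$ already has X's buffer emptied while the one in $\rho$ may still carry messages that $\bar\sigma_Y$ will flush at the next Y-turn, so that $\bar\conf$ of the $\rho$-configuration equals the $\bar\rho$-configuration. Concretely: at an X-turn, $\bar\sigma_X$ produces $\overline{\sigma_X(\cdot)}$ while $\sigma_X$ produces $\sigma_X(\cdot)$, differing only by updates; at the subsequent Y-turn, $\bar\sigma_Y$ first empties the buffer and then applies $\sigma_Y$, so by the commutation identity the two plays re-synchronize. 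The heart of the argument is the intuition already sketched in the text: emptying buffers only \emph{removes} options for the opponent Y (a read reading from memory that a pending write would have overridden), so any successor Y can reach from the emptied configuration is also reachable from the non-emptied one. This means Player Y in $\bar\rho$ is at no point more powerful than in $\rho$.

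The key step — and the main obstacle — is verifying that avoidance of $\confset_F$ transfers correctly despite the half-step offset between the two plays. The subtlety is that $\confset_F \subseteq \confset_A$ depends only on the global state $\statemap$ (whether some process is in a final local state), and $\statemap$ is unaffected by buffer updates; so two configurations related by $\to[\up\kstar]$ agree on membership in $\confset_F$. I would exploit this: since $\rho$ avoids $\confset_F$ and each configuration of $\bar\rho$ shares its global state with the corresponding configuration of $\rho$ (they differ only by pending updates), $\bar\rho$ avoids $\confset_F$ as well. I must be careful to handle the X-turn intermediate configurations too, not just the post-Y synchronization points, but these also only differ by updates and hence agree on $\statemap$.

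Carrying this out, the clean way is probably to prove by induction on $n$ that $\statemap(\hat\conf_n) = \statemap(\conf_n)$ and that $\hat\conf_n$ is reachable from $\conf_n$ via some buffer updates (or vice versa), using the commutation identity to re-establish the invariant across each Y-move and the ``emptying only restricts Y'' observation to justify that $\bar\sigma_Y$'s choice is a legal move matching $\sigma_Y$'s. Once the global-state equality holds for all $n$, the fact that $\rho$ is winning for X (never hits $\confset_F$) immediately gives that $\bar\rho$ is winning for X. Since $\sigma_Y$ was arbitrary, $\bar\sigma_X$ is a winning strategy from $\conf_0$, which is the claim.
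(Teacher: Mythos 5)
Your core argument coincides with the paper's own proof: compare $\bar\rho = \play(\conf_0, \bar\sigma_X, \sigma_Y)$ against the reference play $\rho = \play(\conf_0, \sigma_X, \bar\sigma_Y)$, use the commutation identity $\bar\sigma_Y(\sigma_X(\conf)) = \sigma_Y(\bar\sigma_X(\conf))$ to show the two plays coincide at the X-positions and differ only by pending updates at the intermediate positions, and conclude that they visit the same sequence of global states, so that winning transfers because membership in $\confset_F$ depends only on the global state (and the owner annotation), both of which are unaffected by buffer updates. This part is sound and is exactly the paper's argument.

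There is, however, a genuine gap: your induction is organized around pairs of moves that \emph{begin with an X-turn}, so it only covers $\conf_0 \in \confset_X$, whereas the claim is stated for an arbitrary $\conf_0 \in \confset$. If $\conf_0 \in \confset_Y$ and its buffers are non-empty, the correspondence breaks at the very first move: in $\rho$ player Y plays $\bar\sigma_Y(\conf_0) = \sigma_Y(\bar\conf_0)$, while in $\bar\rho$ she plays $\sigma_Y(\conf_0)$, and these can be entirely different moves (for instance, $\sigma_Y$ may take a read that is enabled by a pending buffer message at $\conf_0$ but disabled at $\bar\conf_0$), after which the two plays never re-synchronize and your invariant cannot be established. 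The paper closes this case with a short separate step: if $\conf_0 \in \confset_Y$ is winning for X, then $\sigma_X$ is winning from every $\conf \in \post(\conf_0)$, all of which lie in $\confset_X$; applying the main case to each successor shows $\bar\sigma_X$ is winning from each of them, hence from $\conf_0$. Your proof needs this (or an equivalent) reduction. A second, smaller issue: you identify ``winning for X'' with ``never enters $\confset_F$'', i.e.\ you tacitly assume X is player A. In group I, X is defined by her update rights and may be either player; the paper explicitly notes the proof must not depend on this identity. Your synchronization argument does yield the symmetric statement --- the two plays agree position-by-position on membership in $\confset_F$, so either player wins $\rho$ if and only if she wins $\bar\rho$ --- but the conclusion should be phrased that way rather than as transfer of avoidance.
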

\begin{proof}
    \underline{\textbf{Case} $\conf_0 \in \confset_X$:}
    Since $\bar\sigma_Y(\sigma_X(\conf)) = \sigma_Y(\bar\sigma_X(\conf))$ for all $\conf \in \confset_X$, the plays $\play_1 := \play(\conf_0, \sigma_X, \bar\sigma_Y)$ and $\play_2 := \play(\conf_0, \bar\sigma_X, \sigma_Y)$ agree on every second configuration, i.e. the configurations in $\confset_X$.
    Moreover, the configurations in between (after an odd number of steps) at least share the same global state, i.e. $\statemap(\sigma_X(\conf)) = \statemap(\bar\sigma_X(\conf))$.
    In particular, the sequence of visited global TSO states is the same in both plays.
    Since $\sigma_X$ is a winning strategy from $\conf_0$, it means that $\play_1$ is winning for player X.
    This means that $\play_2$ is also winning, because for both players, a winning play is clearly determined by the sequence of visited global TSO states.
    Because we chose $\sigma_Y$ arbitrarily, it follows that $\bar\sigma_X$ is a winning strategy.

    \underline{\textbf{Case} $\conf_0 \in \confset_Y$:}
    For the other case, we consider the configurations in $\post(\conf_0)$ instead.
    We observe that $\sigma_X$ must be a winning strategy for all $\conf \in \post(\conf_0)$.
    We apply the first case of this proof to each of these configurations and obtain that $\bar\sigma_X$ is a winning strategy for all of them.
    It follows that $\bar\sigma_X$ is a winning strategy for $\conf_0$.
\end{proof}

Suppose that player X plays her modified strategy as described above.
We observe that after at most two steps, every play induced by her strategy and an arbitrary strategy of the opposing player only visits configurations with at most one message in the buffers:
Player X will empty all buffers at the end of each of her turns and player Y can only add at most one message to the buffers in between.
Hence, they can play on a finite set of configurations instead.

To show this, we construct a finite game $\game' = \tuple{ \confset', \confset_A', \confset_B', \to', \confset_F'}$ as follows.
$\confset_Y'$ contains all configurations of $\confset_Y$ that have at most one buffer message, i.e.:
$$\confset_Y' := \set{ \tuple{\statemap, \buffermap, \memorymap}_Y \in \confset_Y \mid \sum_{\pid\in\indexset} \sizeof{\buffermap\of\pid} \leq 1 }$$
If $\conf_0 \in \confset_Y$, we also add it to $\confset_Y'$, otherwise to $\confset_X'$.
Lastly, we add $\post(\confset_Y')$ to $\confset_X'$, where $\post$ is with respect to $\game$.
$\to'$ is defined as the restriction of $\to$ to configurations of $\game'$, and $\confset_F' = \confset_F \cap \confset_A'$.
Note that $\confset_X'$ also contains configurations with two messages.
This is needed to account for the case that player Y has a winning strategy, which is handled later in this proof.
Now, let $\bar\sigma_X'$ be the restriction of $\bar\sigma_X$ to $\confset_X'$ (in the mathematical sense, i.e $\bar\sigma_X': \confset_X' \to \confset_Y$ and $\bar\sigma_X\of\conf = \bar\sigma_X'\of\conf$ for all $\conf \in \confset_X'$).

\begin{clm}
\label{claim:ab2}
    $\bar\sigma_X'$ is a winning strategy for $\conf_0$ in $\game'$.
\end{clm}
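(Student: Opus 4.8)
The plan is to show that the winning strategy $\bar\sigma_X$ for player X in the full game $\game$ remains winning when we restrict attention to the finite game $\game'$. The key observation is that by Claim~\ref{claim:ab1}, once player X plays $\bar\sigma_X$, she empties all buffers at the end of each of her turns, so every configuration she produces has empty buffers. Player Y can then add at most one message before passing control back, so every configuration in $\confset_Y$ actually visited during a play consistent with $\bar\sigma_X$ has at most one buffer message. Hence all such configurations lie in $\confset_Y' \subseteq \confset'$, and the successors chosen by $\bar\sigma_X$ (being buffer-empty, so certainly having at most one message) lie in $\confset_X' \subseteq \confset'$. This means the play never leaves $\game'$.

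First I would fix the case $\conf_0 \in \confset_X$ (the case $\conf_0 \in \confset_Y$ follows by the same argument applied to each successor, exactly as in Claim~\ref{claim:ab1}). Given an arbitrary strategy $\sigma_Y'$ for player Y in the finite game $\game'$, I would argue that it can be regarded as (the restriction of) a strategy $\sigma_Y$ in the full game $\game$: every move available to player Y in $\game'$ is by construction also a move in $\game$, since $\to'$ is the restriction of $\to$. I would then consider the play $\play(\conf_0, \bar\sigma_X', \sigma_Y')$ in $\game'$ and compare it to the play $\play(\conf_0, \bar\sigma_X, \sigma_Y)$ in $\game$, where $\sigma_Y$ is any extension of $\sigma_Y'$. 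The central step is to verify by induction that these two plays coincide: at each X-configuration, $\bar\sigma_X'$ agrees with $\bar\sigma_X$ by definition of the restriction, and the resulting successor has empty buffers, hence lies in $\confset_X'$; at each Y-configuration (which has at most one message and thus lies in $\confset_Y'$), player Y makes the same choice in both games. Since the successors produced by $\bar\sigma_X$ have empty buffers, they are legal configurations of $\game'$, so the play stays inside $\game'$ and never gets stuck.

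Having established that the two plays are identical, I would invoke Claim~\ref{claim:ab1}: because $\bar\sigma_X$ is winning in $\game$, the play $\play(\conf_0, \bar\sigma_X, \sigma_Y)$ never visits $\confset_F$. Since $\confset_F' = \confset_F \cap \confset_A'$ and the plays coincide, the identical play in $\game'$ never visits $\confset_F'$ either, so it is winning for player X in $\game'$. As $\sigma_Y'$ was arbitrary, $\bar\sigma_X'$ is a winning strategy from $\conf_0$ in $\game'$.

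The main obstacle I anticipate is the bookkeeping needed to confirm that the restricted game is genuinely closed under the relevant moves and that the correspondence between strategies in $\game$ and $\game'$ is clean. Specifically, one must check that every configuration reachable under $\bar\sigma_X$ really does lie in $\confset'$ — this relies crucially on the "at most one message" bound holding for all Y-configurations actually encountered, which in turn depends on player X emptying buffers after her move (so that player Y starts from an empty buffer and can add only a single write) and on the fact that $\confset_X'$ was deliberately enlarged to include successors of $\confset_Y'$. Care is also needed because $\bar\sigma_X'$ is defined as a restriction only on $\confset_X'$, so one must ensure the play never reaches an X-configuration outside the domain of $\bar\sigma_X'$; this is exactly guaranteed by the empty-buffer property. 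Everything else is a routine induction matching the two plays step by step.
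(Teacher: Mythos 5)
Your proposal is correct and follows essentially the same route as the paper's proof: extend an arbitrary strategy $\sigma_Y'$ of player Y in $\game'$ to a strategy $\sigma_Y$ in $\game$, observe that $\bar\sigma_X'$ is a legal strategy of $\game'$ because its outputs have empty buffers, show the two induced plays coincide (and hence stay inside $\confset'$), and transfer the winning property via \autoref{claim:ab1}, using that $\confset_F' = \confset_F \cap \confset_A'$. Your write-up merely makes explicit the induction and closure bookkeeping that the paper leaves implicit.
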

\begin{proof}
    Looking at the definitions, we confirm that $\bar\sigma_X'$ actually is a valid strategy for $\game'$, i.e. $\bar\sigma_X'(\conf) \in \confset_Y'$, for all $\conf \in \confset_X'$, since $\bar\sigma_X'(\conf)$ has empty buffers.
    (This makes $\bar\sigma_X'$ the restriction of $\bar\sigma_X$ to $\game'$.)
    Consider a strategy $\sigma_Y'$ for player Y in $\game'$ and an arbitrary extension $\sigma_Y$ to $\game$.
    Because $\bar\sigma_X'$ and $\bar\sigma_X$ agree on $\confset_X'$ and $\bar\sigma_Y'$ and $\bar\sigma_Y$ agree on $\confset_Y'$, $\play := \play(\conf_0, \bar\sigma_X', \bar\sigma_Y)$ and $\play' := \play(\conf_0, \bar\sigma_X', \bar\sigma_Y)$ are in fact the exact same play.
    Since $\bar\sigma_X$ is a winning strategy, $\play$ is a winning play, and thus also $\play'$.
    Here, note that $\game$ and $\game'$ agree on the final configurations within $\confset'$.
    Since $\sigma_Y'$ was arbitrary, it follows that $\bar\sigma_X'$ is a winning strategy from $\conf_0$ in $\game'$.
\end{proof}

What is left to show is that a winning strategy for $\game'$ induces a winning strategy for $\game$.
Suppose $\sigma_X'$ is a winning strategy for player X in game $\game'$ for the configuration $\conf_0$.
Let $\sigma_X$ be an arbitrary extension of $\sigma_X'$ to $\game$.

\begin{clm}
\label{claim:ab3}
    $\sigma_X$ is a winning strategy for $\conf_0$ in $\game$.
\end{clm}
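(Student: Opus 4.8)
The plan is to fix an arbitrary (positional) strategy $\sigma_Y$ of player Y in $\game$ and to show that the play $\play(\conf_0, \sigma_X, \sigma_Y)$ it induces together with $\sigma_X$ never leaves the configuration set $\confset'$ of the finite game, so that it is in fact a play of $\game'$ driven by the winning strategy $\sigma_X'$. First I would isolate the two facts that carry the whole argument. On the one hand, $\sigma_X$ agrees with $\sigma_X'$ on $\confset_X'$, and since $\sigma_X'$ is a strategy of $\game'$ its value always lies in $\confset'$; being owned by Y, that value lies in $\confset_Y'$, which by definition only contains configurations with at most one buffer message. On the other hand, by construction $\post(\confset_Y') \subseteq \confset_X'$, where $\post$ is taken with respect to $\game$.

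Then I would prove, by induction along $\play(\conf_0, \sigma_X, \sigma_Y) = \conf_0, \conf_1, \dots$, the invariant that every $\conf_i$ lies in $\confset'$, with $\conf_i \in \confset_X'$ whenever $\conf_i \in \confset_X$ and $\conf_i \in \confset_Y'$ whenever $\conf_i \in \confset_Y$. The base case is immediate, since the construction places $\conf_0$ into $\confset_X'$ or $\confset_Y'$ according to its owner. For the step, if $\conf_i \in \confset_X'$ then $\conf_{i+1} = \sigma_X(\conf_i) = \sigma_X'(\conf_i) \in \confset_Y'$ by the first fact; if $\conf_i \in \confset_Y'$ then $\conf_{i+1} = \sigma_Y(\conf_i) \in \post(\conf_i) \subseteq \post(\confset_Y') \subseteq \confset_X'$ by the second. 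Hence the play stays inside $\confset'$, and because $\sigma_X$ coincides with $\sigma_X'$ on every X-configuration it visits, $\play(\conf_0, \sigma_X, \sigma_Y)$ is literally the play $\play(\conf_0, \sigma_X', \sigma_Y')$ of $\game'$, where $\sigma_Y'$ is the restriction of $\sigma_Y$ to $\confset_Y'$ (a legal $\game'$-strategy, again by the inclusion $\post(\confset_Y') \subseteq \confset_X'$).

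To finish I would invoke that $\sigma_X'$ is winning in $\game'$: the play $\play(\conf_0, \sigma_X', \sigma_Y')$ is winning for X there, and since it remains in $\confset'$ and $\confset_F' = \confset_F \cap \confset_A'$, it meets $\confset_F$ exactly when it meets $\confset_F'$. Thus the same sequence is winning for X in $\game$; as $\sigma_Y$ was arbitrary, $\sigma_X$ is winning from $\conf_0$ in $\game$. This one argument covers both $\conf_0 \in \confset_X$ and $\conf_0 \in \confset_Y$, and is insensitive to whether X is player A or B, since the winner of a play depends only on whether it visits the final configurations.

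The step I expect to be most delicate is the closure inclusion $\post(\confset_Y') \subseteq \confset_X'$, which is exactly where it matters that $\confset_X'$ was allowed to contain configurations with two buffer messages: from a Y-configuration holding one pending message, player Y may decline to update and instead append a write, producing a two-message configuration that must already be present in $\game'$. Guaranteeing that every response of the unrestricted opponent in $\game$ is available in $\game'$ is precisely what lets us ignore the arbitrary behaviour of the extension $\sigma_X$ off $\confset_X'$: the play provably never reaches those configurations, so those choices are never exercised.
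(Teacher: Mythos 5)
Your proposal is correct and follows essentially the same route as the paper's own proof: restrict the opponent's strategy $\sigma_Y$ to $\confset_Y'$, observe that $(\sigma_X,\sigma_Y)$ in $\game$ and $(\sigma_X',\sigma_Y')$ in $\game'$ induce the identical play from $\conf_0$, and transfer the winning property using the fact that $\confset_F' = \confset_F \cap \confset_A'$. The only difference is that you spell out, via an explicit induction using $\sigma_X'(\confset_X') \subseteq \confset_Y'$ and $\post(\confset_Y') \subseteq \confset_X'$, the confinement of the play to $\confset'$, which the paper asserts with a terse ``we see that'' — a welcome but not substantively different elaboration.
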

\begin{proof}
    Let $\sigma_Y$ be a strategy of player Y in $\game$ and $\sigma_Y'$ the restriction of $\sigma_Y$ to $\confset_Y'$ (again, in the mathematical sense).
    Since the outgoing transitions of every $\conf \in \confset_Y'$ are the same in both $\game$ and $\game'$, $\sigma_Y'$ is a strategy for $\game'$ (and the restriction of $\sigma_Y$ to $\game'$).
    Furthermore, starting from $\conf_0$, we see that $\sigma_X$ and $\sigma_Y$ induce the exact same play in $\game$ as $\sigma_X'$ and $\sigma_Y'$ in $\game'$.
    Since the former play is winning, so must be the latter one.
\end{proof}

Now, we quickly cover the situation where it is player Y that has a winning strategy.
We follow the same arguments as previously, with minor changes.
This time, assume $\sigma_Y$ to be a winning strategy and let $\sigma_X$ be arbitrary.
Define $\bar\sigma_X$ and $\bar\sigma_Y$ as above.
Following the beginning of the proof of \autoref{claim:ab1}, we can conclude that the sequence of visited global TSO states is the same in both play $\play_1$ and $\play_2$.
For the remainder of the proof, we swap the roles of X and Y and obtain that $\bar\sigma_Y$ is a winning strategy.

Let $\bar\sigma_Y'$ be the restriction of $\bar\sigma_Y$ to $\confset_Y'$.
Since $\bar\sigma_Y'(\confset_Y') = \bar\sigma_Y(\confset_Y') \subseteq \post(\confset_Y') \subseteq \confset_X'$, it follows that $\bar\sigma_Y'$ is a strategy of $\game'$ ($\post$ is again with respect to $\game$).
Consider a strategy $\sigma_X'$ for player X in $\game'$ and an arbitrary extension $\sigma_X$ to $\game$.
Similar as in \autoref{claim:ab2}, we see that $\play(\conf_0, \bar\sigma_X', \bar\sigma_Y) = \play(\conf_0, \bar\sigma_X', \bar\sigma_Y)$ and conclude that $\bar\sigma_Y'$ is a winning strategy.

The other direction follows from the proof of \autoref{claim:ab3}, with the roles of X and Y swapped.

\begin{thm}
\label{thm:ab}
    The safety problem for games of group I is \exptime-complete.
\end{thm}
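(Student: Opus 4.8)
The plan is to prove the two matching bounds separately. For \exptime-hardness I would simply invoke \autoref{cor:complexity}. The reduction underlying that corollary builds a single-process program with no memory fences and no atomic read-writes, and \autoref{lem:SC-TSO-bisim} exhibits a bisimulation between its SC game and its TSO game that is insensitive to which player updates the buffer and when. Consequently the reduction specialises to every variant in group I, so each such game is already \exptime-hard.

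For \exptime-membership I would package the three preceding claims into an equivalence between $\game$ and the finite game $\game'$. \autoref{claim:ab1} and \autoref{claim:ab2} together show that if player X wins $\game$ from $\conf_0$, then her truncated strategy $\bar\sigma_X'$ wins $\game'$ from $\conf_0$, while \autoref{claim:ab3} gives the reverse implication by extending any winning strategy of $\game'$ back to $\game$. With the symmetric argument for player Y already established above (or, equivalently, by the determinacy guaranteed by \autoref{lem:positional}), this yields that $\conf_0$ is winning for a given player in $\game$ if and only if it is winning for that same player in $\game'$. Hence the safety problem for $\game$ reduces to the safety problem for the finite game $\game'$.

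It then remains to bound the size of $\game'$ and apply \autoref{lem:finite}. A configuration of $\game'$ is an annotated triple $\tuple{\statemap,\buffermap,\memorymap}$ in which the total buffer length $\sum_{\pid\in\indexset}\sizeof{\buffermap\of\pid}$ is at most two. There are $\prod_{\pid\in\indexset}\sizeof{\stateset^\pid}$ global states and $\sizeof\valset^{\sizeof\varset}$ memory valuations, while the number of buffer contents of length at most two is polynomial in $\sizeof\indexset\cdot\sizeof\varset\cdot\sizeof\valset$ because the length bound is a fixed constant. Thus $\game'$ has a single-exponential number of configurations $n$, and since each configuration has only polynomially many successors, the number of transitions $m$ is single-exponential as well. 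By \autoref{lem:finite} the winning region of $\game'$ is computable in time $\bigO(n+m)$, i.e. in exponential time in the size of $\program$. Together with the lower bound this establishes \exptime-completeness.

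Because the three claims supply the genuinely delicate part — that emptying the buffers after player X's turn can never disadvantage her — the remaining work for the theorem is essentially bookkeeping. The one point I would check carefully is that capping the total buffer length at two keeps the configuration count single-exponential; this holds precisely because two is a constant independent of the input, so the buffer contributes only a polynomial factor on top of the exponential memory and global-state counts.
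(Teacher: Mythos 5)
Your proposal is correct and follows essentially the same route as the paper: hardness via \autoref{cor:complexity}, and membership by combining \autoref{claim:ab1}, \autoref{claim:ab2} and \autoref{claim:ab3} (plus the symmetric treatment of player Y) into an equivalence between $\game$ and the finite game $\game'$, then invoking \autoref{lem:finite}. Your two additions — the explicit single-exponential count of configurations of $\game'$, and the observation that the player-Y direction also follows from determinacy (\autoref{lem:positional}) rather than by re-running the argument with roles swapped — are sound refinements of details the paper leaves implicit.
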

\begin{proof}
    By \autoref{claim:ab1} and \autoref{claim:ab2}, if a configuration $\conf_0$ is winning for player X in $\game$, then it is also winning in $\game'$.
    The reverse holds by \autoref{claim:ab3}.
    The equivalent statement for player Y follows from results outlined above.
    Thus, the safety problem for $\game$ is equivalent to the safety problem for $\game'$.
    $\game'$ is finite and has exponentially many configurations.
    \exptime-completeness follows immediately from \autoref{lem:finite} (membership) and \autoref{cor:complexity} (hardness).
\end{proof}

\begin{rem}
    In the game where both players are allowed to update the buffer at any time, we can show an interesting conclusion.
    By \autoref{claim:ab1} and the equivalent statement for the second player, we can restrict both players to strategies that empty the buffer after each turn.
    Thus, the game is played only on configurations with empty buffer, except for the initial configuration which might contain some buffer messages.
    This implies that the TSO program that is described by the game implicitly follows SC semantics.
\end{rem}

\section{Group II}

This group contains TSO games where both players are allowed to update the buffer \emph{only} before their own move.
Let player X be the player that has a winning strategy and player Y her opponent.
Note that this differs from the previous section, in which the players X and Y were defined based on their updating capabilities.

Similar to the argumentation for Group I, we want to show that player X also has a winning strategy where she empties the buffer in each move.
But, in contrast to before, this time there is an exception:
Since the player has to update the buffer \emph{before} her move, by updating a memory variable she might disable a read transition that she intended to execute.
Thus, we do not require her to empty the buffer in that case.

Formally, let $\game = \tuple{ \confset, \confset_A, \confset_B, \to, \confset_F}$ be a TSO game where both players are allowed to perform buffer updates exactly before their own moves.
Suppose $\sigma_X$ is a winning strategy for player X and some configuration $\conf_0$.
We construct another strategy $\bar\sigma_X$ for player X.
Let $\conf \in \confset_X$, $\conf' = \sigma_X(\conf)$ and $\bar\conf$ as in the previous section, i.e. a (non-unique) configuration such that $\conf \to[\up\kstar] \bar\conf$ and the buffers of $\bar\conf$ are empty.
Suppose that $\conf \to[\instr_\pid] \conf'$, where $\instr_\pid$ is not a read or atomic read-write instruction.
Then, starting from $\conf$, updating all buffer messages does not change that the transition from $\statemap(\conf)\of\pid$ to $\statemap(\conf')\of\pid$ is enabled.
Thus, $\instr_\pid$ can also be executed from $\bar\conf$.
We call the resulting configuration $\tilde\conf'$ and observe that $\bar\conf \to \tilde\conf'$ and $\conf' \to[\up\kstar] \tilde\conf'$.
We define $\bar\sigma_X(\conf) := \tilde\conf'$.
This can be seen in \autoref{fig:group-II}.
Note that $\tilde\conf'$ may have at most one message in its buffers.
In the other case, where there is no transition from $\conf$ to $\conf'$ other than read or atomic read-write instructions, we define $\bar\sigma_X(\conf) := \sigma_X(\conf) = \conf'$ and observe that $\conf'$ cannot have more buffer messages than $\conf$.

\begin{figure}
\centering
\begin{tikzpicture}[xscale=3,yscale=-1]
    \node	at (0,-1)	(c0) {$\conf$};
    \node	at (0,1)	(c0b){$\bar\conf$};
    \node	at (1,-1)	(c1) {$\conf'$};
    \node	at (1,1)	(c1t){$\tilde\conf'$};
    \node	at (2,-1)	(c2) {$\conf''$};

    \draw[->] (c0) -- node[above] {$\sigma_X$}
                      node[below] {$\instr_\pid$}   (c1);
    \draw[->] (c0) -- node[below] {$\bar\sigma_X$}  (c1t);
    \draw[->] (c0) -- node[right] {$\up\kstar$}     (c0b);
    \draw[->] (c0b)-- node[below] {$\instr_\pid$}   (c1t);
    \draw[->] (c1) -- node[right] {$\up\kstar$}     (c1t);
    \draw[->] (c1) -- node[above] {$\bar\sigma_Y$}  (c2);
    \draw[->] (c1t)-- node[below] {$\sigma_Y$}      (c2);

\end{tikzpicture}
\caption{Commutative diagram of strategies in games of group II, in the case where $\instr_\pid \neq \rd\of\xd$ and $\instr_\pid \neq \arw\of\xdd$.}
\label{fig:group-II}
\end{figure}

\begin{clm}
\label{claim:bb1}
    $\bar\sigma_X$ is a winning strategy for $\conf_0$.
\end{clm}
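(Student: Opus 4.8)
The plan is to follow the proof of \autoref{claim:ab1} closely, the only genuinely new feature being the case distinction already built into the definition of $\bar\sigma_X$. As there, I would assume $\sigma_X$ is winning from $\conf_0$ and compare, for an arbitrary opponent strategy, the play in which player X uses $\sigma_X$ against the one in which she uses $\bar\sigma_X$, showing that the two traverse the same sequence of global TSO states so that their winning status must coincide. When the instruction $\instr_\pid$ that $\sigma_X$ executes at a configuration $\conf$ is a read or an atomic read-write, $\bar\sigma_X$ agrees with $\sigma_X$ and the corresponding step is identical in both plays; the substantial case is the one of \autoref{fig:group-II}, where $\instr_\pid$ is a write, skip or fence and we have already established $\conf' \to[\up\kstar] \tilde\conf'$ with $\statemap(\conf') = \statemap(\tilde\conf')$, since buffer updates never change the global state.

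First I would handle $\conf_0 \in \confset_X$. Fixing an arbitrary player-Y strategy $\sigma_Y$, consider $\play_2 := \play(\conf_0, \bar\sigma_X, \sigma_Y)$ and build a companion play $\play_1$ in lockstep in which player X follows $\sigma_X$. Inductively, at each X-turn both plays sit at a common configuration $\conf$; player X then moves to $\conf' = \sigma_X(\conf)$ in $\play_1$ and to $\tilde\conf' = \bar\sigma_X(\conf)$ in $\play_2$, with $\conf' \to[\up\kstar] \tilde\conf'$ and a shared global state. At the ensuing Y-turn, $\play_2$ proceeds by $\sigma_Y$ from $\tilde\conf'$ to some $\conf''$; because player Y may update before her move, the composite $\conf' \to[\up\kstar] \tilde\conf' \to \conf''$ is a single legal player-Y transition (this is precisely the companion move $\bar\sigma_Y$ of \autoref{fig:group-II}), so I let $\play_1$ move from $\conf'$ to the same $\conf''$. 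The two plays therefore re-synchronise to the identical configuration $\conf''$ after each full round, agreeing on every even-numbered configuration and on the global state of every odd-numbered one, and hence visiting the same sequence of global states. Since ownership strictly alternates and membership in $\confset_F$ is determined by the global state together with ownership, the two plays agree on $\confset_F$ step by step. Finally, $\play_1$ follows $\sigma_X$ for player X, so it equals $\play(\conf_0, \sigma_X, \sigma_Y')$ for some strategy $\sigma_Y'$ and is thus winning; as $\play_2$ agrees with it on all visited global states, $\play_2$ is winning too, and since $\sigma_Y$ was arbitrary, $\bar\sigma_X$ is winning from $\conf_0$.

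For $\conf_0 \in \confset_Y$ I would argue exactly as in \autoref{claim:ab1}: by \autoref{lem:positional} every configuration in $\post(\conf_0)$ is again winning for player X and belongs to $\confset_X$, so the previous case shows $\bar\sigma_X$ wins from each of them and therefore from $\conf_0$. I expect the main obstacle to be the justification of the companion Y-move at each round, namely that $\conf' \to[\up\kstar] \tilde\conf' \to \conf''$ really is a single admissible transition in this game variant — the updates available before player Y's move must absorb both the updates $\conf' \to[\up\kstar] \tilde\conf'$ and whatever updates $\sigma_Y$ itself performs — together with the bookkeeping forced by $\tilde\conf'$ still carrying one buffered message in the write case (so that $\tilde\conf' \neq \overline{\conf'}$) and by the non-uniqueness of the $\overline{\,\cdot\,}$ operation noted at the start of the section, which requires fixing the update order consistently throughout.
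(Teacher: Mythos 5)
Your proof is correct and follows essentially the same route as the paper: what you build inductively as a ``companion play'' the paper packages explicitly as a shifted (non-positional) opponent strategy $\bar\sigma_Y(\conf, \conf') := \sigma_Y(\bar\sigma_X(\conf))$, whose legality rests on exactly the composite-move observation $\conf' \to[\up\kstar] \tilde\conf' \to \conf''$ that you identify as the crux, after which both arguments conclude by comparing the sequences of visited global states. Your treatment of the case $\conf_0 \in \confset_Y$ by passing to the successors of $\conf_0$ also matches the paper's proof.
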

\begin{proof}
    First, suppose that $\conf_0 \in \confset_X$ and let $\sigma_Y$ be an arbitrary strategy of player Y.
    We define another (non-positional) strategy $\bar\sigma_Y$, that depends on the last two configurations, by $\bar\sigma_Y(\conf, \conf') := \sigma_Y(\bar\sigma_X(\conf))$.
    We observe that for all $\conf \in \confset_X$, it holds that $\bar\sigma_Y(\conf, \sigma_X(\conf)) = \sigma_Y(\bar\sigma_X(\conf))$.
    It follows that the play $\play_1$ induced by $\sigma_X$ and $\bar\sigma_Y$ and the play $\play_2$ induced by $\bar\sigma_X$ and $\sigma_Y$ agree on every second configuration, i.e. the configurations in $\confset_X$.
    In particular, the sequence of visited global TSO configurations is the same in both plays.
    Since $\sigma_X$ is winning, it means that $\play_1$ is winning for player X and thus also $\play_2$ is winning.
    Because we chose $\sigma_Y$ arbitrarily, it follows that $\bar\sigma_X$ is a winning strategy.

    Otherwise, if $\conf_0 \in \confset_Y$, we consider the successors of $\conf_0$ instead.
    We note that $\bar\sigma_X$ must also be a winning strategy for each $\conf \in \post(\conf_0)$.
    But then, we can apply the previous arguments to each of those configurations and conclude that $\bar\sigma_X$ is a winning strategy for all of them.
    Thus, it is also a winning strategy for $\conf_0$.
\end{proof}

We conclude that if player X has a winning strategy $\sigma_X$, then she also has a winning strategy $\bar\sigma_X$ where she empties the buffers before every turn in which she does not perform a read operation.
By symmetry, the same holds true for player Y.
Thus, we can limit our analysis to this type of strategies.
We see that the number of messages in the buffers is bounded:
Suppose that the game is in configuration $\conf \in \confset_X$.
Then, $\bar\sigma_X$ either empties the buffer and adds at most one new message, or it performs a transition due to a read instruction, which does not increase the size of the buffers.
The analogous argumentation holds for player Y.
Hence, we can reduce the game to a game on bounded buffers, which is finite state and thus decidable.

Given the configuration $\conf_0$ as above, we construct a finite game $\game' = \tuple{ \confset', \confset_X', \confset_Y', \to', \confset_F'}$ as follows.
The set $\confset_X'$ contains all configurations from $\confset_X$ which have at most as many buffer messages than $\conf_0$ (or at most one message, if $\conf_0$ has empty buffers):
$$\confset_X' := \Set{ \conf \in \confset_X \mid \sizeof{\buffermap\of\conf} \leq \max\set{1, \sizeof{\buffermap\of{\conf_0}}} } \qquad \text{where} \qquad \sizeof\buffermap = \sum_{\pid\in\indexset} \sizeof{\buffermap\of\pid}$$
The set $\confset_Y'$ is defined accordingly.
Note that both sets are finite.
Lastly, $\to'$ is defined as the restriction of $\to$ to configurations of $\game'$, and $\confset_F' := \confset_F \cap \confset_A'$.
We define $\bar\sigma_X'$ to be the restriction of $\bar\sigma_X$ to $\confset_X'$.
Since $\bar\sigma_X'(\conf) \in \confset_Y'$ for all $\conf \in \confset_X'$, $\bar\sigma_X'$ is indeed a valid strategy for $\game'$.
In particular, it is the restriction of $\bar\sigma_X$ to $\game'$.

\begin{clm}
\label{claim:bb2}
    $\bar\sigma_X'$ is a winning strategy for $\conf_0$ in $\game'$.
\end{clm}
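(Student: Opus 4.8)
The plan is to mirror the structure of the proof of \autoref{claim:ab2}, adapting it to the bounded-buffer configuration set $\confset'$ of this group. First I would confirm that $\bar\sigma_X'$ is a genuine strategy of $\game'$, i.e. that $\bar\sigma_X'(\conf) = \bar\sigma_X(\conf) \in \confset_Y'$ for every $\conf \in \confset_X'$. This is precisely the buffer bound already observed before the claim: in the non-read case $\bar\sigma_X$ first empties the buffers and then appends at most one message, so the target carries at most one message; in the read case the buffer size does not grow; and in the $\arw$ case the buffer is empty afterwards. Hence $\bar\sigma_X(\conf)$ always has at most $\max\set{1, \sizeof{\buffermap\of{\conf_0}}}$ messages and therefore lies in $\confset_Y'$.

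Next I would fix an arbitrary strategy $\sigma_Y'$ for player Y in $\game'$ and choose any extension $\sigma_Y$ of it to $\game$. The core of the argument is to show that the play $\play' := \play(\conf_0, \bar\sigma_X', \sigma_Y')$ taken inside $\game'$ coincides step by step with the play $\play := \play(\conf_0, \bar\sigma_X, \sigma_Y)$ taken inside $\game$, and that both remain within $\confset'$. I would prove this by induction on the length of the common prefix. The base case holds because $\conf_0 \in \confset'$. For the inductive step, whenever the current configuration $\conf$ lies in $\confset_X'$, both plays continue to $\bar\sigma_X(\conf) = \bar\sigma_X'(\conf) \in \confset_Y'$ by the first step; whenever $\conf$ lies in $\confset_Y'$, the move of $\sigma_Y$ agrees with that of $\sigma_Y'$ (since $\sigma_Y$ extends $\sigma_Y'$ and $\conf \in \confset_Y'$), and this move leads into $\confset_X'$ because $\sigma_Y'$ is a strategy of $\game'$. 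As $\to'$ is the restriction of $\to$ to $\confset'$, the two transitions are literally identical, so the prefixes stay equal and inside $\confset'$.

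From here the conclusion is immediate: by \autoref{claim:bb1}, $\bar\sigma_X$ is winning in $\game$, so $\play$ never visits $\confset_F$. Since $\play = \play'$ and the two games agree on final configurations within $\confset'$ (because $\confset_F' = \confset_F \cap \confset_A'$), the play $\play'$ is winning for player X in $\game'$ as well. As $\sigma_Y'$ was arbitrary, $\bar\sigma_X'$ is a winning strategy from $\conf_0$ in $\game'$.

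I expect the only delicate point to be verifying that player Y cannot escape $\confset'$ during the coincident play. This holds not because Y is globally confined to small buffers — in $\game$ a careless $\sigma_Y$ could certainly overflow the bound by writing without updating — but because along this particular play Y only ever acts from configurations in $\confset_Y'$, where $\sigma_Y$ is forced to imitate the $\game'$-strategy $\sigma_Y'$ and therefore stays inside $\confset'$. Making explicit in the induction that this relies on $\bar\sigma_X$ keeping the buffers bounded (so that Y is always met in $\confset_Y'$) is the crux of the argument.
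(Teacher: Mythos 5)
Your proposal is correct and follows essentially the same route as the paper: restrict $\bar\sigma_X$ to $\game'$, extend an arbitrary $\game'$-strategy $\sigma_Y'$ of player Y to $\game$, observe that the two induced plays coincide, and transfer the winning property via \autoref{claim:bb1} and the agreement of $\confset_F$ and $\confset_F'$ on $\confset'$. Your explicit induction showing the coincident play never leaves $\confset'$ (and your uniform handling of $\conf_0 \in \confset_X$ versus $\conf_0 \in \confset_Y$) merely spells out details the paper leaves implicit.
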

\begin{proof}
    First, consider the case where $\conf_0 \in \confset_X$.
    Let $\sigma_Y'$ be a strategy for player Y in $\game'$ and let $\sigma_Y$ be an arbitrary extension of $\sigma_Y'$ to $\game$.
    The play $\play$ induced by $\bar\sigma_X$ and $\sigma_Y$ in $\game$ is the same as the play $\play'$ induced by $\bar\sigma_X'$ and $\sigma_Y'$ in $\game'$.
    Since $\bar\sigma_X$ is a winning strategy, $\play$ is a winning play.
    It follows that $\play'$ must also be a winning strategy.
    Since $\sigma_Y'$ was arbitrary, it follows that $\bar\sigma_X'$ is a winning strategy and $\conf_0$ is winning in $\game'$.
\end{proof}

\begin{thm}
    The safety problem for games of group II is \exptime-complete.
\end{thm}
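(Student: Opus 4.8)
The plan is to mirror the proof of \autoref{thm:ab} and show that the safety problem for $\game$ is equivalent to the safety problem for the finite game $\game'$, whence \exptime-completeness follows from \autoref{lem:finite} for membership and \autoref{cor:complexity} for hardness. Concretely, I would prove that a configuration $\conf_0$ is winning for a given player in $\game$ if and only if it is winning for that same player in $\game'$.

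The forward implication is immediate from the claims already established. \autoref{claim:bb1} converts an arbitrary winning strategy of player X into the buffer-emptying strategy $\bar\sigma_X$, and \autoref{claim:bb2} shows that its restriction $\bar\sigma_X'$ remains winning in $\game'$; together they give that if $\conf_0$ is winning for player X in $\game$, then it is winning for player X in $\game'$. Both the construction of $\game'$ and these claims are symmetric in the two players, so the identical argument yields the corresponding statement for player Y. (The case $\conf_0 \in \confset_Y$, left implicit in \autoref{claim:bb2}, reduces to the covered case by passing to the successors $\post(\conf_0) \subseteq \confset_X$, which are again winning for player X.)

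The hard part is the reverse implication, and this is exactly where group II departs from group I. In group I one deliberately puts $\post(\confset_Y') \subseteq \confset_X'$, so that player Y can never leave $\game'$ and an arbitrary extension of a $\game'$-winning strategy stays winning, as in \autoref{claim:ab3}. Here this breaks: from a configuration in $\confset_Y'$ whose buffers are already at the bound $B := \max\set{1, \sizeof{\buffermap\of{\conf_0}}}$, player Y can perform a write without first updating and escape to a configuration of buffer size $B+1$ lying outside $\game'$, over which an arbitrary extension of the strategy gives no control. Instead of repairing the extension argument, I would obtain the reverse implication for free from determinacy. By \autoref{lem:positional}, $\conf_0$ is winning for exactly one player in $\game$ and for exactly one player in $\game'$. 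If $\conf_0$ is winning for player X in $\game'$ but were winning for player Y in $\game$, then the forward implication applied to player Y would make $\conf_0$ winning for player Y in $\game'$ as well, contradicting determinacy of $\game'$; hence $\conf_0$ is winning for player X in $\game$, and symmetrically with the roles exchanged. Combined with the forward implication, this establishes the desired equivalence of the two safety problems.

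Finally I would read off the complexity. Since buffers in $\game'$ are bounded by $B$, which is linear in the size of the input configuration $\conf_0$, the game $\game'$ is finite with exponentially many configurations, just as in \autoref{thm:ab}. \autoref{lem:finite} then places the safety problem for $\game'$, and therefore for $\game$, in \exptime. Hardness transfers directly from \autoref{cor:complexity}, whose reduction uses a single buffer-free process and hence already lives inside every variant of TSO game, in particular this one. Therefore the safety problem for games of group II is \exptime-complete.
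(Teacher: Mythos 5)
Your proposal is correct and takes essentially the same route as the paper: the forward implication for whichever player wins in $\game$ comes from \autoref{claim:bb1} and \autoref{claim:bb2}, and the equivalence with the finite game $\game'$ then yields \exptime-completeness via \autoref{lem:finite} and \autoref{cor:complexity}. The only difference is presentational: you make explicit the determinacy argument (via \autoref{lem:positional}) that the paper leaves implicit when it passes from the two forward implications to the equivalence of the safety problems, and that is indeed the correct reading, since the paper proves no group-II analogue of \autoref{claim:ab3}.
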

\begin{proof}
    By \autoref{claim:bb1} and \autoref{claim:bb2}, if a configuration $\conf_0$ is winning for player A in game $\game$, then it is also winning in $\game'$.
    The same holds true for player B.
    Thus, the safety problem for $\game$ is equivalent to the safety problem for $\game'$.
    Similar to the games of group I, $\game'$ is finite and has exponentially many configurations.
    By \autoref{lem:finite} and \autoref{cor:complexity}, we can again conclude that the safety problem is \exptime-complete.
\end{proof}

\section{Group III}
\label{sec:group-III}

This group consists of all games where exactly one player has control over the buffer updates, and additionally the game where both players are allowed to update buffer messages \emph{after} their own move.
Intuitively, all of them have in common that the TSO program can attribute a buffer update to one specific player.
If only one player can update messages, this is clear.
In the other game, the first player who observes that a buffer message has reached the memory is not the one who has performed the buffer update.
Thus, the program is able to punish misbehaviour, i.e. not following protocols or losing messages.

We will show that the safety problem is undecidable for this group of games.
To accomplish that, we reduce the state reachability problem of PCS to the safety problem of each game.
Since the former problem is undecidable, so is the latter.

The case where player A is allowed to perform buffer updates at any time is called the \emph{A-TSO game}.
It is explained in detail in the following.
The other cases work similar, but require slightly different program constructions.
They are presented in the appendix.

\medskip

Consider the A-TSO game, i.e. the case where player A can update messages at any time, but player B can never do so.
Given a PCS $\channelsystem = \tuple{ \channelstateset, \channelmessageset, \transition }$ and a final state $\channelstate_F \in \channelstateset$, we construct a TSO program $\program$ that simulates $\channelsystem$.
We design the program such that $\channelstate_F$ is reachable in $\channelsystem$ if and only if player B wins the safety game induced by $\program$.
Thus, the construction gives her the initiative to decide which transitions of $\channelsystem$ will be simulated.
Meanwhile, the task of player A is to take care of the buffer updates.

$\program$ consists of three processes $\process^1$, $\process^2$ and $\process^3$, that operate on the variables $\set{ \xwr, \xrd, \yvar }$ over the domain $\channelmessageset \uplus \set{ 0, 1, \bot }$.
The first process simulates the control flow and the message channel of the PCS $\channelsystem$.
The second process provides a mean to read from the channel.
The only task of the third process is to prevent deadlocks, or rather to make any deadlocked player lose.
$\process^3$ achieves this with four states: the initial state, an intermediate state, and one winning state for each player, respectively.
If one of the players cannot move in both $\process^1$ and $\process^2$, they have to take a transition in $\process^3$.
From the initial state of this process, there exists only one outgoing transition, which is to the intermediate state.
From there, the other player can move to her respective winning state and the process will only self-loop from then on.
For player A, her state is winning because she can refuse to update any messages, which will ensure that player B keeps being deadlocked in $\process^1$ and $\process^2$.
For player B, her state simply is contained in $\stateset_F^\program$.
In the following, we will mostly omit $\process^3$ from the analysis and just assume that both players avoid reaching a configuration where they cannot take any transition in either $\process^1$ or $\process^2$.

As mentioned above, we will construct $\process^1$ and $\process^2$ to simulate the perfect channel system in a way that gives player B the control about which channel operation will be simulated.
To achieve this, each channel operation will need an even number of transitions to be simulated in $\program$.
Since player B starts the game, this means that after every fully completed simulation step, it is again her turn and she can initiate another simulation step as she pleases.
Furthermore, during the simulation of a skip or send operation, we want to prevent player A from executing $\process^2$, since this process is only needed for the receive operation.
Suppose that we want to block player A from taking a transition $\state \to[\instr] \state'$.
We add a new transition $\state' \to[\nop] \state_F$, where $\state_F \in \statemap_F^\program$.
Hence, reaching $\state'$ is immediately losing for player A, since player B can respond by moving to $\state_F$.

Next, we will describe how $\process^1$ and $\process^2$ simulate the perfect channel system $\channelsystem$.
For each transition in $\channelsystem$, we construct a sequence of transitions in $\process^1$ that simulates both the state change and the channel behaviour of the $\channelsystem$-transition.
To achieve this, $\process^1$ uses its buffer to store the messages of the PCS's channel.
In particular, to simulate a send operation $!\channelmessage$, $\process^1$ adds the message $\tuple{\xwr, \channelmessage}$ to its buffer.
For receive operations, $\process^1$ cannot read its own oldest buffer message, since it is overshadowed by the more recent messages.
Thus, the program uses $\process^2$ to read the message from memory and copies it to the variable $\xrd$, where it can be read by $\process^1$.
We call the combination of reading a message $\channelmessage$ from $\xwr$ and writing it to $\xrd$ the \emph{rotation} of $\channelmessage$.

While this is sufficient to simulate all behaviours of the PCS, it also allows for additional behaviour that is not captured by $\channelsystem$.
More precisely, we need to ensure that each channel message is received \emph{once and only once}.
Equivalently, we need to prevent the \emph{loss} and \emph{duplication} of messages.
This can happen due to multiple reasons.

The first phenomenon that allows the loss of messages is the seeming lossiness of the TSO buffer.
Although it is not strictly lossy, it can appear so:
Consider an execution of $\program$ that simulates two send operations $!\channelmessage_1$ and $!\channelmessage_2$, i.e. $\process^1$ adds $\tuple{\xwr, \channelmessage_1}$ and $\tuple{\xwr, \channelmessage_2}$ to its buffer.
Assume that player A decides to update both messages to the memory, without $\process^2$ performing a message rotation in between.
The first message $\channelmessage_1$ is overwritten by the second message $\channelmessage_2$ and is lost beyond recovery.

To prevent this, we extend the construction of $\process^1$ such that it inserts an auxiliary message $\tuple{\yvar, 1}$ into its buffer after the simulation of each send operation.
After a message rotation, that is, after $\process^2$ copied a message from $\xwr$ to $\xrd$, the process then resets the value of $\xwr$ to its initial value $\bot$.
Next, the process checks that $\yvar$ contains the value $0$, which indicates that only one message was updated to the memory.
Now, player A is allowed to update exactly one $\tuple{\yvar, 1}$ buffer message, after which $\process^2$ resets $\yvar$ to $0$.
To ensure that player A has actually updated only one message in this step, $\process^2$ then checks that $\xwr$ is still empty.
Since player A is exclusively responsible for buffer updates, $\process^2$ deadlocks her whenever one of these checks fails.

In the next scenario, we discover a different way of message loss.
Consider again an execution of $\program$ that simulates two send operations $!\channelmessage_1$ and $!\channelmessage_2$.
Assume Player A updates $\channelmessage_1$ to the memory and $\process^2$ performs a message rotation.
Immediately afterwards, the same happens to $\channelmessage_2$, without $\process^1$ simulating a receive operation in between.
Again, $\channelmessage_1$ is overwritten by $\channelmessage_2$ before being received, thus it is lost.

Player A is prevented from losing a message in this way by disallowing her to perform a complete message rotation (including the update of one $\tuple{\yvar,1}$-message and the reset of the variables) entirely on her own.
More precisely, we add a winning transition for player B to $\process^2$ that she can take if and only if player A is the one initiating the update of $\tuple{\yvar,1}$.
On the other hand, player A can prevent player B from performing two rotations right after each other by refusing to update the next buffer message until $\process^1$ initiates the simulation of a receive operation.

Lastly, we investigate message duplication.
This occurs if $\process^1$ simulates two receive operations without $\process^2$ performing a message rotation in between.
In this case, the most recently rotated message is received twice.

The program prevents this by blocking $\process^1$ from progressing after a receive operation until $\process^2$ has finished a full rotation.
In detail, at the very end of the message rotation and $\tuple{\yvar,1}$-update, $\process^2$ reset the value of $\xrd$ to its initial value $\bot$.
After simulating a receive operation, $\process^1$ is blocked until it can read this value from memory.

This concludes the mechanisms implemented to ensure that each channel message is received \emph{once and only once}.
Thus, we have constructed an A-TSO game that simulates a perfect channel system.
We summarise our results in the following theorem.
The formal proof can be found in Appendix \ref{apx:atso}.

\begin{thm}
\label{thm:atso}
    The safety problem for the A-TSO game is undecidable.
\end{thm}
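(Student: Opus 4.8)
The plan is to reduce the (undecidable) state reachability problem for perfect channel systems to the safety problem for the A-TSO game, thereby transferring undecidability. Given a PCS $\channelsystem = \tuple{\channelstateset, \channelmessageset, \transition}$ and a target state $\channelstate_F$, I would first make fully formal the construction of the three-process program $\program$ sketched above: $\process^1$ carries the PCS control state in its local state and stores the channel contents in its store buffer as $\tuple{\xwr, \channelmessage}$ messages interleaved with auxiliary $\tuple{\yvar, 1}$ markers; $\process^2$ implements the rotation protocol that copies a message from $\xwr$ to $\xrd$, consumes a single $\yvar$-marker, and resets $\xwr$, $\yvar$ and $\xrd$ to $\bot$; and $\process^3$ is the deadlock-resolution gadget that makes any stuck player lose. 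I would fix the set of final local states $\stateset_F^\program$ to consist of the local state simulating $\channelstate_F$ together with player B's winning state in $\process^3$ and the targets of all \emph{punishment} transitions, so that reaching $\stateset_F^\program$ is precisely what it means for player B to win; by \autoref{lem:positional} it then suffices to reason about positional strategies.

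For the forward direction, assuming $\channelstate_F$ is reachable in $\channelsystem$ via some run, I would exhibit a winning strategy for player B: she drives $\process^1$ and $\process^2$ through the gadget sequence corresponding to each channel operation along the run, and because every operation is simulated by an even number of transitions with player B moving first, control always returns to her after a completed step. The crucial observation is that player A has no beneficial deviation. Refusing to update buffer messages only deadlocks A, who then loses through $\process^3$; updating more than one message in a single step, or performing a full rotation entirely on her own, triggers one of the consistency checks in $\process^2$ and again lets player B reach a final state. I would verify, gadget by gadget, that each such misbehaviour is punished, and that the honest play of A keeps the simulation faithful to the chosen PCS run.

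The converse direction --- if player B wins, then $\channelstate_F$ is reachable in $\channelsystem$ --- is where the real content lies, and I expect it to be the main obstacle. Here I would argue that along any play consistent with a winning strategy for B in which A plays so as not to lose, the $\xwr$-messages in $\process^1$'s buffer together with the rotated-but-not-yet-received message behave exactly like the PCS channel, so that the sequence of simulated operations forms a genuine PCS run ending in $\channelstate_F$. This hinges on the central invariant that every channel message is received \emph{once and only once}. I must prove that the $\yvar$-counter mechanism rules out silent message loss (two updates overwriting one another in memory before a rotation), that B's dedicated winning transition in $\process^2$ prevents A from completing a rotation alone (the second loss scenario), and that the $\xrd$-reset that blocks $\process^1$ after a receive rules out duplication. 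Formalising these as invariants that hold at every configuration reached under any non-losing play of A, and checking that they can be neither violated nor circumvented through the update nondeterminism of TSO, is the delicate part; the remaining bookkeeping that assembles a valid PCS run from such a play is then routine. The detailed verification is deferred to \autoref{apx:atso}.
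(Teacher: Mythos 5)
Your proposal is correct and follows essentially the same route as the paper: a reduction from the (undecidable) state reachability problem for perfect channel systems, using the same three-process construction ($\process^1$ holding the channel in its store buffer with $\tuple{\yvar,1}$ markers, $\process^2$ running the rotation/consistency-check protocol, $\process^3$ as the deadlock gadget), the same punishment mechanism against deviations by player A, and the same \emph{once-and-only-once} delivery invariant driving both directions of the equivalence. The paper likewise relegates the detailed induction over plays to its appendix, so your division of labour between the sketch and the formal verification matches the paper's own.
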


\section{Group IV}
\label{sec:group-IV}

In TSO games where no player is allowed to perform any buffer updates, there is no communication between the processes at all.
A read operation of a process $\process^\pid$ on a variable $\xvar$ either reads the initial value from the shared memory, or the value of the last write of $\process^\pid$ on $\xvar$ from the buffer, if such a write operation has happened.

Thus, we are only interested in the transitions that are enabled for each process, but we do not need to care about the actual buffer content.
In particular, the information that we need to capture from the buffers and the memory is the values that each process can read from the variables, and whether a process can execute memory fence and atomic read-write instructions or not.
Together with the global state of the current configuration, this completely determines the enabled transitions in the system.

We call this concept the \emph{view} of the processes on the concurrent system and define it formally as a tuple $\view = \tuple{ \statemap, \valuemap, \fencemap }$, where:
\begin{itemize}
    \item $\statemap: \indexset \to \bigcup_{\pid \in \indexset} \stateset^\pid$ is a global state of $\program$.
    \item $\valuemap: \indexset \times \varset \to \valset$ defines which value each process reads from a variable.
    \item $\fencemap: \indexset \to \set{ \true, \false }$ represents the possibility to perform a memory fence instruction.
\end{itemize}
Given a view $\view = \tuple{ \statemap, \valuemap, \fencemap }$, we write $\statemap\of\view$, $\valuemap\of\view$ and $\fencemap\of\view$ for the global program state $\statemap$, the value state $\valuemap$ and the fence state $\fencemap$ of $\view$.

The view of a configuration $\conf$ is denoted by $\view\of\conf$ and defined in the following way.
First, $\statemap(\view\of\conf) = \statemap\of\conf$.
For all $\pid \in \indexset$ and $\xvar \in \varset$, if $\buffermap\of\conf\of\pid|_{\set\xvar\times\valset} = \tuple\xd \cdot \word$, then $\valuemap(\view\of\conf)(\pid, \xvar) = \dval$.
Otherwise, $\valuemap(\view\of\conf)(\pid, \xvar) = \memorymap\of\conf\of\xvar$.
Lastly, $\fencemap(\view\of\conf)\of\pid = \true$ if and only if $\buffermap\of\conf\of\pid = \varepsilon$.
We extend the notation to sets of configurations in the usual way, i.e. $\view(\confset') := \set{ \view(\conf) \mid \conf \in \confset'}$, and to game configurations by $\view(\conf_A) := \view\of\conf_A$ and $\view(\conf_B) := \view\of\conf_B$

If $\view(\conf) = \view(\conf')$ for some $\conf, \conf' \in \confset_\program$, a local process of $\program$ cannot differentiate between $\conf$ and $\conf'$ in the sense that the enabled transitions in both configurations are the same.
We formalise this idea by defining a bisimulation between $\game$ and a game $\hgame = \tuple{ \viewset, \viewset_A, \viewset_B, \to, \viewset_F }$ played on the views of $\game$.
Define:
$$\viewset_A := \set{ \view\of\conf_A \mid \conf_A \in \confset_A } \qquad \viewset_B := \set{ \view\of\conf_B \mid \conf_B \in \confset_B } \qquad \viewset_F := \set{ \view\of\conf_A \mid \conf_A \in \confset_F }$$
Lastly, we have $\view(\conf) \to \view(\conf')$ whenever $\conf \to \conf'$.

\begin{lem}
\label{lem:views}
    The relation $\bisim := \set{ (\conf, \view\of\conf) \mid \conf \in \confset }$ is a bisimulation between $\game$ and $\hgame$.
\end{lem}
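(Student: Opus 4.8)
The plan is to verify the four defining conditions of a bisimulation for every related pair $(\conf, \view\of\conf) \in \bisim$. Two of them are immediate. The ownership condition $\conf \in \confset_A \iff \view\of\conf \in \viewset_A$ holds because $\viewset_A$ and $\viewset_B$ are by definition the player-annotated images of $\confset_A$ and $\confset_B$ under the view map, and these annotated images are disjoint; so a configuration's view carries exactly its owner's label. The finality condition holds because membership in $\confset_F$ depends only on the global state $\statemap\of\conf$ (and the $A$-annotation), which is copied verbatim into the view: $\conf \in \confset_F \iff \exists\, \pid\colon \statemap\of\conf\of\pid \in \stateset_F^\program \iff \view\of\conf \in \viewset_F$. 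This equivalence is well defined precisely because the global state is the first component of the view, so any two configurations with equal view share it.

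The zig property is essentially built into the definition of $\hgame$. Given a game transition $\conf \to \conf_3$, the relation on views was defined so that $\view\of\conf \to \view\of{\conf_3}$; taking the matching move to be $\view\of{\conf_3}$ yields a target that is related to $\conf_3$ by $\bisim$ by definition of $\bisim$.

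The real content is the zag property. Unfolding the definition of the transition relation of $\hgame$, a move $\view\of\conf \to \conf_4$ arises from \emph{some} game transition $\conf' \to \conf''$ with $\view\of{\conf'} = \view\of\conf$ and $\conf_4 = \view\of{\conf''}$. Hence the whole argument reduces to a single \emph{replay lemma}: if $\view\of\conf = \view\of{\conf'}$ and the transition $\conf' \to \conf''$ is induced by an instruction $\instr_\pid$, then $\instr_\pid$ is also enabled at $\conf$, giving some $\conf \to \conf_3$, and moreover $\view\of{\conf_3} = \view\of{\conf''}$. I would prove this by case analysis on $\instr$. Enabledness is the easy half and is exactly what the informal discussion establishes: a read $\rd\of\xd$ is enabled iff $\valuemap(\view\of\conf)(\pid, \xvar) = \dval$, both $\arw$ and $\mf$ require $\fencemap(\view\of\conf)\of\pid = \true$ (with the extra constraint $\valuemap(\view\of\conf)(\pid,\xvar) = \dval$ for $\arw$), and $\wr$ and $\nop$ are always enabled, so enabledness is a function of the view alone. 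For the effect on the view, the cases $\rd$ and $\nop$ are trivial (only $\statemap\of\pid$ advances), and $\wr\of\xd$ is routine (the only view changes are $\statemap\of\pid$, the entry $\valuemap(\cdot)(\pid,\xvar)$ becoming $\dval$, and $\fencemap(\cdot)\of\pid$ becoming $\false$, none of which reference data outside the view).

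The step I expect to be the main obstacle is the atomic read-write, the only instruction that rewrites the shared memory and can therefore alter the value read by \emph{other} processes. Executing $\arw\of\xdd$ at $\pid$ sets $\memorymap\of\xvar$ to $\dval'$; since $\pid$'s buffer is empty its old memory value $\dval$ is pinned down by $\valuemap(\view\of\conf)(\pid,\xvar)$, and its new read value $\dval'$ is determined. The difficulty is each $\pid' \neq \pid$: I would split on its old value $\valuemap(\view\of\conf)(\pid',\xvar)$. If this differs from $\dval$, then $\pid'$ cannot be reading from memory, so it holds a buffered $\xvar$-write and its read value is unchanged and view-determined. The delicate subcase is when it equals $\dval$, where the view records the value but not whether $\pid'$ reads the old memory cell or a coincidentally equal buffered write, and these two situations produce \emph{different} post-$\arw$ views ($\dval'$ versus $\dval$). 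Making zag go through here is the crux, and it is where I would concentrate the effort, either by arguing that the read source is in fact recoverable from the available information or by recording a "reads-from-memory" flag per process–variable pair in the view. By contrast, the memory fence poses no such problem, as it touches neither buffers nor memory and only advances the local state of $\pid$.
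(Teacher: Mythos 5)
Your overall plan --- ownership and finality immediate, \emph{zig} built into the definition of $\hgame$, \emph{zag} reduced to a replay lemma proved by case analysis on the instruction --- is exactly the structure of the paper's proof, and your $\rd$, $\wr$, $\nop$ and $\mf$ cases match it. But the $\arw$ case that you leave open is a genuine gap, and the first escape route you suggest (that the read source is recoverable from the view) is impossible: \emph{zag} really does fail there. Concretely, let $\program$ have two processes, $\process^1$ with a single transition $q \to[\arw(\xvar,\dval,\dval')] q'$ and $\process^2$ with only a $\nop$ self-loop at its state $r$, over $\varset = \set{\xvar,\yvar}$. Let $\conf_1$ and $\conf_2$ both have global state $(q,r)$, memory $\set{\xvar \mapsto \dval,\ \yvar \mapsto e}$ and an empty buffer for $\process^1$, while $\process^2$'s buffer is the single message $\tuple{\yvar,e}$ in $\conf_1$ but $\tuple{\xvar,\dval}$ in $\conf_2$. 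Then $\view\of{\conf_1} = \view\of{\conf_2}$: the value maps agree entrywise ($\process^2$ reads $\dval$ from $\xvar$ and $e$ from $\yvar$ either way) and the fence flags agree. Executing the $\arw$ gives $\conf_1 \to \conf_3$ and $\conf_2 \to \conf_4$ with $\valuemap(\view\of{\conf_3})(2,\xvar) = \dval'$ (in $\conf_3$ process $2$ reads $\xvar$ from memory) but $\valuemap(\view\of{\conf_4})(2,\xvar) = \dval$ (in $\conf_4$ it reads its buffered write). Now take the pair $(\conf_1, \view\of{\conf_1}) \in \bisim$ and the $\hgame$-transition $\view\of{\conf_1} = \view\of{\conf_2} \to \view\of{\conf_4}$: the only moves out of $\conf_1$ are the $\arw$, whose target has view $\view\of{\conf_3} \neq \view\of{\conf_4}$, and $\process^2$'s $\nop$, whose target has the wrong global state. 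So no move matches, and $\bisim$ is not a bisimulation once $\program$ contains $\arw$ instructions; the lemma as stated is false in that generality (the two configurations can even be made reachable by giving $\process^2$ initial transitions that write these messages).

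This is a defect of the lemma rather than of your understanding: the paper's own proof asserts, in the $\arw$ case, that $\valuemap(\view(\conf_3)) = \valuemap(\view(\conf_4)) = \valuemap(\view(\conf_1))[(\pid,\xvar) \leftarrow \dval']$, which is precisely the claim your ``delicate subcase'' refutes --- for any process $\pid' \neq \pid$ that reads $\xvar$ from memory, its entry changes from $\dval$ to $\dval'$ as well. The correct repair is your second route: enrich the view with one bit per pair $(\pid,\xvar)$ recording whether $\process^\pid$ holds a pending write on $\xvar$ (this subsumes $\fencemap$, since a buffer is empty if and only if all of that process's bits are cleared). With the enriched view, enabledness and the effect of every instruction, including $\arw$, are functions of the view alone, so your replay lemma and hence the bisimulation go through in all cases; and since the enriched view space is still only exponentially large, the \exptime{} upper bound for group IV is unaffected. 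Alternatively, for programs without $\arw$ (such as the one used for \exptime-hardness, which uses neither $\arw$ nor $\mf$), memory is immutable in group IV games and your argument is already complete as written.
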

\begin{proof}
    We need to show that for all $(\conf_1, \view_2 := \view(\conf_1)) \in \bisim$:
    \begin{itemize}
        \item For all $\conf_1 \to \conf_3$ in $\game$, there is $\view_2 \to \view_4$ in $\hgame$ with $\conf_3 \approx \view_4$.
        \item For all $\view_2 \to \view_4$ in $\game$, there is $\conf_1 \to \conf_3$ in $\hgame$ with $\conf_3 \approx \view_4$.
        \item $\conf_1 \in \confset^\game_A$ if and only if $\view_2 \in \confset^\hgame_A$.
        \item $\conf_1 \in \confset^\game_F$ if and only if $\view_2 \in \confset^\hgame_F$.
    \end{itemize}
    The first property is trivially fulfilled since $\view(\conf_1) \to \view(\conf_3)$ for any $\conf_1 \to \conf_3$ by definition of $\hgame$.

    For the second property, suppose that $\view_2 \to \view_4$ is due to some transition $\conf_2 \to[\instr_\pid] \conf_4$.
    We first show that $\instr_\pid$ is enabled at $\conf_1$.
    Since $\view(\conf_1) = \view(\conf_2)$, it holds that $\statemap(\conf_1) = \statemap(\conf_2)$.
    Furthermore, if $\instr_\pid = \rd\of\xd_\pid$, then $\valuemap(\view(\conf_1))(\pid, \xvar) = \valuemap(\view(\conf_2))(\pid, \xvar) = \dval$.
    Also, if $\instr_\pid = \mf_\pid$, then $\buffermap(\conf_2)\of\pid = \varepsilon$ and since $\fencemap(\view(\conf_1))(\pid) = \fencemap(\view(\conf_2))(\pid) = \true$ it follows that $\buffermap(\conf_1)\of\pid = \varepsilon$.
    Similarly, if $\instr_\pid = \arw\of\xdd_\pid$, then $\valuemap(\view(\conf_1))(\pid, \xvar) = \valuemap(\view(\conf_2))(\pid, \xvar) = \dval$ and $\buffermap(\conf_1)\of\pid = \buffermap(\conf_2)\of\pid = \varepsilon$.
    From these considerations and the definition of the TSO semantics (see \autoref{fig:tso-semantics}), it follows that $\instr_\pid$ is indeed enabled at $\conf_1$.

    Let $\conf_3$ be the configuration obtained after performing $\instr_\pid$, i.e. $\conf_1 \to[\instr_\pid] \conf_3$.
    It holds that $\statemap(\conf_3) = \statemap(\conf_4) = \statemap(\conf_1)[\pid \leftarrow \statemap(\conf_4)\of\pid]$.
    If $\instr_\pid = \wr\of\xd_\pid$, then $\valuemap(\view(\conf_3)) = \valuemap(\view(\conf_4)) = \valuemap(\view(\conf_1))[(\pid, \xvar) \leftarrow \dval]$
    and $\fencemap(\view(\conf_3)) = \fencemap(\view(\conf_4)) = \fencemap(\view(\conf_1))[\pid \leftarrow \false]$.
    Similarly, if $\instr_\pid = \arw\of\xdd_\pid$, then $\valuemap(\view(\conf_3)) = \valuemap(\view(\conf_4)) = \valuemap(\view(\conf_1))[(\pid, \xvar) \leftarrow \dval']$
    but $\fencemap(\view(\conf_3)) = \fencemap(\view(\conf_4)) = \fencemap(\view(\conf_1))$.
    Otherwise, $\valuemap(\view(\conf_3)) = \valuemap(\view(\conf_4)) = \valuemap(\view(\conf_1))$
    and $\fencemap(\view(\conf_3)) = \fencemap(\view(\conf_4)) = \fencemap(\view(\conf_1))$.
    In all cases it follows that $\view(\conf_3) = \view(\conf_4) = \view_4$.

    The third and fourth properties are trivially fulfilled by the definition of $\bisim$ and $\hgame$.
\end{proof}

\begin{thm}
    The safety problem for games in group IV is \exptime-complete.
\end{thm}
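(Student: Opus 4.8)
The plan is to establish \exptime-completeness by treating membership and hardness separately, leaning entirely on the machinery already developed in this section and the preceding ones.

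For membership, I would invoke \autoref{lem:views}, which exhibits a bisimulation $\bisim$ between the group-IV game $\game$ and the view game $\hgame = \tuple{ \viewset, \viewset_A, \viewset_B, \to, \viewset_F }$. By \autoref{lem:bisim}, a configuration $\conf$ is winning for player A in $\game$ if and only if its image $\view\of\conf$ is winning for player A in $\hgame$, so the safety problem for $\game$ reduces to the safety problem for the finite game $\hgame$. It then remains to bound the size of $\hgame$ and apply \autoref{lem:finite}. A view $\view = \tuple{\statemap, \valuemap, \fencemap}$ is determined by a global state $\statemap$ (at most $\Pi_{\pid\in\indexset}\sizeof{\stateset^\pid}$ choices), a value map $\valuemap: \indexset\times\varset\to\valset$ (at most $\sizeof\valset^{\sizeof\indexset\cdot\sizeof\varset}$ choices), and a fence map $\fencemap: \indexset\to\set{\true,\false}$ (at most $2^{\sizeof\indexset}$ choices). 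Hence $\hgame$ has $\bigO(\Pi_{\pid\in\indexset}\sizeof{\stateset^\pid}\cdot\sizeof\valset^{\sizeof\indexset\cdot\sizeof\varset}\cdot 2^{\sizeof\indexset})$ configurations, which is exponential in the size of $\program$, together with at most an exponential number of transitions. By \autoref{lem:finite} the winning regions of $\hgame$ are then computable in time linear in its size, i.e. in exponential time overall, giving \exptime-membership.

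For hardness, I would appeal directly to \autoref{cor:complexity}. The single-process program used in \autoref{thm:atm} to encode the ATM uses neither memory fences nor atomic read-writes, and its unique process always reads the last value it has written, irrespective of when (or whether) its buffer is flushed. Consequently the bisimulation of \autoref{lem:SC-TSO-bisim} and the resulting hardness argument of \autoref{cor:complexity} apply verbatim to the update-free semantics of group IV, so the safety problem for these games is already \exptime-hard.

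Combining the two bounds yields \exptime-completeness. The only conceptual step is the collapse of the a priori infinite TSO configuration space to the finite set of views together with the transfer of the winning region across the bisimulation; both of these are already supplied by \autoref{lem:views} and \autoref{lem:bisim}, so the remaining obligation is the routine counting of views carried out above, which I expect to be the least demanding part of the argument.
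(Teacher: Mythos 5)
Your proposal is correct and follows essentially the same route as the paper: membership by collapsing $\game$ to the finite view game $\hgame$ via \autoref{lem:views} and \autoref{lem:bisim} and then applying \autoref{lem:finite}, and hardness via \autoref{cor:complexity}. Your explicit count of the views and your justification that the bisimulation of \autoref{lem:SC-TSO-bisim} applies to the update-free semantics merely spell out details the paper leaves implicit (the latter is handled in the paper's general discussion preceding \autoref{cor:complexity}), so there is no substantive difference.
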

\begin{proof}
    Apply \autoref{lem:bisim} to $\bisim$ and obtain that the safety problem for $\game$ is equivalent to the safety problem of $\hgame$.
    Since there exist only exponentially many views, \exptime-completeness follows from \autoref{lem:finite} and \autoref{cor:complexity}.
\end{proof}

\section{Conclusion and Future Work}
In this work we have addressed for the first time the game problem for programs running under weak memory models in general and TSO in particular.
Surprisingly, our results show that depending on when the updates take place, the problem can turn out to be undecidable or decidable.
In fact, there is a subtle difference between the decidable (group I, II and IV) and undecidable (group III) TSO games.
For the former games, when a player is taking a turn, the system does not know who was responsible for the last update.
But for the latter games, the last update can be attributed to a specific player.
Another surprising finding is the complexity of the game problem for the groups I, II and IV which is \exptime-complete in contrast with the non-primitive recursive complexity of the reachability problem for programs running under TSO and the undecidability of the repeated reachability problem.

In future work, the games where exactly one player has control over the buffer seem to be the most natural ones to expand on.
In particular, the A-TSO game (where player A can update before and after her move) and the B-TSO game (same, but for player B).
On the other hand, the games of groups I, II and IV seem to be degenerate cases and therefore rather uninteresting.
In particular, we have shown that they are not more powerful than games on programs that follow SC semantics.

Another direction for future work is considering other memory models, such as the partial store ordering semantics, the release-acquire semantics, and the ARM semantics.
It is also interesting to define stochastic games for programs running under TSO as extension of the probabilistic TSO semantics \cite{DBLP:conf/esop/AbdullaAAGK22}.

\bibliographystyle{alphaurl}
\bibliography{bibdatabase}
\appendix
\section{Undecidability of the A-TSO game}
\label{apx:atso}

In this section, we will formally prove the correctness of the reduction presented in \autoref{sec:group-III}.
We begin with the construction of the TSO program $\program = \tuple{\process^1, \process^2, \process^3}$.
For $\process^1$, we start by adding the states of $\channelstateset$ to $\stateset^1$.
Then, for each transition of $\channelsystem$, we add some auxiliary states $\hstate_1, \hstate_2, \dots$ and transitions between them to simulate the PCS behaviour.
\autoref{fig:a-reduction-1} shows this construction for each of the transitions $\channelstate \to[\nop] \channelstate'$, $\channelstate \to[!\channelmessage] \channelstate'$ and $\channelstate \to[?\channelmessage] \channelstate'$.
Note that all the auxiliary states of $\process^1$ (labelled as $\hstate_i$) are supposed to be distinct for each transition.

Additionally, \autoref{fig:a-reduction-2} and \autoref{fig:a-reduction-3} show the process definitions of $\process^2$ and $\process^3$, respectively.
The set of final states is $\stateset_F^\program := \channelstateset_F \cup \set{ \state_F, \rstate_F }$, where $\channelstateset_F \subset \stateset^1$, $\state_F \in \stateset^2$ and $\rstate_F \in \stateset^3$.

\begin{figure}
\centering

\begin{subfigure}[b]{0.45\linewidth}
\centering
\begin{tikzpicture}[
    state/.style={},
    xscale=1,yscale=-1
]
    \node[state]	at (0,0)	(q1) {$\channelstate$};
    \node[state]	at (0,1)	(h1) {$\hstate_1$};
    \node[state]	at (0,2)	(q2) {$\channelstate'$};

    \draw[->] (q1) -- node[right] {$\nop$} (h1);
    \draw[->] (h1) -- node[right] {$\nop$} (q2);
\end{tikzpicture}
\bigskip
\caption{skip operation $\channelstate \to[\nop] \channelstate'$}
\end{subfigure}
\hfill
%
\begin{subfigure}[b]{0.45\linewidth}
\centering
\begin{tikzpicture}[
    state/.style={},
    xscale=1,yscale=-1
]
    \node[state]	at (0,0)	(q1) {$\channelstate$};
    \node[state]	at (0,1)	(h1) {$\hstate_1$};
    \node[state]	at (0,2)	(q2) {$\channelstate'$};

    \draw[->] (q1) -- node[right] {$\wr(\xwr,\channelmessage)$} (h1);
    \draw[->] (h1) -- node[right] {$\wr(\yvar,1)$} (q2);
\end{tikzpicture}
\bigskip
\caption{send operation $\channelstate \to[!\channelmessage] \channelstate'$}
\end{subfigure}

\bigskip
\bigskip

\begin{subfigure}{\linewidth}
\centering
\begin{tikzpicture}[
    state/.style={},
    xscale=1,yscale=-1
]
    \node[state]	at (0,0)	(q1) {$\channelstate$};
    \node[state]	at (0,1)	(h1) {$\hstate_1$};
    \node[state]	at (0,2)	(h2) {$\hstate_2$};
    \node[state]	at (0,3)	(h3) {$\hstate_3$};
    \node[state]	at (0,4)	(q2) {$\channelstate'$};

    \node[state]	at (4,1)	(h4) {$\hstate_4$};
    \node[state]	at (4,2)	(h5) {$\hstate_5$};
    \node[state]	at (4,3)	(h6) {$\hstate_6$};

    \draw[->] (q1) -- node[right] {$\nop$} (h1);
    \draw[->] (h1) -- node[right] {$\rd(\xrd,\channelmessage)$} (h2);
    \draw[->] (h2) -- node[right] {$\rd(\xrd,\bot)$} (h3);
    \draw[->] (h3) -- node[right] {$\nop$} (q2);

    \draw[->] (h1) -- node[above] {$\mf$} (h4);
    \draw[->] (h4) -- node[right] {$\nop$} (h5);
    \draw[->] (h5) -- node[right] {$\rd(\xwr,\bot)$} (h6);
\end{tikzpicture}
\bigskip
\caption{receive operation $\channelstate \to[?\channelmessage] \channelstate'$}
\end{subfigure}

\caption{$\process^1$ of the A-TSO reduction from PCS}
\label{fig:a-reduction-1}
\end{figure}
\begin{figure}
\centering
\begin{tikzpicture}[
    state/.style={},
    xscale=1,yscale=-1
]
    \node[state] at (0,0) (q1) {$\state_1$};
    \node[state] at (0,1) (q2) {$\state_\channelmessage$};
    \node[state] at (0,2) (q3) {$\state_3$};
    \node[state] at (0,3) (q4) {$\state_4$};
    \node[state] at (0,4) (q5) {$\state_5$};
    \node[state] at (0,5) (q6) {$\state_6$};
    \node[state] at (0,6) (q7) {$\state_7$};
    \node[state] at (0,7) (q8) {$\state_8$};
    \node[state] at (0,8) (q9) {$\state_9$};
    \node[state] at (0,9) (q10) {$\state_{10}$};
    \node[state] at (0,10) (q1') {$\state_1$};

    \draw[->] (q1) -- node[right] {$\rd(\xwr,\channelmessage)$} (q2);
    \draw[->] (q2) -- node[right] {$\wr(\xrd,\channelmessage)$} (q3);
    \draw[->] (q3) -- node[right] {$\wr(\xwr,\bot)$} (q4);
    \draw[->] (q4) -- node[right] {$\mf$} (q5);
    \draw[->] (q5) -- node[right] {$\nop$} (q6);
    \draw[->] (q6) -- node[right] {$\rd(\yvar,1)$} (q7);
    \draw[->] (q7) -- node[right] {$\wr(\yvar,0)$} (q8);
    \draw[->] (q8) -- node[right] {$\mf$} (q9);
    \draw[->] (q9) -- node[right] {$\wr(\xrd,\bot)$} (q10);
    \draw[->] (q10) -- node[right] {$\mf$} (q1');

    \node[state] at (5,0) (qf) {$\state_F$};
    \draw[->] (q1) -- node[above right] {$\rd(\xwr,\channelmessage)$} (qf);

    \node[state] at (5,2) (qf) {$\state_F$};
    \draw[->] (q3) -- node[above right] {$\rd(\yvar,1)$} (qf);

    \node[state] at (5,3) (qf) {$\state_F$};
    \draw[->] (q4) -- node[above right] {$\nop$} (qf);

    \node[state] at (5,4) (qf) {$\state_F$};
    \draw[->] (q5) -- node[above right] {$\rd(\yvar,1)$} (qf);

    \node[state] at (5,8) (qf) {$\state_F$};
    \draw[->] (q9) -- node[above right] {$\rd(\xwr,\channelmessage)$} (qf);

    \draw[thick,decoration={brace, mirror},decorate] (q1.south west) -- node[left]{for all $\channelmessage \in \channelmessageset$} (q3.north west);
\end{tikzpicture}
\bigskip
\caption{$\process^2$ of the A-TSO reduction from PCS}
\label{fig:a-reduction-2}
\end{figure}
\begin{figure}
\centering
\begin{tikzpicture}[
    state/.style={},
    xscale=1,yscale=-1
]
    \node[state] at (0,0) (q1) {$\rstate_1$};
    \node[state] at (0,1) (q2) {$\rstate_2$};
    \node[state] at (-2,2) (q3) {$\rstate_3$};
    \node[state] at (2,2) (qf) {$\rstate_F$};

    \draw[->] (q1) -- node[right] {$\nop$} (q2);
    \draw[->] (q2) -- node[above left] {$\nop$} (q3);
    \draw[->] (q2) -- node[above right] {$\nop$} (qf);
    \draw[->] (q3) to[out=45,in=135,loop] node[below] {$\nop$} (q3);
    \draw[->] (qf) to[out=45,in=135,loop] node[below] {$\nop$} (qf);
\end{tikzpicture}
\bigskip
\caption{$\process^3$ of the A-TSO reduction from PCS}
\label{fig:a-reduction-3}
\end{figure}

\begin{thm}
\label{thm:equivalence-atso-pcs}
    Consider the A-TSO game, where player A can update the buffer before and after her turn, but player B can never update.
    The set of final states $\channelstateset_F$ of $\channelsystem$ is reachable from $\channelstate_0 \in \channelstateset$ if and only if player B wins the game $\game^\TSO(\program, \stateset_F^\program)$ starting from the configuration $\conf_0 := \tuple{ \tuple{\channelstate_0, \state_1, \rstate_1}, \tuple{\varepsilon, \varepsilon, \varepsilon}, \set{ \xwr \mapsto \bot, \xrd \mapsto \bot, \yvar \mapsto 0} }_B \in \confset_B$.
\end{thm}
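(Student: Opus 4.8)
The plan is to establish a tight correspondence between plays of $\game^\TSO(\program, \stateset_F^\program)$ and runs of $\channelsystem$, anchored at \emph{synchronised} configurations. Call a game configuration synchronised if $\process^2$ is in its initial state $\state_1$, $\process^3$ is in $\rstate_1$, the memory satisfies $\xwr = \xrd = \bot$ and $\yvar = 0$, and $\process^1$ is in a PCS-state $\channelstate \in \channelstateset$. To such a configuration I associate the PCS configuration $\tuple{\channelstate, \word}$, where $\word$ is the word obtained by projecting the buffer of $\process^1$ onto its $\xwr$-messages and reading them from oldest to newest; the interspersed $\tuple{\yvar, 1}$-messages are auxiliary and ignored. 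Since player B moves first and, by construction, every channel operation is simulated by an even number of alternating moves, the game is always back on player B's turn at a synchronised configuration, so she retains the initiative to choose which operation of $\channelsystem$ is simulated next.

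For the forward direction, suppose $\channelstateset_F$ is reachable by a run $\channelstate_0 \to \dots \to \channelstate_n$ with $\channelstate_n \in \channelstateset_F$, and let B follow this run. The heart of the argument is a \emph{one-step forcing lemma}: from the synchronised configuration for $\tuple{\channelstate, \word}$, if $\channelstate \to[\channeloperation] \channelstate'$ produces $\tuple{\channelstate', \word'}$, then B can move so that, against every strategy of A, the play either reaches the synchronised configuration for $\tuple{\channelstate', \word'}$ or reaches a final configuration at once. The key point is that A cannot stall: once B opens a receive (moving $\process^1$ into $\hstate_1$), A has no enabled instruction in $\process^1$ or $\process^2$ unless she first updates the oldest $\xwr$-message to memory, so either she performs the update that drives the rotation through $\process^2$, or she is deadlocked and B wins by forcing her into $\process^3$ (A takes $\rstate_1 \to \rstate_2$, and B answers $\rstate_2 \to \rstate_F$). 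Symmetrically, every way for A to lose or duplicate a message — updating two $\xwr$-messages without an intervening rotation, updating more or fewer than one $\tuple{\yvar,1}$-message, or performing a full rotation on her own — enables one of the punishment transitions of $\process^2$ into $\state_F$ on B's next turn. Iterating the lemma along the run, B forces $\process^1$ into $\channelstate_n \in \channelstateset_F$ unless she has already won, so $\conf_0$ is winning for her.

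For the converse I argue the contrapositive, using determinacy (\autoref{lem:positional}): if $\channelstateset_F$ is unreachable in $\channelsystem$, I exhibit a winning strategy for A. Player A plays \emph{faithfully} — whenever she is forced to move she performs exactly the update or rotation step prescribed by the simulation (one $\xwr$-update to start each rotation, exactly one $\tuple{\yvar,1}$-update per rotation, and no update otherwise), and she never enters $\process^2$ outside a receive. The invariant to maintain is that every synchronised configuration visited corresponds to a PCS configuration reachable from $\tuple{\channelstate_0, \emptybuffer}$. Under this discipline no message is lost or duplicated, so none of the punishment transitions of $\process^2$ is ever enabled on B's turn and B cannot reach $\state_F$; A is never deadlocked, so B cannot reach $\rstate_F$; and since the simulated PCS states stay reachable and hence outside $\channelstateset_F$, B cannot bring $\process^1$ into a final state either. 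Thus no play is winning for B, i.e. $\conf_0$ is winning for A.

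The main obstacle is the one-step forcing lemma, specifically proving that the loss/duplication detection is simultaneously \emph{complete} (for the forward direction, every deviation of A enables a B-punishment) and \emph{sound} (for the backward direction, the faithful update discipline triggers no punishment and reproduces FIFO channel behaviour exactly). This requires a careful case analysis over all interleavings of A's buffer updates with the rotation steps executed through $\process^2$, tracking the contents of $\xwr$, $\xrd$ and $\yvar$ in both memory and buffer across each partial rotation. The asymmetry that only A may update — so that any memory change is attributable to her, and any illegal one can be punished on B's next move — is precisely what makes this analysis go through, and it is the property that must be threaded through every case.
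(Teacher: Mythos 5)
Your overall architecture (anchor configurations, a forcing argument for the forward direction, punishment analysis, attribution of every update to player A) matches the paper's, and your contrapositive framing of the converse via determinacy is a legitimate, essentially equivalent variant of the paper's direct extraction of a PCS run from the play of B's winning strategy against one fixed strategy of A. However, there is a genuine gap in the forward direction: your one-step forcing lemma is false as stated. From a synchronised configuration, player B cannot prevent player A from \emph{pre-emptively} updating the oldest $\tuple{\xwr,\channelmessage}$-message and taking $\state_1 \to[\rd(\xwr,\channelmessage)] \state_\channelmessage$ in $\process^2$ on her turn; B's reply is essentially forced to be $\state_\channelmessage \to[\wr(\xrd,\channelmessage)] \state_3$, after which $\process^2$ sits in $\state_3$ with $\xwr$ and $\xrd$ holding the rotated message. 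This behaviour is not a deviation that your punishment transitions can catch --- it is exactly the prefix of an honest receive simulation, so punishing it would break the simulation itself --- and the resulting configuration is neither synchronised in your sense nor final. Moreover, B cannot restore synchronisation on her own: the resets of $\xwr$, $\xrd$ and $\yvar$ happen only along the path $\state_3 \to \state_4 \to \dots \to \state_{10} \to \state_1$ of $\process^2$, which can be traversed safely only as part of simulating a receive operation in $\process^1$. Hence the invariant ``the play visits the synchronised configuration for $\tuple{\channelstate_k, \word_k}$ or $\confset_F$'' cannot be maintained, and your induction collapses after any step in which A chooses to rotate early --- which she may do after a skip, a send, or a receive alike.

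The paper repairs precisely this point by weakening the invariant: alongside each anchor $\conf_k$ it introduces a second, half-rotated anchor $\conf_k'$ with global state $\tuple{\channelstate_k, \state_3, \rstate_1}$, the oldest $\xwr$-message removed from the buffer of $\process^1$, and memory $\xwr = \xrd = \word_k[\sizeof{\word_k}]$, and then proves by induction that B can force the play into $\conf_k$, $\conf_k'$, or $\confset_F$, arguing every case of the induction step from both possible starting anchors. Your proposal needs this (or an equivalent) enlargement of the anchor set; with it, the rest of your plan --- the punishment completeness/soundness analysis and the contrapositive converse --- goes through along the lines of the paper's proof.
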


Note that the definition of $\process^3$ secures deadlock-freedom of $\program$.
Furthermore, if one of the players has no enabled transitions in either $\process^1$ or $\process^2$ in her turn, she will lose the game:
If player A has no enabled transition, she needs to move from $\rstate_1$ to $\rstate_2$ in $\process^3$.
Player B can then respond with $\rstate_2 \to \rstate_3 \in \stateset_F^\program$, which player A cannot escape from.
Similarly, if player B needs to move to $\rstate_2$, player A can respond with $\rstate_4$.
Since player A controls the buffer updates, she can decide to not perform any buffer updates, which maintains the situation that player B can only move in $\process^3$.
Thus, the game will loop in $\rstate_4 \to \rstate_4$ forever, which means that player A wins.

In the following, we say that a player is \emph{deadlocked}, if it is her turn and there is no enabled outgoing transition in either $\process^1$ or $\process^2$.
Naturally, both players avoid being deadlocked, since it means that they lose the game, respectively.

\subsection{From $\channelsystem$-Reachability to a Winning Strategy in A-TSO}

Suppose $\channelstateset_F$ is reachable from $\channelstate_0$.
Let $\conf_0^\channelsystem \to[\channeloperation_1]\ \conf_1^\channelsystem \to[\channeloperation_2]\ \dots \to[\channeloperation_n]\ \conf_n^\channelsystem$ be a run with $\conf_k^\channelsystem = \tuple{ \channelstate_k, \word_k }$ for all $k = 1, \dots, n$ and $s_n \in \channelstateset_F$.
Here, $\word_k$ is a word over $\channelmessageset$ consisting of the letters $\word_k[1], \dots, \word_k[\sizeof{\word_k}]$.
Without loss of generality we can assume that this run does not contain any duplicate configurations.

For all $k = 0, \dots, n$, we define a game configuration $\conf_k = \tuple{ \statemap_k, \buffermap_k, \memorymap_k }_B \in \confset_B$:
\begin{enumerate}
    \item\ $\statemap_k := \tuple{ \channelstate_k, \state_1, \rstate_1 }$
    \item\ $\buffermap_k := \tuple{ \tuple{\tuple{\yvar,1}, \tuple{\xwr,\word_k[1]}, \dots, \tuple{\yvar,1}, \tuple{\xwr,\word_k[\sizeof{\word_k}]}}, \varepsilon, \varepsilon }$
    \item\ $\memorymap_k := \set{ \xwr \mapsto \bot, \xrd \mapsto \bot, \yvar \mapsto 0 }$
\end{enumerate}
The basic idea is to develop a strategy for player B that visits all configurations $\conf_1, \dots, \conf_n$.
Inspecting the construction of $\process^2$, we observe that there is no way to prevent player A from performing the buffer update and rotation of the oldest channel message whenever she decides to do so.
We account for that by introducing additional configurations $c_k' = \tuple{ \statemap_k', \buffermap_k', \memorymap_k' }_B$:
\begin{enumerate}
    \item\ $\statemap_k' := \tuple{ \channelstate_k, \state_3, \rstate_1 }$
    \item\ $\buffermap_k' := \tuple{ \tuple{\tuple{\yvar,1}, \tuple{\xwr,\word_k[1]}, \dots, \tuple{\yvar,1}, \tuple{\xwr,\word_k[\sizeof{\word_k}-1]}, \tuple{\yvar,1}}, \varepsilon, \varepsilon }$
    \item\ $\memorymap_k^{(3)} := \set{ \xwr \mapsto \word_k[\sizeof{\word_k}], \xrd \mapsto \word_k[\sizeof{\word_k}], \yvar \mapsto 0 }$
\end{enumerate}
The configurations $\conf_k'$ refer to the case where player A has updated the oldest message to the memory and also performed the message rotation.
Note that if $\word_k = \varepsilon$, then $\conf_k'$ is not defined.

As a first step, consider a situation where it is player A's turn and $\process^2$ is in state $\state_1$.
If player A decides to initiate the message rotation by following the transition $\state_1 \to[\rd(\xwr,\channelmessage)] \state_\channelmessage$ for some $\channelmessage \in \channelmessageset$, we always let player B immediately respond with $\state_\channelmessage \to[\wr(\xrd,\channelmessage)] \state_3$.
On the other hand, if $\process^2$ is already in state $\state_3$, player A cannot take any further transitions in this process, since after $\state_3 \to[\wr(\xwr,\bot)] \state_4$, player B can take the transition $\state_4 \to[\nop] \state_F$ and wins.
We will use these observations in the following argumentation.

We show by induction that starting from $\conf_0$, Player B can force a play that either visits one of $\conf_k$, $\conf_k'$ or $\confset_F$.
For $k = 0$, this clearly holds true.
So, suppose that the claim holds for some arbitrary but fixed $0 \leq k < n$.
By the induction hypothesis, we can assume that the game is either in state $\conf_k$ or $\conf_k'$.

If $\channeloperation_{k+1} = \nop$, then player B takes the transition $\channelstate_k \to[\nop] \hstate_1$ in $\process^1$.
After a possible message rotation in $\process^2$ (i.e. one move by player A and B, respectively), player A is left with only one non-losing transition, which is $\hstate_1 \to[\nop] \channelstate_{k+1}$ in $\process^1$.
Assuming that player A can prevent an immediate defeat, we now need to show that the game is either in configuration $\conf_{k+1}$ or $\conf_{k+1}'$.

First, consider the case where $\process^2$ is in state $\state_1$, which implies that the sub-run started at $\conf_k$ rather than $\conf_k'$.
If player A had updated at least one message from the buffer to the memory, $\xwr$ would now contain some channel message $\channelmessage$.
Thus, player B could immediately win by taking the transition $\state_1 \to[\rd(\xwr,\channelmessage)] \state_F$.
We conclude that the game must be in configuration $\conf_{k+1}$.

In the other case, $\process^2$ is in state $\state_3$.
It follows that either the sub-run started in $\conf_k'$ or player A has updated a message from the buffer and performed the message rotation.
Suppose she had additionally updated at least one more message, in particular a message $\tuple{\yvar,1}$.
Then, player B is again able to win immediately by following $\state_3 \to[\rd(\yvar,1)] \state_F$.
Thus, the game must be in configuration $\conf_{k+1}'$.

Next, we examine the situation where $\channeloperation_{k+1} = {!}\channelmessage$, which is very similar to the previous one.
Player B takes the transition $\channelstate_k \to[\wr(\xwr,\channelmessage)] \hstate_1$ and player A eventually responds with $\hstate_1 \to[\wr(\yvar,1)] \channelstate_{k+1}$.
Using the same argumentation as above, either the game is now in configuration $\conf_{k+1}$ or $\conf_{k+1}'$, or player B can force an immediate win.

Lastly, if $\channeloperation_{k+1} = {?}\channelmessage$, player B starts by taking the transition $\channelstate_k \to[\nop] \hstate_1$ in $\process^1$.
For player A, $\hstate_1 \to[\mf] \hstate_4$ is losing:
Since $\channeloperation_{k+1} = {?}\channelmessage$, we know that $\word_k \neq \varepsilon$ and after flushing the buffer, the last character of $\word_k$ is written to $\xwr$.
Thus, after player B responds with $\hstate_4 \to[\nop] \hstate_5$, player A is effectively deadlocked, since the outgoing transition in $\process^1$ is disabled and all actions in $\process^2$ will eventually lead to player B reaching $\state_F$.

We conclude that player A updates and rotates the oldest message if not already done so, and then proceeds with $\hstate_1 \to[\rd(\xrd,\channelmessage)] \hstate_2$ in $\process^1$.
The only possible continuation for both players is now in $\process^2$:
$$ \state_3 \to[\wr(\xwr,\bot)] \state_4 \to[\mf] \state_5 \to[\nop] \state_6 \to[\rd(\yvar,1)] \state_7 \to[\wr(\yvar,0)] \state_8 \to[\mf] \state_9 \to[\wr(\xrd,\bot)] \state_{10} $$
Because of $\state_5 \to[\rd(\yvar,1)] \state_F$, player A cannot update the pending $\tuple{\yvar,1}$-message before reaching state $\state_6$, but then definitely has to do so to enable the next transition.
In $\state_9$, $\yvar$ is reset to $0$ again.
Furthermore, the transition $\state_9 \to[\rd(\xwr,\channelmessage)] \state_F$ ensures that no other message was updated.
When $\state_{10}$ is reached, player A needs to flush the buffer of $\process^2$ again, to enable both $\hstate_2 \to[\rd(\xrd,\bot)] \hstate_3$ in $\process^1$ and $\state_9 \to[\mf] \state_{10}$ in $\process^2$.
She then takes one of those two transitions and player B takes the other one.
Eventually (after maybe the update and rotation of the next message), player A will take the transition $\hstate_3 \to[\nop] \channelstate_{k+1}$ in $\process^1$.

At this point, $\process^1$ moved from $\channelstate_k$ to $\channelstate_{k+1}$ and $\process^2$ is back in $\state_1$ or $\state_3$.
Exactly one $\tuple{\yvar,1}$-message was removed from the buffer and the memory values of $\yvar$ and $\xrd$ were reset to $0$ and $\bot$, respectively.
Using the same argumentation as previously, we conclude that the game is now either in configuration $\conf_{k+1}$ or $\conf_{k+1}'$, or player B can force an immediate win.

This concludes the proof by induction.
In summary, starting from $\conf_0$, player B can force a play that reaches $\confset_B$ or one of the configurations $\conf_n$ or $\conf_n'$.
In both of these configurations, $\process^1$ is in the local state $\channelstate_n$, which is a final state of $\channelsystem$.
Thus, $\conf_n, \conf_n' \in \confset_B$ and player B wins in any case.

\subsection{From a Winning Strategy in A-TSO to $\channelsystem$-Reachability}

For the other direction, suppose that player B has a winning strategy.
Consider a strategy of player A that avoids reaching both $\state_F \in \stateset^2 \cap \stateset_F^\program$ and $\rstate_F \in \stateset^3 \cap \stateset_F^\program$ whenever possible.
Furthermore, assume that player A prefers to not update any messages and to move in $\process^2$, as long as it does not contradict the previous statement.

Let $\channelstate_0, \dots, \channelstate_n$ be the sequence of channel states that are visited by $\process^1$ during the run induced by these two strategies.
Since player B uses a winning strategy, this sequence is finite.
We will show by induction that for each $k = 0, \dots, n$, the run contains a game configuration $\conf_k = \tuple{ \statemap_k, \buffermap_k, \memorymap_k }_B \in \confset_B$ such that:
\begin{enumerate}
    \item\ $\statemap_k := \tuple{ \channelstate_k, \state_1, \rstate_1 }$
    \item\ $\buffermap_k := \tuple{ \tuple{\tuple{\yvar,1}, \tuple{\xwr,\word_k[1]}, \dots, \tuple{\yvar,1}, \tuple{\xwr,\word_k[\sizeof{\word_k}]}}, \varepsilon, \varepsilon }$, where $\word_k \in \channelmessageset\kstar$
    \item\ $\memorymap_k := \set{ \xwr \mapsto \bot, \xrd \mapsto \bot, \yvar \mapsto 0 }$
    \item\ If $k > 0$, there is a label $\channeloperation_k$ such that $\tuple{ \channelstate_{k-1}, \word_{k-1} } \to[\channeloperation_k]\ \tuple{ \channelstate_k, \word_k }$.
\end{enumerate}
For $k = 0$, this clearly holds true.
So, suppose that the claim holds for some arbitrary but fixed $0 \leq k < n$.

Starting in $\conf_k$, player B cannot move in $\process^2$, since $\memorymap(\xwr) = \bot$, $\buffermap(2) = \varepsilon$ and the only outgoing transitions from $\statemap(2) = \state_1$ are of the form $\rd(\xwr,\channelmessage)$ for $\channelmessage \in \channelmessageset$.
Thus, player B must first move in $\process^1$ instead.
Looking at the construction of $\process^1$, we see that the sub-run from $\conf_k$ to (the very next occurence of) $\channelstate_{k+1}$ takes a sequence of transitions that were added to $\program$ due to some unique transition $\channelstate_k \to[\channeloperation_{k+1}] \channelstate_{k+1}$ of $\channelsystem$.

If $\channeloperation_{k+1} = \nop$, we conclude that player B takes the transition $\channelstate_k \to[\nop]_\program \hstate_1$.
Player A has to respond with the only enabled transition in the resulting configuration, which is $\hstate_1 \to[\nop]_\program \channelstate_{{k+1}}$.
The game is now in a configuration with states $\stateset_k$, as defined above.
Furthermore, we conclude that the buffer must be $\buffermap_{k+1}$, where $\word_{k+1} := \word_k$, and the memory is $\memorymap_{k+1}$.
Finally, we have that $\tuple{ \channelstate_k, \word_k } \to[\nop] \tuple{ \channelstate_{k+1}, \word_{k+1} }$.
In other words, the game is now in position $\conf_k$.

If $\channeloperation_{k+1} = {!}\channelmessage$, we argue in the very same way as in the previous case that the run must have followed the transitions $\channelstate_k \to[\wr(\xwr,\channelmessage)]_\program \hstate_1 \to[\wr(\yvar,1)]_\program \channelstate_{k+11}$ in $\process^1$.
Similarly, the game is now in configuration $\conf_{k+1} = \tuple{ \statemap_{k+1}, \buffermap_{k+1}, \memorymap_{k+1} }_B$ with $\word_{k+1} := \channelmessage \bullet \word_k$.

Lastly, if $\channeloperation_{k+1} = {?}\channelmessage$, then player B takes $\channelstate_k \to[\nop]_\program \hstate_1$.
Assume that $\word_k = \varepsilon$, i.e. the buffer of $\process^1$ is empty.
Then, player A cannot update any message to the memory, which means that the transition $\hstate_1 \to[\mf]_\program \hstate_4$ is the only one that is enabled.
This leads to the path $\hstate_1 \to[\mf]_\program \hstate_4 \to[\nop]_\program \hstate_5 \to[\rd(\xwr,\bot)]_\program \hstate_6$, which ends in a deadlocked configuration for player B.
Since we already know that player B is winning, we conclude that the assumption was wrong and there is at least one pair of messages in the buffer of $\process^1$.

Thus, player A instead performs one buffer update of the oldest message $\tuple{\xwr, \channelmessage}$, where $\channelmessage$ is the last character of $\word_k$.
This enables $\state_1 \to[\rd(\xwr,\channelmessage)] \state_\channelmessage$ in $\process^2$, which player A executes.
Player B responds with $\state_\channelmessage \to[\wr(\xrd,\channelmessage)] \state_3$ in $\process^2$.
Player A immediately updates this message to the memory, which enables $\hstate_1 \to[\rd(\xrd,\channelmessage)] \hstate_2$ in $\process^1$, which she takes.
Note that this is her only choice since $\state_3 \to[\wr(\xwr,\bot)] \state_4$ opens up the possibility for player B to end the game by moving to $\state_F$.

Now, $\process^1$ is blocked again and the run continues in $\process^2$.
After
$$\state_3 \to[\wr(\xwr,\bot)] \state_4 \to[\mf] \state_5 \to[\nop] \state_6 \ ,$$
player A updates the oldest buffer message of $\process^1$, which is $\tuple{\yvar,1}$.
This is needed to enable
$$ \state_6 \to[\rd(\yvar,1)] \state_7 \to[\wr(\yvar,0)] \state_8 \to[\mf] \state_9 \to[\wr(\xrd,\bot)] \state_{10} \ .$$
Now, player A empties the buffer of $\process^2$ to execute the transition $\state_{10} \to[\mf] \state_1$.
This is followed by $\hstate_2 \to[\rd(\xrd,\bot)] \hstate_3 \to[\nop] \channelstate_{k+1}$ in $\process^1$.

We see that $\process^1$ is now in state $\channelstate_{k+1}$ and $\process^2$ is back in state $\state_1$, i.e. the process state is $\statemap_{k+1}$.
Since all memory values have also been reset to their default values, we conclude that the buffer state of the current game configuration is $\buffermap_{k+1}$, where $\word_{k+1} \bullet \channelmessage := \word_k$, and the memory is $\memorymap_{k+1}$.
Furthermore, the existence of the transition $\channelstate_k \to[?\channelmessage] \channelstate_{k+1}$ follows from the construction of $\program$, and it is enabled at the configuration $\tuple{\channelstate_k, \word_k}$ of $\channelsystem$, since $\channelmessage$ is the last character in $\word_k$.

This concludes the proof by induction.
In summary, we showed that $\channelsystem$ can reach the state $\channelstate_n$ starting from $\channelstate_0$.
What is left to show is that $\channelstate_n \in \channelstateset_F$.
Consider the continuation of the run after $\conf_n$.
If player B starts to simulate a skip or send operation, the run will inadvertedly visit another state of $\channelstateset$ after two moves, which is a contradiction, since we know that $\channelstate_n$ is the last such state that is visited.
Otherwise, if player B starts to simulate a receive operation, there are again two cases to consider.
If the buffer of $\process^1$ is empty, we have already seen that player A wins, which we know is not the case.
Otherwise, player A can just proceed as usually and continue to simulate the receive operation in $\process^1$ and $\process^2$.
In any case, player B has no opportunity to force the play into $\state_F$ or into a deadlock of player A.
In particular, player A can avoid to take $\state_3 \to[\nop] \state_4$, since there are exactly four (i.e. an even number of) moves to take before.

We conclude that the only way that player B wins the game is that $\channelstate_n \in \stateset_F^\program$, or $\channelstate_n \in \channelstateset_F$, equivalently.
Since $\channelstate_k \to \channelstate_{k+1}$ for all $k < n$, it follows that $\channelstateset_F$ is reachable from $\channelstate_0$.

\subsection{Proof of \autoref{thm:atso}}

This follows directly from \autoref{thm:equivalence-atso-pcs} and the undecidability of the state reachability problem for perfect channel systems \cite{DBLP:journals/jacm/BrandZ83}.

\subsection{Variants of A-TSO}

In the game where player A is restricted to perform updates only before her turn, one can check that the construction still works.
However, if player A is only allowed to update after her turn, we need to adjust the construction.
This is due to the fact that in the A-TSO game, there exist transitions where she is expected to update before her move.

We can easily fix this problem by adding two auxilliary states just before each of the affected transitions, and connect them with $\nop$ instructions.
This gives player A time to update the buffer after taking the auxiliary transition.

Furthermore, we add a transition from the second auxiliary state to a sink state, that is, a state with just a self-loop.
This prevents player B from taking the first transition, since player A can then go to the sink state, which effectively disables the process.
It also ensures that whenever player A takes the first auxiliary transition, player B has to respond by taking the second one.

We conclude that the modified construction gives rise to the same behaviour than the construction for A-TSO.
The formal proof is along the lines of \autoref{thm:equivalence-atso-pcs} or \autoref{thm:equivalence-btso-pcs}.

\section{Undecidability of the B-TSO Game}
\label{apx:btso}

In this section we prove the undecidability of the B-TSO game, which is the TSO game where player B has full control over the buffer updates.
Similar to what was presented in the A-TSO section, we construct a program consisting of two processes $\program = \tuple{\process^1, \process^2}$.
As previously, $\process^1$ starts out with all states of $\channelstateset$ and is then augmented by auxiliary states and transitions to simulate the PCS behaviour.
This is sketched in \autoref{fig:b-reduction-1}.
The construction of $\process^2$ is shown in \autoref{fig:b-reduction-2}.
Note that each state of $\process^1$ labelled as $\hstate$ is supposed to be distinct for each transition.
However, all other states with the same label, i.e. $\hstate_L$, $\hstate_F$, $\state_L$, $\state_F$ and $\state_1$, are actually representing the same state, respectively.
The visual separation was done just for clarity.
Note that $\process^2$ is deadlock-free, which makes a dedicated $\process^3$ superfluous.
The set of final states is defined as $\stateset_F^\program := \set{\state_F}$.

\begin{figure}
\centering

\begin{subfigure}[b]{0.45\linewidth}
\centering
\begin{tikzpicture}[
    state/.style={},
    xscale=1.5,yscale=-1.5
]
    \node[state]	at (0,0)	(q1) {$\channelstate$};
    \node[state]	at (0,1)	(h1) {$\hstate$};
    \node[state]	at (0,2)	(q2) {$\channelstate'$};

    \draw[->] (q1) -- node[right] {$\nop$} (h1);
    \draw[->] (h1) -- node[right] {$\nop$} (q2);

    \node[state]	at (3,1)	(hL) {$\hstate_L$};
    \draw[->] (h1) -- node[above right] {$\rd(\xrd, \channelmessage')$} (hL);
    \draw[->] (hL) to [out=315,in=45,loop] node[right] {$\nop$} (hL);
\end{tikzpicture}
\bigskip
\caption{skip operation $\channelstate \to[\nop]_\channelsystem \channelstate'$}
\end{subfigure}
\hfill
%
\begin{subfigure}[b]{0.45\linewidth}
\centering
\begin{tikzpicture}[
    state/.style={},
    xscale=1.5,yscale=-1.5
]
    \node[state]	at (0,0)	(q1) {$\channelstate$};
    \node[state]	at (0,1)	(h1) {$\hstate$};
    \node[state]	at (0,2)	(q2) {$\channelstate'$};

    \draw[->] (q1) -- node[right] {$\wr(\xwr,\channelmessage)$} (h1);
    \draw[->] (h1) -- node[right] {$\wr(\yvar,1)$} (q2);

    \node[state]	at (3,1)	(hL) {$\hstate_L$};
    \draw[->] (h1) -- node[above right] {$\rd(\xrd, \channelmessage')$} (hL);
    \draw[->] (hL) to [out=315,in=45,loop] node[right] {$\nop$} (hL);
\end{tikzpicture}
\bigskip
\caption{send operation $\channelstate \to[!\channelmessage]_\channelsystem \channelstate'$}
\end{subfigure}
\bigskip
\bigskip

\begin{subfigure}[b]{0.45\linewidth}
\centering
\begin{tikzpicture}[
    state/.style={},
    xscale=1.5,yscale=-1.5
]
    \node[state]	at (0,0)	(q1) {$\channelstate$};
    \node[state]	at (0,1)	(h1) {$\hstate$};
    \node[state]	at (0,2)	(q2) {$\channelstate'$};

    \draw[->] (q1) -- node[right] {$\rd(\xrd,\channelmessage)$} (h1);
    \draw[->] (h1) -- node[right] {$\rd(\xrd,\bot)$} (q2);
\end{tikzpicture}
\bigskip
\caption{receive operation $\channelstate \to[?\channelmessage]_\channelsystem \channelstate'$}
\end{subfigure}
\hfill
%
\begin{subfigure}[b]{0.45\linewidth}
\centering
\begin{tikzpicture}[
    state/.style={},
    xscale=1.5,yscale=-1.5
]
    \node[state]	at (0,0)	(q1) {$\channelstate_F$};
    \node[state]	at (0,1)	(h1) {$\hstate_F$};

    \draw[->] (q1) -- node[right] {$\wr(\xwr,\top)$} (h1);
    \draw[->] (h1) to [out=45,in=135,loop] node[below] {$\nop$} (h1);
\end{tikzpicture}
\bigskip
\caption{construction for each $\channelstate_F \in \channelstateset_F$}
\end{subfigure}

\caption{$\process^1$ of the B-TSO reduction from PCS}
\label{fig:b-reduction-1}
\end{figure}
\begin{figure}
\centering
\begin{tikzpicture}[
    state/.style={},
    xscale=1.3,yscale=-1.3
]
    \node[state] at (0,0) (q1) {$\state_1$};
    \node[state] at (0,1) (qm) {$\state_\channelmessage$};
    \node[state] at (0,2) (q3) {$\state_3$};
    \node[state] at (0,3) (q4) {$\state_4$};
    \node[state] at (0,4) (q5) {$\state_5$};
    \node[state] at (0,5) (q6) {$\state_6$};
    \node[state] at (0,6) (q7) {$\state_7$};
    \node[state] at (0,7) (q8) {$\state_8$};
    \node[state] at (0,8) (q9) {$\state_9$};
    \node[state] at (0,9) (q10) {$\state_{10}$};
    \node[state] at (0,10) (q11) {$\state_{11}$};
    \node[state] at (0,11) (q12) {$\state_{12}$};
    \node[state] at (0,12) (q13) {$\state_{13}$};
    \node[state] at (0,13) (q14) {$\state_{14}$};
    \node[state] at (0,14) (q1') {$\state_{1}$};

    \draw[->] (q1) -- node[right] {$\rd(\xwr,\channelmessage)$} (qm);
    \draw[->] (qm) -- node[right] {$\wr(\xrd,\channelmessage)$} (q3);
    \draw[->] (q3) -- node[right] {$\mf$} (q4);
    \draw[->] (q4) -- node[right] {$\wr(\xwr,\bot)$} (q5);
    \draw[->] (q5) -- node[right] {$\nop$} (q6);
    \draw[->] (q6) -- node[right] {$\mf$} (q7);
    \draw[->] (q7) -- node[right] {$\rd(\yvar,0)$} (q8);
    \draw[->] (q8) -- node[right] {$\rd(\yvar,1)$} (q9);
    \draw[->] (q9) -- node[right] {$\wr(\yvar,0)$} (q10);
    \draw[->] (q10) -- node[right] {$\mf$} (q11);
    \draw[->] (q11) -- node[right] {$\rd(\xwr,\bot)$} (q12);
    \draw[->] (q12) -- node[right] {$\wr(\xrd,\bot)$} (q13);
    \draw[->] (q13) -- node[right] {$\nop$} (q14);
    \draw[->] (q14) -- node[right] {$\mf$} (q1');

    \node at (-1.5,0) {$\nop$};
    \node[state] at (-3,0)(h1) {$\hstate_1$};
    \node[state] at (-5,0)(f1) {$\state_F$};
    \draw[->] (q1) to [bend left=15] (h1);
    \draw[->] (h1) -- node[above] {$\nop$} (f1);
    \draw[->] (f1) to [out=225,in=135,loop] node[left] {$\nop$} (f1);
    \draw[->] (h1) to [bend left=15] (q1);

    \node at (-1.5,2) {$\nop$};
    \node[state] at (-3,2)(h2) {$\hstate_2$};
    \node[state] at (-5,2)(f2) {$\state_F$};
    \draw[->] (q3) to [bend left=15] (h2);
    \draw[->] (h2) -- node[above] {$\nop$} (f2);
    \draw[->] (f2) to [out=225,in=135,loop] node[left] {$\nop$} (f2);
    \draw[->] (h2) to [bend left=15] (q3);


    \node at (-1.5,7) {$\nop$};
    \node[state] at (-3,7)(h4) {$\hstate_3$};
    \node[state] at (-5,7)(f4) {$\state_F$};
    \draw[->] (q8) to [bend left=15] (h4);
    \draw[->] (h4) -- node[above] {$\nop$} (f4);
    \draw[->] (f4) to [out=225,in=135,loop] node[left] {$\nop$} (f4);
    \draw[->] (h4) to [bend left=15] (q8);

    \node at (-1.5,9) {$\nop$};
    \node[state] at (-3,9)(h5) {$\hstate_4$};
    \node[state] at (-5,9)(f5) {$\state_F$};
    \draw[->] (q10) to [bend left=15] (h5);
    \draw[->] (h5) -- node[above] {$\nop$} (f5);
    \draw[->] (f5) to [out=225,in=135,loop] node[left] {$\nop$} (f5);
    \draw[->] (h5) to [bend left=15] (q10);

    \node at (-1.5,13) {$\nop$};
    \node[state] at (-3,13)(h5) {$\hstate_5$};
    \node[state] at (-5,13)(f5) {$\state_F$};
    \draw[->] (q14) to [bend left=15] (h5);
    \draw[->] (h5) -- node[above] {$\nop$} (f5);
    \draw[->] (f5) to [out=225,in=135,loop] node[left] {$\nop$} (f5);
    \draw[->] (h5) to [bend left=15] (q14);

    \node[state] at (3,0) (qf) {$\state_L$};
    \draw[->] (q1) -- node[above right] {$\rd(\xwr, \channelmessage)$} (qf);
    \draw[->] (qf) to [out=315,in=45,loop] node[right] {$\nop$} (qf);

    \node[state] at (3,1) (qf) {$\state_F$};
    \draw[->] (qm) -- node[above right] {$\nop$} (qf);
    \draw[->] (qf) to [out=315,in=45,loop] node[right] {$\nop$} (qf);

    \node[state] at (3,2) (qf) {$\state_L$};
    \draw[->] (q3) -- node[above right] {$\nop$} (qf);
    \draw[->] (qf) to [out=315,in=45,loop] node[right] {$\nop$} (qf);

    \node[state] at (3,3) (qf) {$\state_F$};
    \draw[->] (q4) -- node[above right] {$\nop$} (qf);
    \draw[->] (qf) to [out=315,in=45,loop] node[right] {$\nop$} (qf);

    \node[state] at (3,5) (qf) {$\state_L$};
    \draw[->] (q6) -- node[above right] {$\nop$} (qf);
    \draw[->] (qf) to [out=315,in=45,loop] node[right] {$\nop$} (qf);

    \node[state] at (3,6) (qf) {$\state_L$};
    \draw[->] (q7) -- node[above right] {$\rd(\yvar,1)$} (qf);
    \draw[->] (qf) to [out=315,in=45,loop] node[right] {$\nop$} (qf);

    \node[state] at (3,10) (qf) {$\state_L$};
    \draw[->] (q11) -- node[above right] {$\rd(\xwr,\channelmessage)$} (qf);
    \draw[->] (qf) to [out=315,in=45,loop] node[right] {$\nop$} (qf);

    \node[state] at (3,13) (qf) {$\state_L$};
    \draw[->] (q14) -- node[above right] {$\rd(\xwr,\channelmessage)$} (qf);
    \draw[->] (qf) to [out=315,in=45,loop] node[right] {$\nop$} (qf);

    \node[state] at (3,14) (qf) {$\state_F$};
    \draw[->] (q1') -- node[above right] {$\rd(\xwr,\top)$} (qf);
    \draw[->] (qf) to [out=315,in=45,loop] node[right] {$\nop$} (qf);

    \draw[thick,decoration={brace, mirror},decorate] (q1.south west) -- node[left]{for all $\channelmessage \in \channelmessageset$} (q3.north west);
\end{tikzpicture}
\bigskip
\caption{$\process^2$ of the B-TSO reduction from PCS}
\label{fig:b-reduction-2}
\end{figure}

This construction, which will be described in more detail in the following, also works out for the variants of B-TSO where player B is allowed to update either only before or only after her move.
In the former case, this is clear, since the construction never requires her to perform an update between her move and the next one of player A.
In the latter case, this is due to the fact that the construction gives her enough opportunities to update during auxiliary $\nop$ transitions, which would not be needed in the B-TSO game with full buffer control.

In particular, consider the transition $\state_1 \to[\rd(\xwr, \channelmessage)] \state_\channelmessage$ in $\process^2$ as an example.
If player B wants to take this transition while the $\tuple{\xwr, \channelmessage}$ message is still in the buffer, she can instead move to $\hstate_1$ and perform an update after her move.
Player A then immediately needs to respond with going back to $\state_1$, since $\state_F$ is winning for player B.
For the same reason, player A cannot transition to $\hstate_1$ herself, which prevents the potential of entering an infinite loop.

Furthermore, consider $\state_L$ in $\process^2$.
Note that it is immediately losing for player B, since the final state $\state_F$ is not reachable from there.

\begin{thm}
\label{thm:equivalence-btso-pcs}
    Consider the B-TSO game or one of its variants.
    The set of final states $\channelstateset_F$ of $\channelsystem$ is reachable from $\channelstate_0 \in \channelstateset$ if and only if player B wins the game $\game^\TSO(\program,\stateset_F^\program)$ starting from the configuration $\conf_0 := \tuple{ \tuple{\channelstate_0, \state_1}, \tuple{\varepsilon, \varepsilon}, \set{ \xwr \mapsto \bot, \xrd \mapsto \bot, \yvar \mapsto 0} }_B \in \confset_B$.
\end{thm}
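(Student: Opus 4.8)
The plan is to prove both directions of the equivalence by induction on the length of the channel run, closely following the structure of the proof of \autoref{thm:equivalence-atso-pcs}, but adapting the argument to the fact that here it is player B --- the player trying to reach a final state --- who controls the buffer updates. Throughout I would rely on two structural facts about the construction in \autoref{fig:b-reduction-1} and \autoref{fig:b-reduction-2}: the state $\state_F$ is the unique final state (hence winning for B), while every state labelled $\state_L$ is a self-looping sink from which $\state_F$ is unreachable (hence losing for B). As in the A-TSO case, each simulated channel operation takes an even number of moves, so after each completed operation it is again B's turn and she keeps the initiative to select the next operation. I would also note once at the outset why the argument is uniform over the three update variants: the auxiliary states $\hstate_1, \dots, \hstate_5$ of $\process^2$ let B postpone a required update to just after her move --- with A forced to step back onto the main chain, since otherwise B would reach $\state_F$ --- so the \emph{update before}, \emph{update after}, and \emph{update always} games all admit the same simulation.

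For the direction from $\channelsystem$-reachability to a winning strategy, I would fix a run $\tuple{\channelstate_0, \word_0} \to[\channeloperation_1] \dots \to[\channeloperation_n] \tuple{\channelstate_n, \word_n}$ with $\channelstate_n \in \channelstateset_F$ and, for each $k$, define a target game configuration $\conf_k \in \confset_B$ with global state $\tuple{\channelstate_k, \state_1}$, first buffer encoding $\word_k$ interleaved with the auxiliary $\tuple{\yvar,1}$-messages, and memory $\set{\xwr \mapsto \bot, \xrd \mapsto \bot, \yvar \mapsto 0}$. I would then describe B's strategy --- she initiates each operation in $\process^1$, and for a receive additionally drives the long rotation cycle $\state_1 \to \dots \to \state_1$ of $\process^2$, flushing exactly one channel message and one $\tuple{\yvar,1}$-message using her update control --- and prove by induction over the three operation types that B can force the play from $\conf_k$ to $\conf_{k+1}$, with the guards into $\state_L$ and the escapes into $\state_F$ punishing any deviation by A. Reaching $\channelstate_n \in \channelstateset_F$, B writes $\top$ to $\xwr$, updates it, and reads it in $\process^2$ to enter $\state_F$ and win.

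For the converse, I would assume B wins, fix an adversarial strategy for A that avoids $\state_F$ and otherwise stalls, and read off the sequence $\channelstate_0, \dots, \channelstate_n$ of channel states that $\process^1$ visits (finite because B wins). I would show by induction that the induced play passes through configurations of exactly the shape $\conf_k$ and that consecutive channel states are linked by a genuine transition $\tuple{\channelstate_k, \word_k} \to[\channeloperation_{k+1}] \tuple{\channelstate_{k+1}, \word_{k+1}}$ of $\channelsystem$. The decisive point is that attempting a receive on an empty buffer traps B, so winning forces the buffer to be nonempty exactly when a receive is simulated; a final argument shows $\channelstate_n \in \channelstateset_F$, since any skip, send, or receive-from-nonempty would cause $\process^1$ to visit a further channel state, contradicting maximality of $n$ unless B has already won through $\state_F$.

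The main obstacle I expect is the same invariant that dominates the A-TSO proof --- \emph{once and only once} reception --- but verified under the reversed update ownership. The delicate part is to show that the $\tuple{\yvar,1}$-counter together with the interleaved memory-fence and reset steps of $\process^2$ detect precisely whether B flushed too few or too many messages during a rotation: honest simulation must always stay available, yet every form of message loss or duplication must route the play into a losing $\state_L$ configuration (or else hand A a stalling move). Keeping the turn parity correct at each branch, so that it is B and not A who arrives at every decisive read, is where the construction's correctness genuinely has to be pinned down.
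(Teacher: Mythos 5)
Your proposal is correct and follows essentially the same route as the paper's proof in Appendix \ref{apx:btso}: the same per-operation induction in both directions with the same invariant configurations $\conf_k$, the same punishment mechanism via the $\state_L$ sinks and $\state_F$ escapes, the same treatment of the update variants through the auxiliary $\hstate$-detours, and the same \emph{once-and-only-once} accounting via the interleaved $\tuple{\yvar,1}$-messages and fences. No genuinely different decomposition or lemma is introduced, and the delicate points you flag (turn parity, empty-buffer receives trapping player B) are exactly the ones the paper's argument pins down.
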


\subsection{From $\channelsystem$-Reachability to a Winning Strategy in B-TSO}

Proceeding as in the case of A-TSO games, suppose $\channelstateset_F$ is reachable from $\channelstate_0$.
Given a run $\conf_0^\channelsystem, \dots, \conf_n^\channelsystem$ in the PCS, recall the definitions of $\conf_k^\channelsystem$ in \autoref{apx:atso}.
For all $k = 0, \dots, n$, we define a game configuration $\conf_k = \tuple{ \statemap_k, \buffermap_k, \memorymap_k }_B$:
\begin{enumerate}
    \item\ $\statemap_k := \tuple{ \channelstate_k, \state_1 }$
    \item\ $\buffermap_k := \tuple{ \tuple{\tuple{\yvar,1}, \tuple{\xwr,\word_k[1]}, \dots, \tuple{\yvar,1}, \tuple{\xwr,\word_k[\sizeof{\word_k}]}, \tuple{\yvar,1}}, \varepsilon }$
    \item\ $\memorymap_k := \set{ \xwr \mapsto \bot, \xrd \mapsto \bot, \yvar \mapsto 0 }$
\end{enumerate}

We describe a strategy for player B that forces the play into visiting all configurations $\conf_1, \dots, \conf_n$.
Since $\conf_n \in \confset_F$, this means that the strategy is winning.
To achieve this, we show by induction over k that starting from $\conf_0$, Player B can force a play that visits either $\conf_k$ or $\confset_F$.
For $k = 0$, this is true since $\conf_0$ is the initial configuration.
Let the induction hypothesis be that the claim holds true for some arbitrary but fixed $0 \leq k < n$.
For the induction step, we can assume that the game is in configuration $\conf_k$.

In the following, we assume that player A avoids immediate defeats.
This does not lose generality, since otherwise the game would be in $\confset_F$ and the induction would still hold.

If $\channeloperation_{k+1} = \nop$, then player B first takes the transition $\channelstate_k \to[\nop] \hstate$ in $\process^1$.
Since all other outgoing transitions are disabled, player A has to follow up with the transition $\hstate \to[\nop] \channelstate_{k+1}$.
As there is no change in the states of $\process^2$, the buffers, or the memory, we can conclude that the current configuration of the game is $\conf_{k+1}$.

Next, consider the case where $\channeloperation_{k+1} = !\channelmessage$, which is similar to the previous one.
Player B takes $\channelstate_k \to[\wr(\xwr,\channelmessage)] \hstate$ and the only option for player A is $\hstate \to[\wr(\yvar,1)] \channelstate_{k+1}$.
The buffer of $\process^1$ has now two additional pending write operations, which correspond to the channel message $\channelmessage$, i.e. the last letter of $\word_{k+1}$.
Thus, the game is in configuration $\conf_{k+1}$.

The last case is where $\channeloperation_{k+1} = ?\channelmessage$, which is more complicated.
Player B starts by updating the message $\tuple{\xwr, \channelmessage}$ and taking the transition $\state_1 \to[\rd(\xwr, \channelmessage)] \state_\channelmessage$ in $\process^2$ (recall the remark about the B-TSO variants above -- she might need to take a detour over $\hstate_1$).
Player A has to respond with $\state_\channelmessage \to[\wr(\xrd, \channelmessage)] \state_3$, since otherwise player B could move to $\state_F$ and win.
Now, player B is in a similar situation, where she needs to leave $\state_3$, or else player A can move to $\state_L$ which prevents the game from ever reaching $\state_F$.
So, player B empties the buffer of $\process^2$ (possibly using $\hstate_2$) and takes the transition $\state_3 \to[\mf] \state_4$.
Player A has to respond with $\state_4 \to[\wr(\xwr,\bot)] \state_5$.
At this point, player B cannot further proceed in $\process^2$, since player A could respond with $\state_6 \to[\nop] \state_L$ and win.
Thus, she has to continue in $\process^1$ instead, where the only enabled transition is $\channelstate_k \to[\rd(\xrd, \channelmessage)] \hstate$.
Now, $\process^1$ is disabled and the play continues in $\process^2$.
Eventually, it will go through all states $\state_6, \dots, \state_{14}$.
Note that player B has to update the messages $\tuple{\xwr,\bot}$, $\tuple{\yvar,1}$, $\tuple{\yvar,0}$ and $\tuple{\xrd,\bot}$ along the way, and has enough opportunities to do so.
Finally, player B takes the transition $\state_{14} \to[\mf] \state_1$ in $\process^2$ and player A is forced to respond with $\hstate \to[\rd(\xrd, \bot)] \channelstate_{k+1}$ in $\process^1$.
Note that the last transition is enabled since the $\tuple{\xrd,\bot}$-message has certainly reached the memory.

To summarise the changes:
$\process^1$ transitioned from $\channelstate_k$ to $\channelstate_{k+1}$, while $\process^2$ is back in state $\state_1$.
The buffer of $\process^1$ has updated its two oldest messages, which correspond to the last letter in $\word_k$, and the buffer of $\process^2$ has been emptied.
All variables have their default values.

We conclude that the game is in configuration $\conf_{k+1}$, which completes the proof by induction.
In summary, starting from $\conf_0$, the game either reaches the configuration $\conf_n \in \confset_F$, or another configuration in $\confset_F$.
In the latter case, player B already won, but in the former case, she takes the transition $\channelstate_n \to[\wr(\xwr, \top)] \hstate_F$, updates this message, and finally reaches $\state_1 \to[\rd(\xwr, \top)] \state_F \in \stateset_F^\program$ in her next turn.

\subsection{From a Winning Strategy in B-TSO to $\channelsystem$-Reachability}

For the other direction, suppose that player B has a winning strategy.
Consider a strategy of player A that avoids reaching $\state_F \in \stateset^2 \cap \stateset_F^\program$ whenever possible.
Furthermore, to support our argumentation we can assume arbitrary behaviour of player A, since the strategy of player B must be winning in any case.

Let $\channelstate_0, \dots, \channelstate_n$ be the sequence of channel states that are visited by $\process^1$ during the run induced by these two strategies.
Since player B uses a winning strategy, this sequence is finite.
We will show by induction that for each $k = 0, \dots, n$, the run contains the game configuration $\conf_k = \tuple{ \statemap_k, \buffermap_k, \memorymap_k }_B \in \confset_B$, where:
\begin{enumerate}
    \item\ $\statemap_k := \tuple{ \channelstate_k, \state_1 }$
    \item\ $\buffermap_k := \tuple{ \tuple{\tuple{\yvar,1}, \tuple{\xwr,\word_k[1]}, \dots, \tuple{\yvar,1}, \tuple{\xwr,\word_k[\sizeof{\word_k}]}}, \varepsilon }$, where $\word_k \in \channelmessageset\kstar$
    \item\ $\memorymap_k := \set{ \xwr \mapsto \bot, \xrd \mapsto \bot, \yvar \mapsto 0 }$
    \item\ If $k > 0$, there is a label $\channeloperation_k$ such that $\tuple{ \channelstate_{k-1}, \word_{k-1} } \to[\channeloperation_k] \tuple{ \channelstate_k, \word_k }$.
\end{enumerate}
The induction base case clearly holds true, since $\conf_0$ is the initial configuration of the game.
So, suppose that the claim holds for some arbitrary but fixed $0 \leq k < n$.
By the induction hypothesis, we can assume that the game is in configuration $\conf_k$.
Consider the different moves player B can make.

First, she may take a transition $\channelstate_k \to[\nop] \hstate$.
She surely does not update any buffer messages during that move, since this would enable player A to take $\state_1 \to[\rd(\xwr, \channelmessage)] \state_L$ and win.
So, we assume that player A moves along $\hstate \to[\nop] \channelstate'$.
By the construction of $\process^1$, the path $\channelstate_k \to[\nop] \hstate \to[\nop] \channelstate'$ implies the existence of a transition $\channelstate_k \to[\nop] \channelstate'$.
We define $\channelstate_{k+1} := \channelstate'$ and $\word_{k+1} := \word_k$ and conclude that the game is now in configuration $\conf_{k+1}$.

Another option for player A is to take the transition $\channelstate_k \to[\wr(\xwr, \channelmessage)] \hstate$.
The argumentation continues exactly as in the previous paragraph, the only difference is that $\channelstate_k \to[!\channelmessage] \channelstate'$ and $\word_{k+1} := \channelmessage \bullet \word_k$.

Since all other transitions in $\process^1$ are disabled, the last possibility for player B to move is in $\process^2$.
She first has to update a $\tuple{\xwr, \channelmessage}$-message to the memory (potentially using the loop through $\hstate_1$), and then she is able to perform $\state_1 \to[\rd(\xwr, \channelmessage)] \state_\channelmessage$.
Assume that player A responds with $\state_\channelmessage \to[\wr(\xrd, \channelmessage)] \state_3$.
Since staying in $\state_3$ is losing for player B, she needs to update the $\tuple{\xrd, \channelmessage}$-message to the memory, which enables $\state_3 \to[\nop] \state_4$ in $\process^2$.
Player A is now forced to proceed from there and takes $\state_4 \to[\wr(\xwr, \bot)] \state_5$.

At this point, player B cannot continue in $\process^2$, since moving to $\state_6$ opens up the opportunity for player A to immediately force her win.
However, also in $\process^1$ she has limited possibilities.
In case she starts to simulate a skip or send operation, player A can react with $\hstate \to[\rd(\xrd, \channelmessage)] \state_L$ and win.
Since player B employs a winning strategy, we conclude that at least one transition $\channelstate_k \to[\rd(\xrd, \channelmessage)] \hstate$ is enabled, which player B takes.
Player A then has to proceed in $\process^2$ and the play continues there until reaching $\state_{14}$.

We note the following observations, based on the fact that player A cannot force a win.
First, in $\state_7$, the variable $\yvar$ apparently has value $0$.
This means that up to this point, no message from the buffer of $\process^1$ has been updated (other than the single $\tuple{\xwr, \channelmessage}$ from before).
Second, between $\state_8$ and $\state_9$, the value of $\yvar$ is $1$.
We conclude that at least one $\tuple{\yvar, 1}$-message has been updated.
Third, in $\state_{11}$, player A reads $\bot$ from $\xwr$.
This shows that since the transition $\state_6 \to[\mf] \state_7$, no further $\tuple{\xwr, \channelmessage'}$-message has been updated.
In summary, we conclude that during the whole process, exactly the two messages $\tuple{\xwr, \channelmessage}$ and $\tuple{\yvar, 1}$ have been updated to the memory.

We continue with the execution from $\state_{14}$.
Player B empties the buffer of $\process^2$ and takes either $\hstate \to[\rd(\xrd, \bot)] \channelstate'$ in $\process^1$ or $\state_{14} \to[\mf] \state_1$ in $\process^2$.
We assume that player A takes the other one, respectively.

We define $\channelstate_{k+1} := \channelstate'$ and $\word_{k+1}$ by $\word_{k+1} \bullet \channelmessage := \word_k$.
The path $\channelstate_k \to[\rd(\xrd, \channelmessage)] \hstate \to[\rd(\xrd, \bot)] \channelstate_{k+1}$ implies that $\tuple{\channelstate_k, \word_k} \to[?\channelmessage] \tuple{\channelstate_{k+1}, \word_{k+1}}$.
Furthermore, the observations above imply that the game is in configuration $\conf_{k+1}$ as defined in the beginning. This concludes the proof by induction.

Now, we investigate how the run can continue after $\conf_n$.
Player B cannot start to simulate a skip or send operation, since we can assume that player A will move to $\channelstate'$, which contradicts the initial assumption that $\channelstate_n$ is the last channel state visited in this run.
The same holds true for a receive operation:
Following the argumentation from above, we have already seen that player A can either force a win, or lead the game into completing the simulation of the receive operation.
The only possibility left is that $\channelstate_n$ is a final channel state and player B takes the transition $\channelstate_n \to[\wr(\xwr, \top)] \hstate_F$.

In summary, we constructed a path through $\channelsystem$ from $\tuple{\channelstate_0, \varepsilon}$ to $\tuple{\channelstate_n, \word_n}$, where $\channelstate_n \in \channelstateset_F$.
This shows that the set of final states of $\channelsystem$ is reachable.

\subsection{Undecidability}

\begin{thm}
\label{thm:btso}
    The safety problem for the B-TSO game is undecidable.
\end{thm}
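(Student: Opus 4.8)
The plan is to mirror the proof of \autoref{thm:atso} and obtain the result as an immediate consequence of the equivalence established in \autoref{thm:equivalence-btso-pcs} together with the undecidability of PCS state reachability. Since all the intricate combinatorial work has already been carried out, the proof of \autoref{thm:btso} itself should be a short corollary-style argument.

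First I would observe that the reduction is \emph{effective}: given an arbitrary instance of the PCS state reachability problem, consisting of a perfect channel system $\channelsystem = \tuple{\channelstateset, \channelmessageset, \transition}$, an initial configuration and a set of final states $\channelstateset_F \subseteq \channelstateset$, the program $\program = \tuple{\process^1, \process^2}$ of \autoref{fig:b-reduction-1} and \autoref{fig:b-reduction-2} and the induced game $\game^\TSO(\program, \stateset_F^\program)$ can be constructed in finite time; the number of auxiliary states and transitions added is linear in the size of $\channelsystem$. This makes the construction a genuine many-one reduction between decision problems rather than a mere semantic correspondence.

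Next I would invoke \autoref{thm:equivalence-btso-pcs}, which states that $\channelstateset_F$ is reachable from $\channelstate_0$ in $\channelsystem$ if and only if player B wins $\game^\TSO(\program, \stateset_F^\program)$ from the distinguished initial configuration $\conf_0$. The only remaining subtlety is that the safety problem is phrased in terms of player A winning, whereas the equivalence speaks of player B. Here I would appeal to \autoref{lem:positional}: in a safety game every configuration is winning for exactly one of the two players. Consequently, $\conf_0$ is winning for player A if and only if it is not winning for player B, i.e. if and only if $\channelstateset_F$ is \emph{not} reachable from $\channelstate_0$. Thus any decision procedure for the safety problem would yield a decision procedure for the complement of PCS state reachability, and hence (decidability being closed under complement) for PCS state reachability itself.

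Since the state reachability problem for perfect channel systems is undecidable \cite{DBLP:journals/jacm/BrandZ83}, it follows that the safety problem for the B-TSO game is undecidable as well. I do not anticipate any real obstacle in this final step, because the hard part — designing $\process^1$ and $\process^2$ so that message loss and duplication are ruled out, and verifying both simulation directions — has already been discharged in the proof of \autoref{thm:equivalence-btso-pcs}. The single point demanding a line of care is the determinacy argument that flips ``player B wins'' into ``player A wins'', which is precisely where \autoref{lem:positional} is applied.
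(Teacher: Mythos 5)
Your proposal is correct and follows essentially the same route as the paper, which proves \autoref{thm:btso} as an immediate consequence of \autoref{thm:equivalence-btso-pcs} and the undecidability of PCS state reachability \cite{DBLP:journals/jacm/BrandZ83}. The extra details you spell out — effectiveness of the reduction and the determinacy argument via \autoref{lem:positional} to pass from ``player B wins'' to the safety problem's ``player A wins'' — are left implicit in the paper but are exactly the right justifications.
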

\begin{proof}
    This follows directly from \autoref{thm:equivalence-btso-pcs} and the undecidability of the state reachability problem for perfect channel systems \cite{DBLP:journals/jacm/BrandZ83}.
\end{proof}

\section{Undecidability of the AB-TSO game}
\label{apx:abtso}

In this final section we prove undecidability of the AB-TSO game, which is the TSO game where the players share control over buffer updates in the way that both of them are allowed to update exactly after their own move.
The construction will be very similar to the one for the B-TSO game.
In particular, $\process^1$ is exactly the same and is sketched in \autoref{fig:b-reduction-1}.
The construction for $\process^2$ on the other hand is different and shown in \autoref{fig:ab-reduction-2}.
Note that this process is deadlock-free by design.
As previously, the set of final states is defined as $\stateset_F^\program := \set{\state_F}$.
Note that $\state_L$ in $\process^2$ is immediately losing for player B, since the final state $\state_F$ is not reachable from there.

\begin{figure}
\centering
\begin{tikzpicture}[
    state/.style={},
    xscale=1.5,yscale=-1.5
]
    \node[state] at (0,0) (q1) {$\state_1$};
    \node[state] at (0,1) (q2) {$\state_2$};
    \node[state] at (0,2) (qm) {$\state_\channelmessage$};
    \node[state] at (0,3) (q4) {$\state_4$};
    \node[state] at (0,4) (q5) {$\state_5$};
    \node[state] at (0,5) (q6) {$\state_6$};
    \node[state] at (0,6) (q7) {$\state_7$};
    \node[state] at (0,7) (q8) {$\state_8$};
    \node[state] at (0,8) (q9) {$\state_9$};
    \node[state] at (0,9) (q10) {$\state_{10}$};
    \node[state] at (0,10) (q11) {$\state_{11}$};
    \node[state] at (0,11) (q12) {$\state_{12}$};
    \node[state] at (0,12) (q1') {$\state_{1}$};

    \draw[->] (q1) -- node[right] {$\nop$} (q2);
    \draw[->] (q2) -- node[right] {$\rd(\xwr,\channelmessage)$} (qm);
    \draw[->] (qm) -- node[right] {$\wr(\xrd,\channelmessage)$} (q4);
    \draw[->] (q4) -- node[right] {$\wr(\xwr,\bot)$} (q5);
    \draw[->] (q5) -- node[right] {$\nop$} (q6);
    \draw[->] (q6) -- node[right] {$\mf$} (q7);
    \draw[->] (q7) -- node[right] {$\nop$} (q8);
    \draw[->] (q8) -- node[right] {$\wr(\yvar,0)$} (q9);
    \draw[->] (q9) -- node[right] {$\nop$} (q10);
    \draw[->] (q10) -- node[right] {$\mf$} (q11);
    \draw[->] (q11) -- node[right] {$\wr(\xrd,\bot)$} (q12);
    \draw[->] (q12) -- node[right] {$\nop$} (q1');


    \node[state] at (3,1) (qf) {$\state_F$};
    \draw[->] (q2) -- node[above right] {$\nop$} (qf);
    \draw[->] (qf) to [out=315,in=45,loop] node[right] {$\nop$} (qf);

    \node[state] at (-3,2) (qf) {$\state_L$};
    \draw[->] (qm) -- node[above] {$\nop$} (qf);
    \draw[->] (qf) to [out=225,in=135,loop] node[left] {$\nop$} (qf);

    \node[state] at (3,3) (qf) {$\state_F$};
    \draw[->] (q4) -- node[above right] {$\nop$} (qf);
    \draw[->] (qf) to [out=315,in=45,loop] node[right] {$\nop$} (qf);

    \node[state] at (-3,5) (qf) {$\state_L$};
    \draw[->] (q6) -- node[above] {$\nop$} (qf);
    \draw[->] (qf) to [out=225,in=135,loop] node[left] {$\nop$} (qf);

    \node[state] at (-3,9) (qf) {$\state_L$};
    \draw[->] (q10) -- node[above] {$\nop$} (qf);
    \draw[->] (qf) to [out=225,in=135,loop] node[left] {$\nop$} (qf);

    \node[state] at (3,0) (qf) {$\state_F$};
    \draw[->] (q1) -- node[above right] {$\rd(\xwr,\channelmessage)$} (qf);
    \draw[->] (qf) to [out=315,in=45,loop] node[right] {$\nop$} (qf);

    \node[state] at (-3,0) (qf) {$\state_L$};
    \draw[->] (q1) -- node[above] {$\rd(\xwr,\channelmessage)$} (qf);
    \draw[->] (qf) to [out=225,in=135,loop] node[left] {$\nop$} (qf);

    \node[state] at (-3,1) (qf) {$\state_L$};
    \draw[->] (q2) -- node[below] {$\rd(\xwr,\bot)$}
                      node[above] {$\rd(\yvar,1)$} (qf);
    \draw[->] (qf) to [out=225,in=135,loop] node[left] {$\nop$} (qf);

    \node[state] at (3,2) (qf) {$\state_F$};
    \draw[->] (qm) -- node[above right] {$\rd(\yvar,1)$} (qf);
    \draw[->] (qf) to [out=315,in=45,loop] node[right] {$\nop$} (qf);

    \node[state] at (-3,3) (qf) {$\state_L$};
    \draw[->] (q4) -- node[above] {$\rd(\yvar,1)$} (qf);
    \draw[->] (qf) to [out=225,in=135,loop] node[left] {$\nop$} (qf);

    \node[state] at (3,4) (qf) {$\state_F$};
    \draw[->] (q5) -- node[above right] {$\rd(\yvar,1)$} (qf);
    \draw[->] (qf) to [out=315,in=45,loop] node[right] {$\nop$} (qf);

    \node[state] at (-3,4) (qf) {$\state_L$};
    \draw[->] (q5) -- node[above] {$\rd(\yvar,1)$} (qf);
    \draw[->] (qf) to [out=225,in=135,loop] node[left] {$\nop$} (qf);

    \node[state] at (3,5) (qf) {$\state_F$};
    \draw[->] (q6) -- node[above right] {$\rd(\yvar,1)$} (qf);
    \draw[->] (qf) to [out=315,in=45,loop] node[right] {$\nop$} (qf);

    \node[state] at (-3,6) (qf) {$\state_L$};
    \draw[->] (q7) -- node[above] {$\rd(\yvar,1)$} (qf);
    \draw[->] (qf) to [out=225,in=135,loop] node[left] {$\nop$} (qf);

    \node[state] at (3,7) (qf) {$\state_F$};
    \draw[->] (q8) -- node[below right] {$\rd(\yvar,0)$}
                      node[above right] {$\rd(\xwr,\channelmessage)$} (qf);
    \draw[->] (qf) to [out=315,in=45,loop] node[right] {$\nop$} (qf);

    \node[state] at (-3,8) (qf) {$\state_L$};
    \draw[->] (q9) -- node[above] {$\rd(\xwr,\channelmessage)$} (qf);
    \draw[->] (qf) to [out=225,in=135,loop] node[left] {$\nop$} (qf);

    \node[state] at (3,9) (qf) {$\state_F$};
    \draw[->] (q10) -- node[above right] {$\rd(\xwr,\channelmessage)$} (qf);
    \draw[->] (qf) to [out=315,in=45,loop] node[right] {$\nop$} (qf);

    \node[state] at (-3,10) (qf) {$\state_L$};
    \draw[->] (q11) -- node[above] {$\rd(\xwr,\channelmessage)$} (qf);
    \draw[->] (qf) to [out=225,in=135,loop] node[left] {$\nop$} (qf);

    \node[state] at (3,11) (qf) {$\state_F$};
    \draw[->] (q12) -- node[above right] {$\rd(\xwr,\channelmessage)$} (qf);
    \draw[->] (qf) to [out=315,in=45,loop] node[right] {$\nop$} (qf);


    \node[state] at (3,12) (qf) {$\state_F$};
    \draw[->] (q1') -- node[above right] {$\rd(\xwr,\top)$} (qf);
    \draw[->] (qf) to [out=315,in=45,loop] node[right] {$\nop$} (qf);

\end{tikzpicture}
\bigskip
\caption{$\process^2$ of the AB-TSO reduction from PCS}
\label{fig:ab-reduction-2}
\end{figure}

\begin{thm}
\label{thm:equivalence-abtso-pcs}
    Consider the AB-TSO game.
    The set of final states $\channelstateset_F$ of $\channelsystem$ is reachable from $\channelstate_0 \in \channelstateset$ if and only if player B wins the game $\game^\TSO(\program,\stateset_F^\program)$ starting from the configuration $\conf_0 := \tuple{ \tuple{\channelstate_0, \state_1}, \tuple{\varepsilon, \varepsilon}, \set{ \xwr \mapsto \bot, \xrd \mapsto \bot, \yvar \mapsto 0} }_B \in \confset_B$.
\end{thm}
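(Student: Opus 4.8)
The plan is to prove the biconditional exactly along the lines of \autoref{thm:equivalence-btso-pcs}, exhibiting in each direction an explicit correspondence between runs of $\channelsystem$ and plays of the AB-TSO game. Since $\process^1$ is literally the same process used for B-TSO (\autoref{fig:b-reduction-1}) and only $\process^2$ changes to the one in \autoref{fig:ab-reduction-2}, the high-level bookkeeping transfers verbatim; the actual work is to re-verify that the new $\process^2$ still enforces that every channel message is received \emph{once and only once}, now under the regime where \emph{both} players may update their buffers after their own move. As in the B-TSO case, for a PCS run $\conf_0^\channelsystem \to[\channeloperation_1] \dots \to[\channeloperation_n] \conf_n^\channelsystem$ reaching $\channelstateset_F$ I would attach to each index $k$ a game configuration $\conf_k = \tuple{\statemap_k, \buffermap_k, \memorymap_k}_B$ whose $\process^1$-buffer encodes the channel word $\word_k$ as an alternating sequence of $\tuple{\yvar,1}$ and $\tuple{\xwr, \channelmessage}$ messages, with every shared variable at its default value.

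For the direction from reachability to a winning strategy, I would assume $\channelstateset_F$ is reachable and show by induction on $k$ that player B can force the play from $\conf_k$ either into $\confset_F$ or into $\conf_{k+1}$. The skip and send cases are immediate: player B moves $\channelstate_k \to \hstate \to \channelstate_{k+1}$ in $\process^1$ without updating, so $\statemap$, $\buffermap$ and $\memorymap$ evolve exactly as the matching $\channelsystem$-transition prescribes, and one checks that player A's only non-losing response is the forced continuation. The receive case requires tracing the rotation protocol through $\state_1, \state_2, \state_\channelmessage, \state_4, \ldots, \state_{12}$ of the new $\process^2$: player B rotates the oldest channel message from $\xwr$ to $\xrd$, resets the auxiliary variables, and consumes a single $\tuple{\yvar,1}$-message, updating only after her own moves and relying on the branching transitions into $\state_F$ (to punish any deviation by player A) and the inescapability of $\state_L$ (to punish herself if she over-updates). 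I would confirm that at the end $\process^1$ has advanced to $\channelstate_{k+1}$ and the game is again in a configuration of the form $\conf_{k+1}$.

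For the converse I would fix a winning strategy for player B, play it against a player-A strategy that avoids $\state_F$ but otherwise behaves adversarially, and read off the finite sequence $\channelstate_0, \dots, \channelstate_n$ of channel states visited by $\process^1$. By induction I would show the induced play passes through configurations $\conf_k$ of the prescribed shape and that each consecutive pair is linked by a genuine $\channelsystem$-transition $\tuple{\channelstate_{k-1}, \word_{k-1}} \to[\channeloperation_k] \tuple{\channelstate_k, \word_k}$. The crux is to deduce, from the unreachability of $\state_F$ under player A's strategy, that during each receive phase exactly the oldest $\tuple{\xwr, \channelmessage}$-message and its associated $\tuple{\yvar,1}$-message reach memory — no more, no less — using the interspersed guard transitions to certify both that no message was lost and that no stray message leaked to memory. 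Finally I would argue the run cannot continue past $\channelstate_n$ without contradicting maximality, forcing $\channelstate_n \in \channelstateset_F$, and conclude by the undecidability of PCS state reachability.

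The main obstacle I expect is precisely the receive case in both directions, and within it the attribution of updates to players. Unlike B-TSO, player A may now also update after her move, so the construction must stay sound even when player A injects a spurious update; the delicate point is to check, by a full case analysis of the branching transitions in \autoref{fig:ab-reduction-2}, that every premature or superfluous update by either player is detected and routed to $\state_F$ (good for B) or $\state_L$ (bad for B) in a manner that coincides exactly with the legitimate protocol. Verifying that this two-sided update capability grants neither player an unintended escape — so that the simulation neither loses nor duplicates a message undetected — is where the argument is most intricate.
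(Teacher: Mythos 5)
Your proposal follows essentially the same route as the paper's own proof: the same configurations $\conf_k$ encoding the channel word as alternating $\tuple{\yvar,1}$/$\tuple{\xwr,\channelmessage}$ buffer messages, the same induction in both directions with the skip/send cases immediate and the receive case traced through the rotation protocol of the new $\process^2$, and the same use of the $\state_F$- and $\state_L$-branches to attribute and police updates now that both players may update after their moves. You also correctly identify the crux the paper's argument spends most of its effort on, namely certifying that exactly one $\tuple{\xwr,\channelmessage}$- and one $\tuple{\yvar,1}$-message reach memory per simulated receive, so the plan is sound as a blueprint for the paper's proof.
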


\subsection{From $\channelsystem$-Reachability to a Winning Strategy in AB-TSO}

Proceeding as in the case of the B-TSO games, suppose $\channelstateset_F$ is reachable from $\channelstate_0$.
Given a run $\conf_0^\channelsystem, \dots, \conf_n^\channelsystem$ in the PCS, recall the definitions of $\conf_k^\channelsystem$ in \autoref{apx:atso} and of $\conf_k$ in \autoref{apx:btso}.

We describe a strategy for player B that forces the play into visiting all configurations $\conf_1, \dots, \conf_n$.
Since $\conf_n \in \confset_F$, this means that the strategy is winning.
To achieve this, we show by induction over k that starting from $\conf_0$, Player B can force a play that visits either $\conf_k$ or $\confset_F$.
For $k = 0$, this is true since $\conf_0$ is the initial configuration.
Let the induction hypothesis be that the claim holds true for some arbitrary but fixed $0 \leq k < n$.
For the induction step, we can assume that the game is in configuration $\conf_k$.

In the following, we assume that player A avoids immediate defeats.
This does not lose generality, since otherwise the game would be in $\confset_F$ and the induction would still hold.

If $\channeloperation_{k+1} = \nop$, then player B first takes the transition $\channelstate_k \to[\nop] \hstate$ in $\process^1$.
For player A, there are only two enabled transitions.
After $\state_1 \to[\nop] \state_2$, player B could move to $\state_F$ and win.
Thus, player A has to follow up with the transition $\hstate \to[\nop] \channelstate_{k+1}$ instead.
Suppose she also updates one or more messages after her move.
Then, player B could take $\state_1 \to[\rd(\xwr, \channelmessage')] \state_F$ for some message $\channelmessage'$ and win.
So, we continue assuming that player A does not update anything to the memory.
As there is no change in the states of $\process^2$, the buffers or the memory, we can conclude that the current configuration of the game is $\conf_{k+1}$.

Next, consider the case where $\channeloperation_{k+1} = !\channelmessage$, which is similar to the previous one.
Player B takes $\channelstate_k \to[\wr(\xwr,\channelmessage)] \hstate$ and the only option for player A is $\hstate \to[\wr(\yvar,1)] \channelstate_{k+1}$.
As above, we argue that she does not perform any buffer updates.
The buffer of $\process^1$ has now two additional pending write operations, which correspond to the channel message $\channelmessage$, i.e. the last letter of $\word_{k+1}$.
Thus, the game is in configuration $\conf_{k+1}$.

The last case is where $\channeloperation_{k+1} = ?\channelmessage$, which is more complicated.
Player B starts by taking the transition $\state_1 \to[\nop] \state_2$ in $\process^2$ and then updates the message $\tuple{\xwr, \channelmessage}$ to the memory.
Now, player A is forced to leave $\state_2$, since otherwise player B could move to $\state_F$ in her next turn.
The only enabled transition from there is $\state_2 \to[\rd(\xwr, \channelmessage)] \state_3$.
Again, we see that she cannot update any messages without losing in the very next turn.
Player B responds with $\state_3 \to[\wr(\xrd, \channelmessage)] \state_4$, which she immediately updates to the memory.
As before, player A must leave this state, the only option for this is to follow $\state_4 \to[\wr(\xwr, \bot)] \state_5$.
She does so without buffer updates.
Now, player B takes the transition $\channelstate_k \to[\rd(\xrd, \channelmessage)] \hstate$ in $\process^1$.
She also empties the buffer of $\process^2$.
Player A has to continue in $\process^2$, since all outgoing transitions in $\process^1$ are disabled.
The only option is to go to $\state_6$, as usual without performing any updates.
Next, player B can perform $\state_6 \to[\mf] \state_7$, since she emptied the buffer one turn earlier.
Player A continues to $\state_8$.
If she does not perform any updates, or updates more than one message, i.e. at least one $\tuple{\yvar, 1}$ and one $\tuple{\xwr, \channelmessage}$ buffer message (for some channel message $\channelmessage$), then player B can immediately win in her next turn.
So, we proceed assuming that player A updates exactly one message $\tuple{\yvar, 1}$.
Then, player B takes the transition $\state_8 \to[\wr(\yvar,0)] \state_9$ and immediately updates this message to the memory.
This leaves player A with the only possibility of moving to $\state_{10}$.
As usual, she cannot update any buffer message without losing.
Player B continues with $\state_{10} \to[\mf] \state_{11}$.
Player A has to perform $\state_{11} \to[\wr(\xrd,\bot)] \state_{12}$ and may update this message, but nothing else.
Next, player B moves to $\state_1$ and empties the buffer of $\process^2$.
Since player A cannot move to $\state_2$ without losing, she takes the transition $\hstate \to[\rd(\xrd, \bot)] \channelstate_{k+1}$.
This is enabled since the buffer of $\process^2$ was last emptied.

To summarise the changes:
$\process^1$ transitioned from $\channelstate_k$ to $\channelstate_{k+1}$, while $\process^2$ is back in state $\state_1$.
The buffer of $\process^1$ has updated its two oldest messages, which correspond to the last letter in $\word_k$, and the buffer of $\process^2$ has been emptied.
All variables have their default values.

We conclude that the game is in configuration $\conf_{k+1}$, which completes the proof by induction.
In summary, starting from $\conf_0$, the game either reaches the configuration $\conf_n \in \confset_F$, or another configuration in $\confset_F$.
In the latter case, player B already won, but in the former case, she takes the transition $\channelstate_n \to[\wr(\xwr, \top)] \hstate_F$, updates this message, and finally reaches $\state_1 \to[\rd(\xwr, \top)] \state_F \in \stateset_F^\program$ in her next turn.

\subsection{From a Winning Strategy in AB-TSO to $\channelsystem$-Reachability}

For the other direction, suppose that player B has a winning strategy.
Consider a strategy of player A that avoids reaching $\state_F \in \stateset^2 \cap \stateset_F^\program$ whenever possible.
Furthermore, to support our argumentation we can assume arbitrary behaviour of player A, since the strategy of player B must be winning in any case.

Let $\channelstate_0, \dots, \channelstate_n$ be the sequence of channel states that are visited by $\process^1$ during the run induced by these two strategies.
Since player B uses a winning strategy, this sequence is finite.
We will show by induction that for each $k = 0, \dots, n$, the run contains the game configuration $\conf_k = \tuple{ \statemap_k, \buffermap_k, \memorymap_k }_B \in \confset_B$, where:
\begin{enumerate}
    \item\ $\statemap_k := \tuple{ \channelstate_k, \state_1 }$
    \item\ $\buffermap_k := \tuple{ \tuple{\tuple{\yvar,1}, \tuple{\xwr,\word_k[1]}, \dots, \tuple{\yvar,1}, \tuple{\xwr,\word_k[\sizeof{\word_k}]}}, \varepsilon }$, where $\word_k \in \channelmessageset\kstar$
    \item\ $\memorymap_k := \set{ \xwr \mapsto \bot, \xrd \mapsto \bot, \yvar \mapsto 0 }$
    \item\ If $k > 0$, there is a label $\channeloperation_k$ such that $\tuple{ \channelstate_{k-1}, \word_{k-1} } \to[\channeloperation_k] \tuple{ \channelstate_k, \word_k }$.
\end{enumerate}
The induction base case clearly holds true, since $\conf_0$ is the initial configuration of the game.
So, suppose that the claim holds for some arbitrary but fixed $0 \leq k < n$.
By the induction hypothesis, we can assume that the game is in configuration $\conf_k$.
Consider the different moves player B can make.

First, she may take a transition $\channelstate_k \to[\nop] \hstate$.
She surely does not update any buffer messages during that move, since this would enable player A to take $\state_1 \to[\rd(\xwr, \channelmessage)] \state_L$ and win.
So, we assume that player A moves along $\hstate \to[\nop] \channelstate'$.
By the construction of $\process^1$, the path $\channelstate_k \to[\nop] \hstate \to[\nop] \channelstate'$ implies the existence of a transition $\channelstate_k \to[\nop] \channelstate'$.
We define $\channelstate_{k+1} := \channelstate'$ and $\word_{k+1} := \word_k$ and conclude that the game is now in configuration $\conf_{k+1}$.

Another option for player A is to take the transition $\channelstate_k \to[\wr(\xwr, \channelmessage)] \hstate$.
The argumentation continues exactly as in the previous paragraph, the only difference is that $\channelstate_k \to[!\channelmessage] \channelstate'$ and $\word_{k+1} := \channelmessage \bullet \word_k$.

Since all other transitions in $\process^1$ are disabled, the last possibility for player B to move is in $\process^2$.
She moves from $\state_1$ to $\state_2$.
We can assume that she updates exactly one message from the buffer of $\process^1$, since otherwise player A could win in her next turn.
Instead, player A has to follow $\state_2 \to[\rd(\xwr, \channelmessage)] \state_\channelmessage$.
In the following, we assume that player A performs no buffer updates unless stated otherwise.
Player B takes the transition $\state_\channelmessage \to[\wr(\xrd, \channelmessage)] \state_4$.
She might immediately update this message, but nothing else, following the same argumentation as previously.
Player A continues with $\state_4 \to[\wr(\xwr, \bot)] \state_5$, we can assume that she also empties the buffer of $\process^2$.
Moving to $\state_6$ is losing for player B, thus we can safely say that she performs $\channelstate_k \to[\rd(\xrd, \channelmessage)] \hstate$ instead, which is the only enabled transition in $\process^1$.
Player A then has to proceed in $\process^2$ and the play continues there until reaching $\state_{11}$.

We note the following observations, based on the fact that player A cannot force a win.
First, in $\state_7$, the variable $\yvar$ apparently has value $0$, otherwise player A would win.
This means that up to this point, no message from the buffer of $\process^1$ has been updated (other than the single $\tuple{\xwr, \channelmessage}$ from before).
Second, after moving to $\state_8$, player A performs such a buffer update to prevent player B from immediately winning.
Third, in $\state_{11}$, the value of $\xwr$ in the memory must be $\bot$, since otherwise player A could win with $\state_{11} \to[\rd(\xwr, \channelmessage)] \state_L$ for some channel message $\channelmessage$.
This shows that since the transition $\state_6 \to[\mf] \state_7$, no further $\tuple{\xwr, \channelmessage'}$-message has been updated.
In summary, we conclude that during the whole process, exactly the two messages $\tuple{\xwr, \channelmessage}$ and $\tuple{\yvar, 1}$ have been updated to the memory.

We continue with the execution from $\state_{11}$.
Player A performs $\state_{11} \to[\wr(\xrd, \bot)] \state_{12}$ and immediately updates this message to the memory.
Player B either takes either $\hstate \to[\rd(\xrd, \bot)] \channelstate'$ in $\process^1$ or $\state_{11} \to[\nop] \state_1$ in $\process^2$.
We assume that player A takes the other one, respectively.

We define $\channelstate_{k+1} := \channelstate'$ and $\word_{k+1}$ by $\word_{k+1} \bullet \channelmessage := \word_k$.
The path $\channelstate_k \to[\rd(\xrd, \channelmessage)] \hstate \to[\rd(\xrd, \bot)] \channelstate_{k+1}$ implies that $\tuple{\channelstate_k, \word_k} \to[?\channelmessage] \tuple{\channelstate_{k+1}, \word_{k+1}}$.
Furthermore, the observations above imply that the game is in configuration $\conf_{k+1}$ as defined in the beginning.

Now, we investigate how the run can continue after $\conf_n$.
Player B cannot start to simulate a skip or send operation, since we can assume that player A will move to $\channelstate'$, which contradicts the initial assumption that $\channelstate_n$ is the last channel state visited in this run.
The same holds true for a receive operation:
Following the argumentation from above, we have already seen that player A can either force a win, or lead the game into completing the simulation of the receive operation.
The only possibility left is that $\channelstate_n$ is a final channel state and player B takes the transition $\channelstate_n \to[\wr(\xwr, \top)] \hstate_F$.

In summary, we constructed a path through $\channelsystem$ from $\tuple{\channelstate_0, \varepsilon}$ to $\tuple{\channelstate_n, \word_n}$, where $\channelstate_n \in \channelstateset_F$.
This shows that the set of final states of $\channelsystem$ is reachable.

\subsection{Undecidability}

\begin{thm}
\label{thm:abtso}
    The safety problem for the AB-TSO game is undecidable.
\end{thm}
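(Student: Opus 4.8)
The plan is to derive this theorem as an immediate corollary of the equivalence established in \autoref{thm:equivalence-abtso-pcs}, mirroring exactly the arguments already used for \autoref{thm:atso} and \autoref{thm:btso}. That equivalence does all the heavy lifting: for the program $\program = \tuple{\process^1, \process^2}$ built from an arbitrary PCS $\channelsystem$ with final states $\channelstateset_F$, it shows that $\channelstateset_F$ is reachable from $\channelstate_0$ in $\channelsystem$ if and only if player B wins the AB-TSO game $\game^\TSO(\program, \stateset_F^\program)$ from the designated initial configuration $\conf_0$.

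First I would recall that the state reachability problem for perfect channel systems is undecidable \cite{DBLP:journals/jacm/BrandZ83}. Next I would observe that the construction underlying \autoref{thm:equivalence-abtso-pcs} is effective: from any instance $(\channelsystem, \channelstate_0, \channelstateset_F)$ of PCS reachability, the processes $\process^1$ (as in \autoref{fig:b-reduction-1}) and $\process^2$ (as in \autoref{fig:ab-reduction-2}), together with $\stateset_F^\program$ and $\conf_0$, are computable in finite time. Hence the map sending each PCS instance to the corresponding AB-TSO safety instance is a many-one reduction.

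Finally I would close the argument by connecting the winning condition to the formal safety problem. The safety problem asks whether $\conf_0$ is winning for player A, and by \autoref{lem:positional} every configuration is winning for exactly one of the two players; therefore deciding the safety problem is equivalent to deciding whether player B wins from $\conf_0$. Combined with the equivalence, a decision procedure for the AB-TSO safety problem would decide PCS state reachability, contradicting its undecidability.

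The main obstacle here is not in this final step, which is a one-line consequence, but was entirely concentrated in establishing the two directions of \autoref{thm:equivalence-abtso-pcs} — in particular, verifying that the gadget process $\process^2$ of \autoref{fig:ab-reduction-2} forces each simulated channel operation to consume and produce exactly the intended buffer messages, so that both message loss and duplication are ruled out under the regime in which both players may update only after their own move. With that equivalence in hand, the theorem follows directly.
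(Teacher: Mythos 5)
Your proposal is correct and follows exactly the paper's own route: \autoref{thm:abtso} is derived as an immediate consequence of the equivalence in \autoref{thm:equivalence-abtso-pcs} together with the undecidability of PCS state reachability \cite{DBLP:journals/jacm/BrandZ83}. The extra remarks you add (effectiveness of the construction and the determinacy argument via \autoref{lem:positional}) are sound elaborations of what the paper leaves implicit, not a different approach.
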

\begin{proof}
    This follows directly from \autoref{thm:equivalence-abtso-pcs} and the undecidability of the state reachability problem for perfect channel systems \cite{DBLP:journals/jacm/BrandZ83}.
\end{proof}

\end{document}